\def\BibTeX{{\rm B\kern-.05em{\sc i\kern-.025em b}\kern-.08em
    T\kern-.1667em\lower.7ex\hbox{E}\kern-.125emX}}
\newtheorem{theorem}{Theorem}[section]
\newtheorem{corollary}[theorem]{Corollary}
\newtheorem{lemma}[theorem]{Lemma}
\newtheorem{remark}[theorem]{Remark}
\newtheorem{definition}[theorem]{Definition}
\newtheorem{prop}[theorem]{Proposition}
\newtheorem{example}[theorem]{Example}
\newtheorem{conj}[theorem]{Conjecture}
\newcommand{\mb}{\mathbb}
\newcommand{\mc}{\mathcal}
\newcommand{\wt}{\widetilde}
\newcommand{\pw}[1]{{\color{red}[PW: #1]}}
\newcommand{\added}[1]{#1}
\begin{document}




\title{Additivity of quantum capacities in simple non-degradable quantum channels}

\author{Graeme Smith and Peixue Wu
\thanks{GS (graeme.smith@uwaterloo.ca) and PW (p33wu@uwaterloo.ca) are with the Department of Applied Mathematics, and Institute for Quantum Computing at
University of Waterloo, Ontario, Canada. This work was supported by the Canada First Research Excellence Fund (CFREF).
}
}

\maketitle


\begin{abstract}
Quantum channel capacities give the fundamental performance limits for information flow over a communication channel. However, the prevalence of superadditivity is a major obstacle to understanding capacities, both quantitatively and conceptually. In contrast, examples exhibiting additivity, though relatively rare, offer crucial insights into the origins of nonadditivity and form the basis of our strongest upper bounds on capacity. Degradable channels, whose coherent information is provably additive, stand out as among the few classes of channels for which the quantum capacity is exactly computable.

In this paper, we introduce two families of non-degradable channels whose coherent information remains additive, making their quantum capacities tractable. First, we demonstrate that channels capable of “outperforming” their environment, under conditions weaker than degradability, can exhibit either strong or weak additivity of coherent information. Second, we explore a complementary construction that modifies a channel to preserve coherent information additivity while destroying the “outperforming” property. We analyze how structural constraints guarantee strong and weak additivity and investigate how relaxing these constraints leads to the failure of strong additivity, with weak additivity potentially persisting.
\end{abstract}
\tableofcontents

%
\IEEEpeerreviewmaketitle

\ifCLASSOPTIONcaptionsoff
  \newpage
\fi

\section{Introduction}
\IEEEPARstart{A}{}central problem in quantum information theory is to determine the capacities of various quantum channels. If Alice can encode $nR$ units of information using $n$ copies of a quantum channel $\mc N$ with vanishing error as $n\to \infty$, then $R$ is said to be an achievable rate if we send information through $\mc N$. The maximum achievable rate for quantum information(units of qubits), private information(units of bits hidden from the environment) and classical information(units of bits) are defined to be the channel's quantum, private and classical capacity, denoted by $\mc Q,\mc P, \mc C$ respectively. Notably, it was shown in \cite{lloyd1997capacity, shor2002, devetak2005private, barnum1998information}
(LSD Theorem) that the quantum capacity of a quantum channel $\mc N$ is characterized by its coherent information
\begin{equation}\label{multiletter}
    \mc Q(\mc N) = \lim_{n \to \infty} \frac{1}{n}\mc Q^{(1)}(\mc N^{\otimes n}),
\end{equation}
where $\mc Q^{(1)}(\mc N):= \max_{\rho_A} I_c(\rho_A, \mc N)$ is the maximal coherent information. Similarly, this regularization procedure is also required for private capacity \cite{devetak2005private} and classical capacity \cite{schumacher1997sending, holevo1998capacity} (also known as HSW Theorem). By optimizing over product states, one always has super-additivity for any two quantum channels:
\begin{equation}\label{superadditive}
    \mc Q^{(1)}(\mc N \otimes \mc M) \ge \mc Q^{(1)}(\mc N) + \mc Q^{(1)}(\mc M).
\end{equation}
If coherent information were additive, we could remove the regularization process and compute the quantum capacity of a channel as easily as the capacity of a classical noisy channel. However, coherent information is generally not additive. The first explicit demonstration of this fact was provided by DiVincenzo and Shor \cite{divincenzo1998quantum}, who showed that for a certain depolarizing channel $\mathcal{N}$, the single-letter coherent information is strictly less than the coherent information of multiple channel uses, i.e., $\mathcal{Q}^{(1)}(\mathcal{N}) < \frac{1}{n}\mathcal{Q}^{(1)}(\mathcal{N}^{\otimes n})$ for some $n \in \mathbb{N}$. Moreover, a seminal result in \cite{smith2008quantum} showed that there exist two channels $\mc N,\mc M$, each of which has zero quantum capacity on its own, yet $\mc N\otimes \mc M$ has positive capacity. This phenomenon, called \textit{super-activation}, demonstrates an extreme form of non-additivity and shows that the structure of quantum channels can be very subtle, especially when entangled inputs are allowed. Following these examples, numerous studies \cite{cubitt2015unbounded, elkouss2015superadditivity, filippov2021capacity, koudia2022deep, leditzky2023generic, leditzky2023platypus, leditzky2018dephrasure, lim2018activation, lim2019activation, siddhu2020leaking, siddhu2021entropic, siddhu2021positivity, sidhardh2022exploring, smith2008structured, smith2011quantum} have provided further examples of non-additivity. Such non-additive behaviors pose the primary obstacles to the accurate evaluation of quantum capacity.

Although it is well-established that the coherent information of a general quantum channel is not additive, there is still no complete characterization of the circumstances under which additivity holds. In particular, even identifying the full range of channels with zero quantum capacity remains an open problem. Currently, the only two known large classes of zero-capacity channels are the \emph{anti-degradable} channels \cite{devetak2005capacity} and the \emph{PPT (entanglement-binding)} channels \cite{peres1996separability, horodecki2001separability, horodecki2000binding}. It is closely related to a notoriously hard problem to find bound entangled states which are not PPT \cite{Horodecki_2005}. 

For channels with strictly positive single-letter coherent information, only a few families are known to exhibit additivity: (1) the \emph{degradable} channels themselves \cite{devetak2005capacity}, (2) channels that are \emph{dominated by degradable} channels \cite{chessa2021partially, chessa2023resonant, Gao_2018}, and (3) channels satisfying a \emph{weaker notion of degradability} \cite{watanabe2012private} (which includes, as a special case, the \emph{conjugate degradable} channels of \cite{bradler2010conjugate}). Beyond these classes, additivity remains elusive and can fail in dramatic ways.

In this paper, we systematically investigate the conditions under which the coherent information of quantum channels is additive. We distinguish between two notions of additivity: we say that a quantum channel $\mathcal{N}$ has \textit{weakly additive} coherent information if $\mc Q^{(1)}(\mathcal{N}^{\otimes n}) = n\mc Q^{(1)}(\mathcal{N})$ for every $n \in \mathbb{N}$. On the other hand, if equality in the superadditivity condition \eqref{superadditive} holds for any quantum channel $\mathcal{M}$, we say that $\mathcal{N}$ has \textit{strong additive} coherent information. Currently, the only known class of channels with strong additive coherent information is \textit{Hadamard channels}, i.e., the complementary channel is entanglement-breaking \cite{Winter_2016}. 

If the equality in \eqref{superadditive} only holds when $\mathcal{M}$ is restricted to a certain subclass of channels, then we say that $\mathcal{N}$ has strong additive coherent information with respect to that subclass. Previous studies \cite{smith2008quantum, leditzky2023generic, leditzky2023platypus} have shown that some quantum channels may exhibit weak additivity of coherent information, potentially with positive coherent information, while strong additivity can fail even for relatively simple degradable channels, such as erasure channels. These findings suggest that the complete characterization of additivity depends on which notion of additivity one considers. 

The main contribution of this paper is to identify two classes of non-degradable and non-PPT channels that nonetheless exhibit additivity properties. First, we show that if a quantum channel’s output system “outperforms” the environment under conditions weaker than degradability, both \emph{strong} and \emph{weak} additivity of coherent information can be achieved, thereby generalizing earlier work~\cite{watanabe2012private, cross2017uniform}. To illustrate this, we construct a class of examples based on probabilistic mixtures of degradable and anti-degradable channels, which themselves are neither strictly degradable nor anti-degradable, yet still retain the intuitive feature that the output “outperforms” the environment. Under a plausible stability conjecture, these channels can realize both strong and weak additivity. The technical aspects of this conjecture motivate the introduction of a novel framework, termed the \textit{reverse-type data processing inequality}, which is explored in detail in an independent work~\cite{BGSW24}.

Second, we prove that starting with a channel exhibiting strong or weak additivity of coherent information, one can construct a new channel that preserves additivity \emph{but} no longer maintains the “output outperforms the environment” property. Our construction is inspired by the notion of switched channels~\cite{cubitt2015unbounded, elkouss2015superadditivity} and further developed in~\cite{chessa2021partially}. We demonstrate this approach by recovering previous additivity results from~\cite{chessa2021quantum, chessa2023resonant, Gao_2018}, and by generalizing the Platypus channel introduced in~\cite{leditzky2023generic} to show additivity over certain parameter regions. Moreover, we show that when the structural rigidity responsible for additivity is broken, \emph{strong} additivity of coherent information fails, while \emph{weak} additivity is possible to persist. This distinction between strong and weak additivity highlights a rich theoretical landscape for investigating additivity phenomena in quantum information theory. It will not only help us determine the capacities of more quantum channels, but also teach us about when non-additivity can arise.

The rest of the paper is structured as follows. Section \ref{sec:preliminary} reviews the preliminaries on quantum channels and capacity. Section \ref{sec:method} introduces the general construction of quantum channels with strong or weak additivity properties. Sections \ref{sec:flagged channel example} and \ref{sec:Platypus} analyze specific examples in detail, illustrating strong or weak additivity for non-degradable quantum channels.

\section{Preliminaries}\label{sec:preliminary}
\subsection{Quantum channel and its representation}
In this paper, $\mc H$ is denoted as a Hilbert space of finite dimension. $\mc H^{\dagger}$ is the dual space of $\mc H$. $\ket{\psi}$ denotes a unit vector in $\mc H$ and $\bra{\psi} \in \mc H^{\dagger}$ is the dual vector. For two Hilbert spaces $\mc H_A, \mc H_B$, the space of linear operators mapping from $\mc H_A$ to $\mc H_B$ is denoted as $\mb B(\mc H_A, \mc H_B) \cong \mc H_B \otimes \mc H_A^{\dagger}$. When $\mc H_A = \mc H_B = \mc H$, we denote $\mb B(\mc H, \mc H)$ as $\mb B(\mc H)$.

Let $\mc H_A,\mc H_B,\mc H_E$ be three Hilbert spaces of dimensions $d_A,d_B,d_E$ respectively. An isometry $V: \mc H_A \to \mc H_B \otimes \mc H_E$, meaning $V^{\dagger}V = I_{A}$ (identity operator on $\mc H_A$), generates a pair of quantum channels $(\mc N, \mc N^c)$, i.e., a pair of completely positive and trace-preserving(CPTP) linear maps on $\mb B(\mc H_A)$, defined by 
\begin{equation}
    \mc N(\rho) = \Tr_{E}(V \rho V^{\dag}), \quad \mc N^c(\rho) = \Tr_{B}(V \rho V^{\dag}),
\end{equation}
which take any operator $\rho \in \mb B(\mc H_A)$ to $\mb B(\mc H_B)$ and $\mb B(\mc H_E)$, respectively. Each channel in the pair $(\mc N, \mc N^c)$ is called the \textit{complementary channel} of the other. 

Denote $\mc L(\mb B(\mc H_A), \mb B(\mc H_B))$ as the class of super-operators which consists of linear maps taking any operator in $\mb B(\mc H_A)$ to $\mb B(\mc H_B)$. For any $\mc N \in \mc L(\mb B(\mc H_A), \mb B(\mc H_B))$, we denote it as $\mc N^{A\to B}$ to emphasize that it is a super-operator mapping operators on $\mc H_A$ to operators on $\mc H_B$.

In general, any $\mc N^{A\to B}$ has the operator-sum representation
\begin{align*}
    & \mc N^{A\to B}(X) = \sum_{i=1}^m A_i X B_i, \quad A_i \in \mb B(\mc H_A, \mc H_B),\quad B_i \in \mb B(\mc H_B, \mc H_A), \quad X \in \mb B(\mc H_A).
\end{align*}
A quantum channel $\mc N^{A\to B}$ is the one with completely positive and trace-preserving(CPTP) property. The operator-sum representation of a quantum channel is given by $B_i = A_i^{\dagger}$, and in this case, we call it \textit{Kraus representation}:
\begin{equation}
    \mc N^{A\to B}(X) = \sum_{i=1}^m A_i X A_i^{\dagger},\quad A_i \in \mb B(\mc H_A, \mc H_B),\ X \in \mb B(\mc H_A).
\end{equation}
Another representation of a super-operator $\mc N^{A\to B}$ comes from its Choi–Jamio\l{}kowski operator. Suppose $\{\ket{i}\}_{i=0}^{d_A -1}$ is an orthonormal basis for $\mc H_A$ and the maximally entangled state on $\mc H_A \otimes \mc H_A$ is given by 
\begin{align*}
    \ket{\Phi} = \frac{1}{\sqrt{d_A}} \sum_{i=0}^{d_A -1} \ket{i} \otimes \ket{i}.
\end{align*}
Then the \textit{unnormalized Choi–Jamio\l{}kowski} operator of $\mc N^{A\to B}$ is an operator in $\mb B(\mc H_A \otimes \mc H_B)$ given by
\begin{equation}\label{eqn:Choi operator}
\begin{aligned}
     \mc J_{\mc N^{A\to B}} & = d_A (id_{\mb B(\mc H_A)} \otimes \mc N^{A\to B}) (\ket{\Phi} \bra{\Phi}) = \sum_{i,j = 0}^{d_A - 1} \ketbra{i}{j} \otimes \mc N^{A\to B}(\ketbra{i}{j}).
\end{aligned}
\end{equation}
Note that it is well-known that $\mc N$ is completely positive if and only if $\mc J_{\mc N}$ is positive and $\mc N$ is trace-preserving if and only if $\Tr_B(\mc J_{\mc N}) = I_A$, the identity operator on $\mc H_A$, where $\Tr_B$ is the partial trace operator given by $\Tr_B(X_A \otimes X_B) = \Tr(X_B) X_A$. 

The composition rule of Choi–Jamio\l{}kowski operator is given by the well-known link product: suppose $\mc N_1: \mb B(\mc H_A) \to \mb B(\mc H_B),\ \mc N_2: \mb B(\mc H_B) \to \mb B(\mc H_C)$, then 
\begin{equation}
    \mc J_{\mc N_2 \circ \mc N_1} = \Tr_B \left[(I_A \otimes \mc J_{\mc N_2}) (\mc J_{\mc N_1}^{T_B} \otimes I_C)\right],
\end{equation}
where $T_B$ denotes the partial transpose in the $\mc H_B$ system. 

Finally, we review another representation of a super-operator which behaves better under composition. Suppose the operator-sum representation of a super-operator is given by $\mc N(X) = \sum_{k=1}^m A_k X B_k$, we define its \textit{transfer matrix} as an operator in $ \mb B(\mc H_A \otimes \mc H_A, \mc H_B \otimes \mc H_B)$ by 
\begin{equation}\label{def:transfer matrix}
    \mc T_{\mc N} = \sum_{k=1}^m B_k^{T} \otimes A_k.
\end{equation}
It is easy to see that for linear maps $\mc N_1: \mb B(\mc H_A) \to \mb B(\mc H_B),\ \mc N_2: \mb B(\mc H_B) \to \mb B(\mc H_C)$, we have
\begin{equation}\label{eqn:composition of transfer matrix}
    \mc T_{\mc N_2 \circ \mc N_1} = \mc T_{\mc N_2} \mc T_{\mc N_1}.
\end{equation}
Moreover, the connection between Choi–Jamio\l{}kowski opertor and transfer matrix is established as follows:
\begin{equation}\label{eqn:relation Choi and transfer}
    \mc T_{\mc N} = \vartheta^{\Gamma} (\mc J_{\mc N}),
\end{equation}
where $\vartheta^{\Gamma}: \mb B(\mc H_A \otimes \mc H_B) \to \mb B(\mc H_A \otimes \mc H_A, \mc H_B \otimes \mc H_B)$ is the \textit{involution operation} defined by 
\begin{equation}
    \vartheta^{\Gamma} (\ket{j}_A \ket{v}_B \bra{i}_A \bra{r}_B) = \ket{r}_B\ket{v}_B \bra{i}_A\bra{j}_A. 
\end{equation}

\subsection{Quantum capacity and its (non)addivity property}
Suppose a complementary pair of quantum channels $(\mc N, \mc N^c)$ is generated by the isometry $V_{\mc N}: \mc H_A \to \mc H_B \otimes \mc H_E$. The \textit{quantum capacity} of $\mc N$, denoted as $\mc Q(\mc N)$, is the supremum of all achievable rates for quantum information transmission through ${\mc N}$. The LSD theorem \cite{lloyd1997capacity, shor2002, devetak2005private} shows that the coherent information is an achievable rate for quantum communication over a quantum channel. Now we review the basics of coherent information of a quantum channel. 

For any input state $\rho_A \in \mb B(\mc H_A)$, we denote $\ket{\psi}_{RA} \in \mc H_{R} \otimes \mc H_A$ as a purification of $\rho_A$, and $\ket{\Psi}_{RBE} = (I_{R} \otimes V_{\mc N})(\ket{\psi}_{RA})$, where $I_R$ is the identity operator acting on $\mc H_R$ and $V_{\mc N}$ is the isometry generating the quantum channel $\mc N$. We denote $\rho_B,\ \rho_{RB},\ \rho_{E}$ as the reduced density operator from the pure state $\ket{\Psi}_{RBE}$.

The coherent information $I_c(\rho_A,\mc N)$ is defined by
\begin{equation}\label{def:coherent information}
\begin{aligned}
    I_c(\rho_A, \mc N)& := I(R\rangle B)_{\rho_{RB}} = S(\rho_B) - S(\rho_{RB}) = S(\rho_B) - S(\rho_E).
\end{aligned}
\end{equation}
where $S(\rho):= -\Tr(\rho \log \rho)$ is the von Neumann entropy. Note that different choices of the purification system $\mc H_R$ and $\ket{\psi}_{RA}$ will produce the same coherent information, since von Neumann entropy is invariant under unitary transformation. We denote $S(\rho_B)$ as $S(B)$ for simplicity of notation in the remaining of the paper. The maximal coherent information is defined by 
\begin{equation}\label{def:max coherent information}
    \mc Q^{(1)}(\mc N)= \max_{\rho_A} I_c(\rho_A, \mc N)
\end{equation}
and by LSD Theorem, the quantum capacity can be calculated by the regularized quantity $\mc Q(\mc N) = \lim_{n \to \infty} \frac{1}{n}\mc Q^{(1)}(\mc N^{\otimes n})$. In general, the channel coherent information is \textit{super-additive}, i.e., for any two quantum channels $\mc N_1, \mc N_2$, we have
\begin{equation}
    \mc Q^{(1)}(\mc N_1 \otimes \mc N_2) \geq \mc Q^{(1)}(\mc N_1) + \mc Q^{(1)}(\mc N_2).
\end{equation}
We will use the following terminology introduced in \cite{leditzky2023generic} and references therein to facilitate our discussion. 
\begin{itemize}
    \item We say that the quantum channel $\mc N$ has \textit{weak additive} coherent information, if $\mc Q(\mc N) =\mc Q^{(1)}(\mc N)$. 
    \item We say that the quantum channel $\mc N$ has \textit{strong additive} coherent information with a certain class of quantum channels(for example, degradable channels), if for any quantum channel $\mc M$ from that class, we have $\mc Q^{(1)}(\mc N \otimes \mc M) = \mc Q^{(1)}(\mc N) + \mc Q^{(1)}(\mc M)$.
\end{itemize}
The choice of the class of quantum channels matters. Recall that given a complementary pair of channels $(\mc N,\mc N^c)$, we say $\mc N$ is \textit{degradable} and $\mc N^c$ is \textit{anti-degradable}, if there exists a quantum channel $\mc D$ such that $\mc D \circ \mc N= \mc N^c$. We call $\mc D$ a \textit{degrading} channel. We say $\mc N$ is \textit{symmetric}, if $\mc N$ is simultaneous degradable and anti-degradable.

It is well-known that (see \cite[Theorem 13.5.1]{wilde2013quantum}) the class of (anti-)degradable channels have weak additive coherent information and strong additive coherent information with (anti-)degradable channels. Moreover, it is shown that the class of Hadamard channels (its complementary is entanglement-breaking thus anti-degradable) has strong additive coherent information with arbitrary quantum channels \cite{Winter_2016}. Apart from (anti-)degradable channels, PPT channels, i.e., its Choi–Jamio\l{}kowski operator is positive under partial transpose \cite{horodecki2000binding}, have zero quantum capacity via the well-known transpose upper bound of quantum capacity $\mc Q(\mc N)\le \log \|T^B \circ \mc N\|_{\diamond}$, where $T^B$ is the transpose map. Thus the class of PPT channels also has weak additive coherent information. 

According to the knowledge of the authors, there is no rigorous proof of the existence of a quantum channel $\mc N$, with $\mc Q(\mc N) = \mc Q^{(1)}(\mc N)>0$ and a (anti-)degradable channel $\mc M$, such that $\mc Q(\mc N \otimes \mc M)> \mc Q(\mc N) + \mc Q(\mc M)$. In \cite{leditzky2023platypus}, Platypus channel is a candidate of this phenomenon, but the weak additivity is only conjectured to hold.

\subsection{Data processing inequality}
Let $\rho$ be a quantum state and $\sigma \geq 0$. The \textit{relative entropy} is defined as 
    \begin{align}\label{eq:relative-entropy}
        D(\rho \| \sigma) &= \begin{cases}
        \Tr(\rho \log{\rho} - \rho \log{\sigma}) \quad &\text{if } \text{supp}\rho \subseteq \text{supp}\sigma, \\
        \infty \quad &\text{else}.
        \end{cases}
    \end{align}
The well-known \textit{Data-processing inequality}, see \cite{uhlmann1977relative} for original proof and extensions \cite{ohya2004quantum, muller2017monotonicity}, claims that if $\mathcal{N}$ is a positive trace-preserving map(in particular quantum channel), we have
\begin{align}
    D(\rho \| \sigma) \geq D(\mathcal{N}(\rho) \| \mathcal{N}(\sigma)).
\end{align}
Rewriting the mutual information and coherent information in terms of relative entropy, data processing inequality implies the following \cite[Section 11.9]{wilde2013quantum}:
\begin{enumerate}
        \item Suppose $\rho_{VA}$ is a state on $\mc H_V\otimes \mc H_A$ and $\mc N: \mb B(\mc H_A) \to \mb B(\mc H_B)$ is a quantum channel. Denote $\rho_{VB} = (id_{\mb B(\mc H_V)} \otimes \mc N)(\rho_{VA})$ then we have
        \begin{align}\label{DPI:mutual information}
            I(V;A) \ge I(V;B),
        \end{align}
        where the mutual information is defined as 
        \begin{align*}
            I(V;A) &= S(\rho_V) + S(\rho_A) - S(\rho_{VA}) = D(\rho_{VA} \| \rho_V \otimes \rho_A).
        \end{align*}
        \item (Bottleneck inequality) Suppose $\mc N_1: \mb B(\mc H_A) \to \mb B(\mc H_B)$ and $\mc N_2: \mb B(\mc H_B) \to \mb B(\mc H_C)$ are quantum channels, then for any state $\rho_A$, we have 
        \begin{align*}
            I_c(\rho_A, \mc N_2 \circ \mc N_1) \le \min\left\{I_c(\rho_A, \mc N_1), I_c(\mc N_1(\rho_A), \mc N_2) \right\}.
        \end{align*}
        In particular, we have 
        \begin{equation}\label{bottleneck inequality}
        \begin{aligned}
            & \mc Q^{(1)}(\mc N_2 \circ \mc N_1) \le \min\left\{ \mc Q^{(1)}(\mc N_2), \mc Q^{(1)}(\mc N_1)\right\},\\
            & \mc Q(\mc N_2 \circ \mc N_1) \le \min\left\{ \mc Q(\mc N_2), \mc Q(\mc N_1)\right\}.
        \end{aligned}
        \end{equation}
\end{enumerate}

\section{General construction of non-degradable channels with additive coherent information}\label{sec:method}
In this section, we introduce two distinct classes of non-degradable channels that exhibit either weak or strong additivity of coherent information. The first class relies on weaker notions of degradability, while the second is constructed using channels with intrinsic additivity properties, such that although the weaker degradability is disrupted, the additivity property is preserved.

We begin by discussing some foundational constructions, including flagged channels and direct sum channels.
\subsection{Direct sum of channels, flagged channels and their coherent information}
\subsubsection*{Direct sum of channels} 
Suppose we have a finite collection of quantum channels:
\[
  \Phi^{A_k \to B_k} : \mb B(\mc H_{A_k}) \;\longrightarrow\;  \mb B(\mc H_{B_k})
  \quad \text{for } k = 1,2,\dots,n.
\]
The \emph{direct sum} of these channels, denoted
\[
  \bigoplus_{k=1}^n \Phi^{A_k \to B_k} : 
  \mb B\Bigl(\bigoplus_{k=1}^n \mc H_{A_k}\Bigr) 
  \;\longrightarrow\; 
  \mb B\Bigl(\bigoplus_{k=1}^n \mc H_{B_k}\Bigr),
\]
acts on block matrices in a block-diagonal manner, see \cite{Fukuda_2007} for more discussions and \cite{chessa2021partially} for a generalization. Specifically, each 
$\mc H_{A_k}$ and $\mc H_{B_k}$ are treated as orthogonal subspaces, 
and off-diagonal blocks of an operator \(X\) are mapped to zero, while each 
diagonal block \(X_{kk}\) is mapped according to \(\Phi_k\). In other words,
\begin{equation}\label{def:direct sum channel}
    \begin{aligned}
     & \Bigl(\,\bigoplus_{k=1}^n \Phi^{A_k \to B_k} \Bigr)(X) 
  = \begin{pmatrix}
    \Phi^{A_1 \to B_1} (X_{11}) & 0 & \cdots & 0 \\
    0 & \Phi^{A_2 \to B_2}(X_{22}) & \cdots & 0 \\
    \vdots & \vdots & \ddots & \vdots \\
    0 & 0 & \cdots & \Phi^{A_n \to B_n}(X_{nn})
  \end{pmatrix}.
\end{aligned}
\end{equation}

\subsubsection*{Flagged channel}
A quantum channel $\mc N^{A \to FB}$ is called a \emph{flagged channel} if it includes a flag register \(F\) of dimension \(d_F\) and is defined as:
\begin{equation}\label{def:flagged_channel}
    \mc N^{A \to FB} = \sum_{i=0}^{d_F-1} p_i \ketbra{i}{i}^F \otimes \mc N_i^{A \to B}, 
\end{equation}
where \(\sum_i p_i = 1\) and \(p_i \geq 0\). 

An illustrative example is the \emph{erasure channel}:
\[
\mc E_p^{A \to FB} = (1-p)\ketbra{0}{0}^F \otimes id^{A \to B} + p \ketbra{1}{1}^F \otimes \mc E_1^{A \to B},
\]
where $id^{A \to B}$ is the identity channel and \(\mc E_1^{A \to B}\) is a complete erasure channel(also known as \textit{replacer channel}) that maps all input states to a fixed state.

\subsubsection*{Coherent information of direct sum channels and flagged channels}
\begin{lemma}\label{lemma:basic}
For any direct sum channel \(\mc N = \mc N_0 \oplus \mc N_1\) and \(n \geq 1\),
\begin{equation}\label{coh:switch}
    \mc Q^{(1)}(\mc N^{\otimes n}) = \max_{0 \leq \ell \leq n} \mc Q^{(1)}(\mc N_0^{\otimes \ell} \otimes \mc N_1^{\otimes (n-\ell)}).
\end{equation}
For any flagged channel \(\mc N^{A \to FB} = \sum_{i=0}^{d_F-1} p_i \ketbra{i}{i}^F \otimes \mc N_i^{A \to B}\),
\begin{equation}\label{coh:flag}
    I_c(\rho_A, \mc N^{A \to FB}) = \sum_i p_i I_c(\rho_A, \mc N_i^{A \to B}).
\end{equation}
\end{lemma}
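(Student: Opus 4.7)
The plan is to prove the flagged-channel identity first and then bootstrap it to the direct-sum identity. For the flagged channel $\mc N^{A\to FB} = \sum_i p_i \ketbra{i}{i}^F \otimes \mc N_i^{A\to B}$, I would choose the isometric extension
\[
V = \sum_i \sqrt{p_i}\, \ket{i}^F \otimes V_i \otimes \ket{i}^{F'},
\]
where $V_i:\mc H_A\to \mc H_B\otimes \mc H_{E_i}$ is an isometric extension of $\mc N_i$ and $F'$ is an auxiliary flag kept by the environment. The normalization $\sum_i p_i = 1$ gives $V^\dagger V = I_A$; tracing out the environment recovers $\mc N^{A\to FB}$, while tracing out the output yields the complementary channel $\mc N^{c} = \sum_i p_i \ketbra{i}{i}^{F'} \otimes \mc N_i^c$, again of flagged form with the same probabilities. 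Because each flag is classical, $S(\mc N(\rho_A)) = H(p) + \sum_i p_i S(\mc N_i(\rho_A))$ and analogously $S(\mc N^c(\rho_A)) = H(p) + \sum_i p_i S(\mc N_i^c(\rho_A))$; the Shannon pieces cancel in $I_c = S(\mc N(\rho_A)) - S(\mc N^c(\rho_A))$, leaving $\sum_i p_i I_c(\rho_A, \mc N_i)$.

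For the direct-sum identity I would first distribute the tensor product, writing the $n$-fold input space as $(\mc H_{A_0}\oplus \mc H_{A_1})^{\otimes n} = \bigoplus_{\mathbf{k}\in\{0,1\}^n} \mc H_{A_{k_1}}\otimes\cdots\otimes \mc H_{A_{k_n}}$, and noting that since each factor $\mc N$ annihilates off-block matrix elements, the tensor power preserves the block structure, i.e.\ $\mc N^{\otimes n} = \bigoplus_{\mathbf{k}} \mc N_{\mathbf{k}}$ with $\mc N_{\mathbf{k}} := \mc N_{k_1}\otimes\cdots\otimes\mc N_{k_n}$. It therefore suffices to prove the auxiliary claim that $\mc Q^{(1)}(\bigoplus_\alpha \mc M_\alpha) = \max_\alpha \mc Q^{(1)}(\mc M_\alpha)$ for any direct-sum channel. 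For this I would take the isometric extension $V = \sum_\alpha (\iota_\alpha V_\alpha P_\alpha) \otimes \ket{\alpha}^K$, where $P_\alpha$ projects onto the $\alpha$-th input summand, $\iota_\alpha:\mc H_{B_\alpha}\hookrightarrow \bigoplus_\beta \mc H_{B_\beta}$ is the output embedding, and $K$ is a classical flag register recording the block label. The orthogonality $\iota_\alpha^\dagger \iota_\beta = \delta_{\alpha\beta}$ forces the cross-block terms in $\Tr_B(V\rho V^\dagger)$ to vanish, giving $\mc M^c(\rho) = \sum_\alpha \mc M_\alpha^c(\rho_{\alpha\alpha}) \otimes \ketbra{\alpha}{\alpha}^K$, while $\mc M(\rho)$ depends only on diagonal blocks by definition. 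Hence $I_c(\rho,\mc M) = I_c(\rho',\mc M)$ for the block-diagonal part $\rho' = \sum_\alpha P_\alpha \rho P_\alpha$. Setting $p_\alpha = \Tr(P_\alpha\rho)$ and $\tilde\rho_\alpha = P_\alpha \rho P_\alpha / p_\alpha$, the classical register $K$ again makes the entropies decompose additively, yielding $I_c(\rho,\mc M) = \sum_\alpha p_\alpha I_c(\tilde\rho_\alpha, \mc M_\alpha) \leq \max_\alpha \mc Q^{(1)}(\mc M_\alpha)$, with equality by concentrating $\rho$ in a single block.

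Combining these ingredients gives $\mc Q^{(1)}(\mc N^{\otimes n}) = \max_{\mathbf{k}\in\{0,1\}^n}\mc Q^{(1)}(\mc N_{\mathbf{k}})$; invariance of $\mc Q^{(1)}$ under permuting the tensor factors collapses the maximum to depend only on the Hamming weight of $\mathbf{k}$, producing the stated formula. I expect the main obstacle to be the bookkeeping in the direct-sum dilation: arranging the embeddings $\iota_\alpha$ and the flag $K$ so that the orthogonality of the output subspaces and the classicality of $K$ together force both $\mc M$ and $\mc M^c$ to depend only on diagonal blocks. Once this is set up, the rest is a routine entropy decomposition mirroring the flagged-channel calculation.
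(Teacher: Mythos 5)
Your proof is correct and follows essentially the same route as the paper: decompose $\mc N^{\otimes n}$ into a direct sum indexed by bit strings, reduce to the fact that $\mc Q^{(1)}$ of a direct-sum channel equals the maximum over its blocks, and for the flagged channel use additivity of von Neumann entropy over orthogonal flag states so that the $H(p)$ terms cancel in $S(B)-S(E)$. The only difference is that you prove the direct-sum maximum formula from scratch via an explicit dilation with a block-label flag, whereas the paper simply cites it as Proposition 1 of Fukuda--Wolf.
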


\begin{proof}
1. Direct sum channels: Consider the \(n\)-fold tensor product \((\mc N_0 \oplus \mc N_1)^{\otimes n}\). For a \(n\)-bit string \(\mathbf{b} = (b_1, \dots, b_n) \in \{0,1\}^n\), the channel is decomposed as:
   \[
   \mc N^{\otimes n} = \bigoplus_{\mathbf{b} \in \{0,1\}^n} \bigotimes_{i=1}^n \mc N_{b_i}.
   \]
   Using \cite[Proposition 1]{Fukuda_2007}, the coherent information of a direct sum channel is the maximum of its components, therefore
   \begin{align*}
        \mc Q^{(1)}(\mc N^{\otimes n}) & = \max_{\mathbf{b} \in \{0,1\}^n} \mc Q^{(1)}(\bigotimes_{i=1}^n \mc N_{b_i})\\
        & =\max_{0 \leq \ell \leq n} \mc Q^{(1)}(\mc N_0^{\otimes \ell} \otimes \mc N_1^{\otimes (n-\ell)}),
   \end{align*}
where the last equality follows from the fact that the order of tensor products does not affect maximal coherent information.

2. Flagged channels: The orthogonality of states in the flag register \(F\) ensures that the coherent information decomposes as a convex combination of \(I_c\) of the individual channels:
   \[
   I_c(\rho_A, \mc N^{A \to FB}) = \sum_i p_i I_c(\rho_A, \mc N_i^{A \to B}),
   \]
   which can be verified by the definition of coherent information, and von Neumann entropy is additive under convex combinations of orthogonal states.
\end{proof}

\subsection{Additivity via weaker degradability}
Given a complementary pair of channels $(\mc N,\mc N^c)$ generated by isometry $U_{\mc N}: \mc H_A \to \mc H_B \otimes \mc H_E$, it is natural to ask which one is "better" than the other. For a degradable channel $\mc N$, it is better than its complementary channel in the sense that there is another quantum channel $\mc D$ such that $\mc D \circ \mc N = \mc N^c$. Beyond this, various weaker notions of comparison are useful, especially in the study of additivity problems. We briefly review some of these weaker notions, which are systematically studied in \cite{hirche2022contraction}. This includes specific cases covered in \cite{buscemi2012comparison, buscemi2016degradable, watanabe2012private, cross2017uniform}. Before introducing the formal definitions, we establish the notation. Denote $\mc H_V,\mc H_W$ as arbitrary finite dimensional quantum system and $\rho_{VWA}$ is a tripartite quantum state supported on $\mc H_V\otimes \mc H_W \otimes \mc H_A$. Applying the isometry, we obtain the quadripartite state 
\begin{equation}
    \rho_{VWBE} = (I_{VW} \otimes U_{\mc N})\rho_{VWA} (I_{VW} \otimes U_{\mc N}^{\dagger}).
\end{equation}
The quantum system $\mc H_W$ is usually treated as a conditioning system. When $\mc H_V$ is intended to be a classical system, it is replaced by $\mc X$, which is a finite set.

\begin{definition}\label{def:weaker degradability}
    The following are progressively weaker notions of degradability for a channel $\mc N$:
    \begin{enumerate}
        \item $\mc N$ is degradable, if there exists another quantum channel $\mc D$ such that $\mc D \circ \mc N = \mc N^c$.
        \item $\mc N$ is completely informationally degradable, if for any quantum systems $\mc H_V,\mc H_W$ and tripartite quantum state $\rho_{VWA}$ supported on $\mc H_V\otimes \mc H_W \otimes \mc H_A$, we have 
        \begin{equation}
            I(V;B|W)_{\rho_{VWB}} \ge I(V;E|W)_{\rho_{VWE}},
        \end{equation}
        where the conditional mutual information is defined as $I(V;B|W):= I(V;BW) - I(V;W)$.
        \item $\mc N$ is completely less noisy, if for any classical system $\mc X$, any quantum systems $\mc H_W$ and classical-quantum state $\rho_{\mc X WA} = \sum_{x \in \mc X} p(x) \ketbra{x}\otimes \rho_{WA}^x$, we have 
        \begin{equation}
            I(\mc X;B|W)_{\rho_{\mc X WB}} \ge I(\mc X;E|W)_{\rho_{\mc X WE}} .
        \end{equation}
        \item $\mc N$ is informationally degradable, if for any quantum system $\mc H_V$ and bipartite quantum state $\rho_{VA}$ supported on $\mc H_V \otimes \mc H_A$, we have 
        \begin{equation}
            I(V;B)_{\rho_{VB}} \ge I(V;E)_{\rho_{VE}}.
        \end{equation}
        \item $\mc N$ is less noisy, if for any classical system $\mc X$ and classical-quantum state $\rho_{\mc X A} = \sum_{x \in \mc X} p(x) \ketbra{x}\otimes \rho_{A}^x$, we have 
        \begin{equation}
            I(\mc X;B)_{\rho_{\mc XB}} \ge I(\mc X;E)_{\rho_{\mc XE}}.
        \end{equation}
    \end{enumerate}
\end{definition}
As shown by \cite{hirche2022contraction}, the above degradability notions form a hierarchy, with each notion being weaker than the preceding one. An additional parallel notion called \textit{regularized less noisy} was also introduced in \cite{watanabe2012private, hirche2022contraction}: $\mc N$ is \textit{regularized less noisy}, if for any $n \ge 1$ and any classical system $\mc X$ and classical-quantum state $\rho_{\mc X A^n} = \sum_{x \in \mc X} p(x) \ketbra{x}\otimes \rho_{A^n}^x$, where $\rho_{A^n}^x$ are quantum states on $A^{\otimes n}$, we have 
\begin{equation}
     I(\mc X;B^n)_{\rho_{\mc X B^n}} \ge I(\mc X;E^n)_{\rho_{\mc X E^n}}.
\end{equation}
\begin{figure*}
    \centering\includegraphics[width=1.0\textwidth]{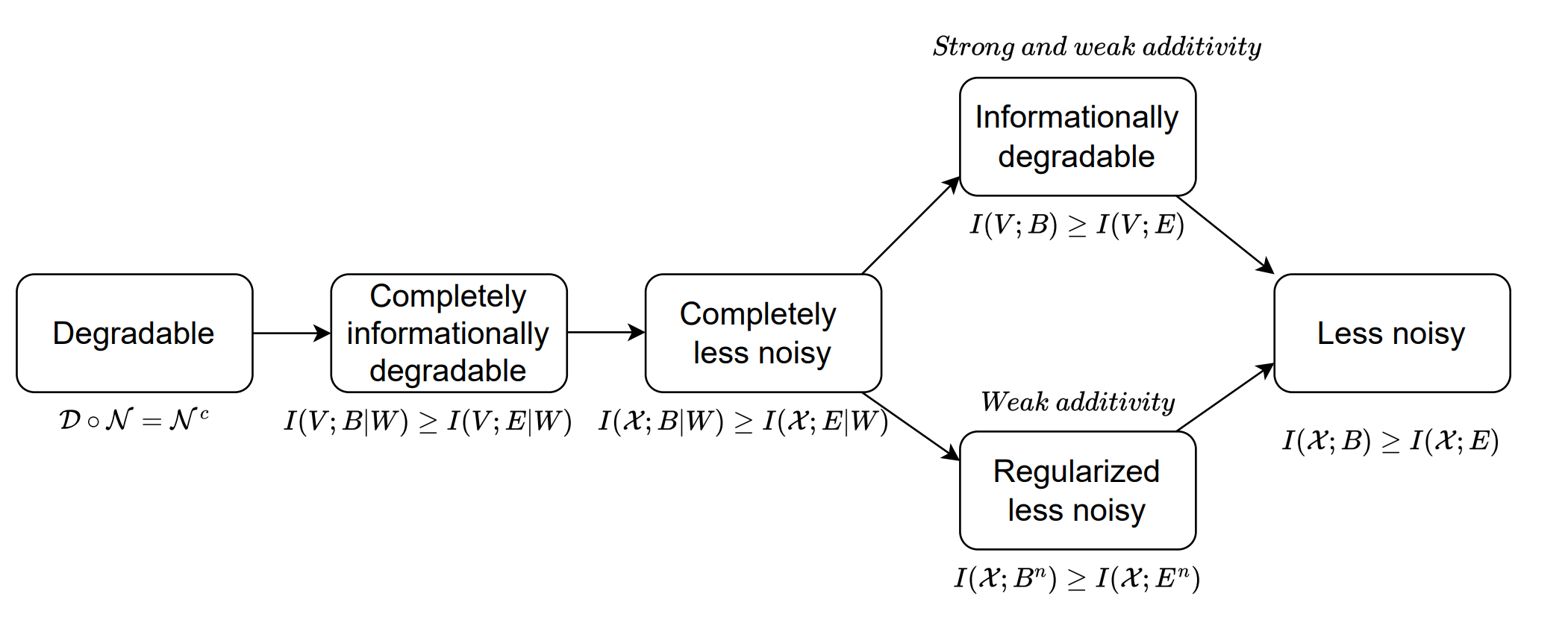}
    \caption{Hierarchy of notions of weaker degradability.}
    \label{fig:hierarchy}
\end{figure*}

\added{It remains unclear whether informational degradability or the regularized less-noisy property is the stronger condition. Even so, both notions provide sufficient criteria for the additivity of quantum capacities. In particular, Watanabe \cite{watanabe2012private} proved weak additivity whenever the channel is regularized less noisy, and Cross \emph{et al.} \cite{cross2017uniform} established weak additivity for informationally degradable channels. Building on the latter, we now extend the argument to demonstrate strong additivity for informationally degradable channels. For the convenience of the reader, we summarize all the results in the following theorem:}
\begin{theorem}\label{thm:weak degradability}
\begin{enumerate}
    \item If $\mc N$ is regularized less noisy, then for any $n \geq 1$:
    \[
    \mc Q^{(1)}(\mc N^{\otimes n}) = n \mc Q^{(1)}(\mc N).
    \]
    \item If $\mc N$ and $\mc M$ are informationally degradable channels, then for any $n, m \geq 1$:
    \[
    \mc Q^{(1)}(\mc N^{\otimes n} \otimes \mc M^{\otimes m}) = n \mc Q^{(1)}(\mc N) + m \mc Q^{(1)}(\mc M).
    \]
    In particular, this implies:
    \[
    \mc Q^{(1)}(\mc N^{\otimes n}) = n \mc Q^{(1)}(\mc N),
    \]
    for any informationally degradable channel $\mc N$.
\end{enumerate}
\end{theorem}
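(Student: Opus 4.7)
The plan is to derive the theorem from an exact telescoping identity for coherent information combined with the informational degradability hypothesis. Part (1) I would simply cite from Watanabe \cite{watanabe2012private}; the novel content is part (2), for which I follow a variant of the Devetak--Shor ``peeling'' argument, with the degrading map replaced by the informationally degradable condition.

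Fix an input $\rho$ and its purification $\ket{\psi}_{RA_1\cdots A_{n+m}}$, and label the $n+m$ channels as $\mc P_1=\cdots=\mc P_n=\mc N$ and $\mc P_{n+1}=\cdots=\mc P_{n+m}=\mc M$, with isometries $V_{\mc P_j}:\mc H_{A_j}\to\mc H_{B_j}\otimes\mc H_{E_j}$. Applying all of them to the purification yields a pure state on $RB_1\cdots B_{n+m}E_1\cdots E_{n+m}$. For $0\le j\le n+m$ define the hybrid register
\[
T_j := (B_1,\ldots,B_j,\,E_{j+1},\ldots,E_{n+m}),
\]
so $T_0 = E_1\cdots E_{n+m}$ and $T_{n+m}=B_1\cdots B_{n+m}$. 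Telescoping gives
\[
I_c\Bigl(\rho,\bigotimes_{j}\mc P_j\Bigr) = S(T_{n+m}) - S(T_0) = \sum_{j=1}^{n+m}\bigl[S(T_j)-S(T_{j-1})\bigr].
\]
Writing $W_j := (B_{<j},E_{>j})$ for the common subsystem of $T_j$ and $T_{j-1}$, each increment simplifies to $S(B_j|W_j) - S(E_j|W_j)$.

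The key step is to invoke informational degradability of $\mc P_j$ with auxiliary system $V = W_j$. The marginal state on $W_j \otimes A_j$, obtained from the purification by applying every isometry except $V_{\mc P_j}$ and tracing out the remaining systems, is a well-defined bipartite state; then $V_{\mc P_j}$ acts on $A_j$ (trivially on $W_j$) to produce the $(W_j,B_j,E_j)$ marginal in the sum. Informational degradability yields $I(W_j;B_j) \ge I(W_j;E_j)$, which rearranges to
\[
S(B_j|W_j) - S(E_j|W_j) \;\le\; S(B_j) - S(E_j) \;\le\; \mc Q^{(1)}(\mc P_j),
\]
since $S(B_j)-S(E_j) = I_c(\sigma_{A_j},\mc P_j)$ for the reduced input state $\sigma_{A_j}$. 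Summing over $j$ gives the upper bound $n\mc Q^{(1)}(\mc N) + m\mc Q^{(1)}(\mc M)$, and the matching lower bound follows from superadditivity under product inputs.

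The main subtlety is that the auxiliary $W_j$ is a genuinely quantum register, built from outputs of earlier channels and environments of later ones, rather than a classical side variable. Informational degradability is tailored to permit quantum auxiliaries, whereas the weaker regularized less-noisy hypothesis of part (1) only allows classical $V$ and therefore requires Watanabe's more delicate chain-rule-with-Csiszar-style argument, yielding only weak additivity. Once the telescoping is set up, no new inequality beyond the definition of informational degradability is needed, and the same scheme extends verbatim to any finite collection of informationally degradable channels, not just two.
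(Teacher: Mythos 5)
Your proof of part (2) is correct and is essentially the paper's own argument: the same telescoping through hybrid registers interpolating between all-outputs and all-environments (yours are $B_{<j}E_{>j}$ while the paper's are $E_{<i}B_{>i}$, and you do one uniform pass over all $n+m$ channels instead of splitting into two blocks), followed by applying informational degradability with the hybrid register as the quantum auxiliary and bounding each increment $S(B_j|W_j)-S(E_j|W_j)\le S(B_j)-S(E_j)\le \mc Q^{(1)}(\mc P_j)$ via the reduced input state. For part (1) you cite Watanabe, which matches the paper in substance, since it only reproduces his divergence-contraction argument for the reader's convenience.
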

\begin{proof}
    1) The key ingredient is the divergence contraction property proved in \cite[Proposition 4]{watanabe2012private} and \cite[Proposition 2.3]{hirche2022contraction}: suppose $\mc N^{A \to B}$ and $\mc M^{A \to \wt B}$ are two quantum channels generated by isometries $U_{\mc N}: \mc H_A \to \mc H_B \otimes \mc H_E, \ U_{\mc M}: \mc H_A \to \mc H_{\wt B}\otimes \mc H_{\wt E}$ and $\eta \ge 0$. Then the following the properties are equivalent:
    \begin{itemize}
        \item For any classical-quantum state $\rho_{\mc X A} = \sum_{x \in \mc X} p(x) \ketbra{x}\otimes \rho_{A}^x$, we have 
        \begin{equation}
            \eta I(\mc X;B)_{\rho_{\mc X B}} \ge I(\mc X; \wt B)_{\rho_{\mc X \wt B}}.
        \end{equation}
        \item For any state $\rho_A, \sigma_A$ with $\text{supp} \rho_A \subseteq \text{supp} \sigma_A$, we have 
        \begin{equation}
            \eta D(\mc N(\rho_A) \| \mc N(\sigma_A)) \ge D(\mc M(\rho_A) \| \mc M(\sigma_A)).
        \end{equation}
    \end{itemize}
    For regularized less noisy channel $\mc N$, given $n\ge 1$, denote the isometry of $\mc N^{\otimes n}$ as $U_{\mc N^{\otimes n}}: \bigotimes_{i=1}^n \mc H_{A_i} \to \bigotimes_{i=1}^n \mc H_{B_i} \otimes \bigotimes_{i=1}^n \mc H_{E_i}$. Then for any $n$-partite state $\rho_{A^n}$, denote $\sigma_{A^n} = \rho_{A_1}\otimes \cdots \otimes \rho_{A_n}$ where $\rho_{A_i}$ is the reduced state of $\rho_{A^n}$ on $i$-th system. Applying the above equivalent conditions for $\mc N^{\otimes}, (\mc N^c)^{\otimes n}, \eta = 1$, we have 
    \begin{equation}
    \begin{aligned}
                 D(\mc N^{\otimes n}(\rho_{A^n}) \| \mc N^{\otimes n}(\sigma_{A^n})) \ge D((\mc N^c)^{\otimes n}(\rho_{A^n}) \| (\mc N^c)^{\otimes n}(\sigma_{A^n})),
    \end{aligned}
    \end{equation}
    which by definition is equivalent to 
    \begin{equation}
    \begin{aligned}
        S(\rho_{B^n}) - S(\rho_{E^n}) & \le -\Tr(\rho_{B^n} \log (\rho_{B_1}  \otimes \cdots \otimes \rho_{B_n} ))  + \Tr(\rho_{E^n} \log (\rho_{E_1}  \otimes \cdots \otimes \rho_{E_n} )) \\
        & = \sum_{i=1}^n S(\rho_{B_i}) - S(\rho_{E_i}) \le n \mc Q^{(1)}(\mc N).
    \end{aligned}
    \end{equation}
    By choosing $\rho_{A^n}$ as the optimal state achieving $\mc Q^{(1)}(\mc N^{\otimes n})$, we show subadditivity thus additivity of coherent information. 

    2) The key ingredient is the following telescoping argument: Suppose $\rho$ is a state on $\bigotimes_{i=1}^n \mc H_{B_i} \otimes \bigotimes_{i=1}^n \mc H_{E_i}$, and denote $S(B) = S(\rho_B)$ for notational simplicity. Then 
\begin{equation}
    S(B_1\cdots B_n)-S(E_1\cdots E_n) = \sum_{i=1}^n S(B_iV_i) - S(E_iV_i),
\end{equation}
where $V_i$ is defined as
\begin{equation}
    V_i = \begin{cases}
    B_2\cdots B_n,\ i = 1, \\
    E_1 \cdots E_{i-1} B_{i+1}\cdots B_n,\ 2\le i \le n-1, \\
    E_1 \cdots E_{n-1},\ i=n.
    \end{cases}
\end{equation}
This follows by adding and subtracting $S(E_1\cdots E_j B_{j+1} \cdots B_n)$ for $1 \le j \le n-1$ and rearranging the $n$ terms.

Back to the proof of strong additivity for informationally degradable channels, we denote the isometry of informationally degradable channels as
\begin{align*}
     U_{\mc N}: \mc H_A \to \mc H_B \otimes \mc H_E, \ U_{\mc M}: \mc H_{\wt A} \to \mc H_{\wt B}\otimes \mc H_{\wt E}.
\end{align*}
Then for any input state $\rho_{A^n\wt A^n}$, we denote 
\begin{align*}
    & \rho_{B_1\cdots B_nE_1\cdots E_n \wt B_1\cdots \wt B_m \wt E_1\cdots \wt E_m}  =( U_{\mc N}^{\otimes n} \otimes U_{\mc M}^{\otimes m} ) \rho_{A^n\wt A^n} ( (U_{\mc N}^{\dagger})^{\otimes n} \otimes (U_{\mc M}^{\dagger})^{\otimes m} ).
\end{align*}
Our goal is to prove subadditivity:
\begin{equation}
    \begin{aligned}
        & S(B_1\cdots B_n \wt B_1\cdots \wt B_m) - S(E_1\cdots E_n \wt E_1\cdots \wt E_m) \le \sum_{i=1}^n (S(B_i) - S(E_i)) + \sum_{j=1}^m (S(\wt B_j) - S(\wt E_j)).
    \end{aligned}
\end{equation}
Applying the above telescoping lemma, we have 
    \begin{align*}
       & S(B_1\cdots B_n \wt B_1\cdots \wt B_m) - S(E_1\cdots E_n \wt E_1\cdots \wt E_m) \\
       & = S(B_1\cdots B_n \wt B_1\cdots \wt B_m) - S(E_1\cdots E_n \wt B_1\cdots \wt B_m)  + S(E_1\cdots E_n \wt B_1\cdots \wt B_m) - S(E_1\cdots E_n \wt E_1\cdots \wt E_m) \\
 & = \sum_{i=1}^n S(B_i V_i \wt B_1\cdots \wt B_m ) - S(E_i V_i \wt B_1\cdots \wt B_m ) + \sum_{j=1}^m S(E_1\cdots E_n \wt B_j \wt V_j) - S(E_1\cdots E_n \wt E_j \wt V_j).
    \end{align*}
    Note that informationally degradability $I(V;B) \ge I(V;E)$ implies 
    \begin{equation}
            S(B)-S(E) \ge S(BV) - S(EV)
    \end{equation}
for any quantum system $\mc H_V$. Then the above inequality proceeds as 
\begin{equation}
\begin{aligned}
        & S(B_1\cdots B_n \wt B_1\cdots \wt B_m) - S(E_1\cdots E_n \wt E_1\cdots \wt E_m) \le \sum_{i=1}^n S(B_i) - S(E_i) + \sum_{j=1}^m S(\wt B_j) - S(\wt E_j),
\end{aligned}
\end{equation}
which concludes the proof.
\end{proof}
\begin{remark}
    It is not clear whether $\mc N \otimes \mc N$ is also informationally degradable if $\mc N$ is informationally degradable. 
\end{remark}
\begin{remark}
Recall from \cite[Lemma 1, Theorem 6]{smith2008quantumss} the definition of the \emph{symmetric side-channel-assisted capacity}:
\begin{align*}
\mathcal{Q}_{ss}(\mathcal{N}) & = \mathcal{Q}_{ss}^{(1)}(\mathcal{N}) = \sup_{\mathcal{H}_V,\mathcal{H}_W} \ \sup_{\rho_{VWA}} \bigl(I(V;B \mid W) - I(V;E \mid W)\bigr) \\
&= \sup_{\mathcal{A}\ \text{symmetric}} \mathcal{Q}^{(1)}\bigl(\mathcal{N} \otimes \mathcal{A}\bigr).
\end{align*}
It is easy to see that for informationally degradable channel $\mc N$, $\mc Q_{ss}(\mc N) = \mc Q(\mc N) = \mc Q^{(1)}(\mc N)$. Moreover, for completely informationally degradable channel $\mc M$, we have $\mc Q_{ss}(\mc M^c) = \mc Q(\mc M^c) = 0$. 
\end{remark}

A possible example separating different notions in Definition \ref{def:weaker degradability} is given as follows. Suppose $\mc N^{A \to B_1}, \mc M^{A \to B_2}$ are two degradable channels with the same input. Define 
\begin{equation}\label{flagged channel:general}
    \Psi_{p,\mc N,\mc M}:= p \ketbra{0}{0} \otimes \mc N^{A \to B_1} + (1-p) \ketbra{1}{1}\otimes (\mc M^{A \to B_2})^c, 
\end{equation}
which is a flagged channel mixed by degradable and anti-degradable channels. Intuitively, when $p$ is close to 1, the channel behaves like a degradable channel, but the anti-degradable part can destroy the degradability, while the weaker notions of degradability in Definition \ref{def:weaker degradability} may be preserved. 

We denote the isometries generating $\mc N,\mc M$ as 
\begin{equation}
    U_{\mc N}: \mc H_A \to \mc H_{B_1} \otimes \mc H_{E_1},\quad U_{\mc M}: \mc H_A \to \mc H_{B_2} \otimes \mc H_{E_2}
\end{equation}
Following the notation in Definition \ref{def:weaker degradability}, we introduce the following non-negative constants (we adopt the convention $\frac{0}{0} = 1$):
\begin{align}
    & R_1(\mc N,\mc M):= \inf_{\mc H_V,\mc H_W} \inf_{\rho_{VWA}} \frac{I(V;B_1|W) - I(V;E_1|W)}{I(V;B_2|W) - I(V;E_2|W)}, \label{ratio:1}\\
    & R_2(\mc N,\mc M):= \inf_{\mc X,\mc H_V} \inf_{\rho_{\mc X WA}} \frac{I(\mc X;B_1|W) - I(\mc X;E_1|W)}{I(\mc X;B_2|W) - I(\mc X;E_2|W)}, \label{ratio:2}\\
    & R_3(\mc N,\mc M):= \inf_{\mc H_V} \inf_{\rho_{VA}} \frac{I(V;B_1) - I(V;E_1)}{I(V;B_2) - I(V;E_2)}, \label{ratio:3}\\
    & \wt R_3(\mc N,\mc M):= \inf_{n \ge 1,\mc X} \inf_{\rho_{\mc XA^n}} \frac{I(\mc X;B_1^n) - I(\mc X;E^n_1)}{I(\mc X;B^n_2) - I(\mc X;E^n_2)}, \label{ratio:3'}\\
    & R_4(\mc N,\mc M):= \inf_{\mc X} \inf_{\rho_{\mc X A}} \frac{I(\mc X;B_1) - I(\mc X;E_1)}{I(\mc X;B_2) - I(\mc X;E_2)}.\label{ratio:4}
\end{align}
Since $\mc N,\mc M$ are degradable, the data processing inequality ensures that $R_i \geq 0$ for $1 \leq i \leq 4$. Furthermore, due to the hierarchical structure of the notions in Definition \ref{def:weaker degradability}, we have
\begin{align*}
    R_1 \leq R_2 \leq R_3,\ \wt{R}_3 \leq R_4.
\end{align*}
The key question is whether these constants are strictly positive. While optimization over arbitrary quantum systems leaves it unclear whether the infimum can approach zero, there are indications that positivity is plausible for degradable channels. For example, entangled states are known not to enhance the coherent information of degradable channels, which supports the conjecture that $R_i > 0$ for some $\mc N, \mc M$. This issue is closely tied to the challenge of determining quantum dimension bounds, as explored in \cite{beigi2013dimension, hirche2020alphabet}, a notoriously difficult task. Providing a rigorous justification for the positivity of these constants in general remains an open question. For positivity of $R_4$, we refer the reader to \cite{BGSW24}. For the positivity of $R_3$, we provide some justifications in Section \ref{sec:flagged channel example}.

Using the property for flagged channels, we can show the following:
\begin{prop}\label{main:less noisy}
    Suppose $\Psi_{p,\mc N,\mc M}$ is defined by the flagged mixture of degradable and anti-degradable channels in \eqref{flagged channel:general}, then
    \begin{itemize}
        \item If $p \ge \frac{1}{1+R_1(\mc N,\mc M)}$, $\Psi_{p,\mc N,\mc M}$ is completely informational degradable.
        \item If $p \ge \frac{1}{1+R_2(\mc N,\mc M)}$, $\Psi_{p,\mc N,\mc M}$ is completely less noisy.
        \item If $p \ge \frac{1}{1+R_3(\mc N,\mc M)}$, $\Psi_{p,\mc N,\mc M}$ is informational degradable.
        \item If $p \ge \frac{1}{1+\wt R_3(\mc N,\mc M)}$, $\Psi_{p,\mc N,\mc M}$ is regularized less noisy.
        \item If $p \ge \frac{1}{1+R_4(\mc N,\mc M)}$, $\Psi_{p,\mc N,\mc M}$ is less noisy.
    \end{itemize}
\end{prop}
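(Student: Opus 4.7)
The plan is to reduce all five implications to a single two-line calculation---a flag decomposition of the relevant conditional mutual information for $\Psi_{p,\mc N,\mc M}$ and its complement, followed by the application of the corresponding ratio $R_i$. First observe that the complement of $\mc M^c$ is $\mc M$, so $\Psi^c = p\,\ketbra{0}^{F'}\otimes \mc N^c + (1-p)\,\ketbra{1}^{F'}\otimes \mc M$. For any input state $\rho_{VWA}$, the classical-on-$F$ structure of the output and the additivity of von Neumann entropy across orthogonal flag components, in the same spirit as \eqref{coh:flag}, yield
\begin{align*}
I(V;FB|W) &= p\,I(V;B_1|W) + (1-p)\,I(V;E_2|W), \\
I(V;F'E|W) &= p\,I(V;E_1|W) + (1-p)\,I(V;B_2|W),
\end{align*}
where the quantities on the right are evaluated on the isometric extensions of $\mc N$ and $\mc M$ applied to $\rho_{VWA}$.

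Subtracting the two lines gives
$$I(V;FB|W) - I(V;F'E|W) = p\,\Delta_{\mc N} - (1-p)\,\Delta_{\mc M},$$
with $\Delta_{\mc N}:=I(V;B_1|W) - I(V;E_1|W)$ and $\Delta_{\mc M}:=I(V;B_2|W) - I(V;E_2|W)$, both non-negative by the degradability of $\mc N$ and $\mc M$. Invoking the definition of $R_1$ as an infimum gives $\Delta_{\mc N}\ge R_1\,\Delta_{\mc M}$, so the difference is at least $[pR_1 - (1-p)]\,\Delta_{\mc M}$, which is non-negative precisely when $p\ge 1/(1+R_1)$. This is the first claim. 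The second, third, and fifth claims follow by exactly the same calculation after restricting $V$ to a classical register $\mc X$ and/or removing the conditioning $W$, with $R_1$ replaced by $R_2$, $R_3$, or $R_4$.

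The main obstacle is the fourth (regularized less noisy) claim, which demands the less-noisy property for $\Psi^{\otimes n}$ for every $n \ge 1$. Expanding the flag decomposition of $\Psi^{\otimes n}$ introduces a sum over $2^n$ bit-strings $b\in\{0,1\}^n$ with weights $p_b = p^{|b|_0}(1-p)^{|b|_1}$, giving
$$I(\mc X;(FB)^n) - I(\mc X;(F'E)^n) = \sum_b p_b\,\bigl[I(\mc X;B_b) - I(\mc X;E_b)\bigr],$$
where $B_b$ and $E_b$ are the outputs of $\bigotimes_i\mc N_{b_i}$ and $\bigotimes_i\mc N_{b_i}^c$ with $\mc N_0=\mc N$ and $\mc N_1=\mc M^c$. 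Only the extreme strings $b = 0^n$ and $b = 1^n$ yield pure tensor powers that are directly controlled by $\wt R_3$; the mixed strings interleave $\mc N$ and $\mc M^c$ (resp.\ $\mc N^c$ and $\mc M$) and their differences carry no uniform sign. The plan is to convert the information inequality into the equivalent divergence inequality via the contraction-coefficient argument from the proof of Theorem \ref{thm:weak degradability}(1), exploit the linear decomposition of divergence over the flag, and pair bit-strings differing in a single coordinate to expose the weight ratio $p:(1-p)$ while using site-wise data-processing via $\mc D_\mc N$ and $\mc D_\mc M$ to reduce each paired mixed contribution to one already controlled by $\wt R_3$. Arranging these reductions so that the mixed contributions collapse uniformly in $n$ to yield the threshold $p \ge 1/(1+\wt R_3)$ is the principal technical difficulty.
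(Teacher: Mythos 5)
Your treatment of items (1), (2), (3) and (5) is exactly the paper's argument. The paper proves only the first item explicitly: it uses the additivity of (conditional) mutual information over the orthogonal flag (Lemma \ref{lemma:basic}) to get $I(V;B|W)-I(V;E|W)=p\,\Delta_{\mc N}-(1-p)\,\Delta_{\mc M}$, observes that nonnegativity for all inputs is equivalent to the ratio bound \eqref{eqn: key inequality of less noisy}, i.e.\ $R_1\ge(1-p)/p$, and then asserts that ``the other cases follow similarly.'' Your identification of $\Psi^c$ and the sign bookkeeping for the anti-degradable component agree with the paper's, so for these four items your proof is correct and essentially identical to the published one.

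The gap is item (4), and you have correctly diagnosed that it does not ``follow similarly'': the flag decomposition of $\Psi^{\otimes n}$ produces $2^n$ terms indexed by bit strings, whereas $\wt R_3$ in \eqref{ratio:3'} only compares $\mc N^{\otimes n}$ against $\mc M^{\otimes n}$ on a common input and hence controls only the two extreme strings $0^n$ and $1^n$. The mixed strings give channels of the form $\mc N^{\otimes k}\otimes(\mc M^c)^{\otimes(n-k)}$, whose information advantages have no definite sign and are not bounded by $\wt R_3$. Your proposed repair (pass to the divergence formulation via Watanabe's equivalence, pair adjacent bit strings, and use site-wise data processing) is only a plan: after telescoping, the mixed terms produce conditional advantages of the form $I(\mc X;\tilde B_O\mid E_Z)-I(\mc X;\tilde E_O\mid E_Z)$ with a \emph{quantum} conditioning system $E_Z$, and comparing these to the unconditioned quantities that $\wt R_3$ controls requires a ``complete'' (conditioned) version of the less-noisy comparison, which is precisely what $\wt R_3$ does not supply. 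So item (4) remains unproved in your write-up. To be fair, the paper's own proof supplies no more detail here than ``follows similarly,'' so you have identified a genuine lacuna rather than missed an argument the paper contains; but as a self-contained proof your proposal establishes only items (1), (2), (3) and (5).
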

\begin{proof}
We demonstrate the case of completely informational degradability; the other cases follow similarly. Note that the (conditional) mutual information under convex combination of orthogonal states is additive, see Lemma \ref{lemma:basic}, we have 
\begin{equation}
\begin{aligned}
 I(V;B|W) - I(V;E|W) &= p (I(V;B_1|W) - I(V;E_1|W)) - (1-p) (I(V;B_2|W) - I(V;E_2|W)),
\end{aligned}
\end{equation}
where $\mc H_B = \mc H_{B_1}\oplus \mc H_{B_2},\ \mc H_E = \mc H_{E_1}\oplus \mc H_{E_2}$.

Therefore, $I(V;B|W) - I(V;E|W) \ge 0$ is equivalent to 
\begin{align}\label{eqn: key inequality of less noisy}
    \frac{I(V;B_1|W) - I(V;E_1|W)}{I(V;B_2|W) - I(V;E_2|W)} \ge \frac{1-p}{p}.
\end{align}
If $p \ge \frac{1}{1+R_1(\mc N,\mc M)}$, \eqref{eqn: key inequality of less noisy} holds, which concludes the proof.
\end{proof}
In Section \ref{sec:flagged channel example}, we present an example of $\mc N,\mc M$ such that as long as $p<1$, the channel $\Psi_{p,\mc N,\mc M}$ is neither degradable nor anti-degradable, which makes a separation of different notions in Definition \ref{def:weaker degradability} possible.

\subsection{Additivity via data processing inequality}
In this subsection, we discuss the construction of non-degradable channels from degradable ones in such a way that, while degradability is destroyed, the additivity property of the quantum capacity is preserved. The following notion is 
\added{\begin{definition}\label{def:weak_dom}
A quantum channel $\mc N$ is said to be weakly dominated by another quantum channel $\mc M$ if 
\begin{equation}
    \mc Q^{(1)}(\mc N) \le \mc Q^{(1)}(\mc M),
\end{equation}
where $\mc Q^{(1)}$ is defined in \eqref{def:max coherent information}.
\end{definition}}
Using the above notion, \added{we demonstrate when the additivity properties of $\mc N$ can be inherited from $\widehat{\mc N}$ under additional assumptions.} The following theorem is implicit in \cite{smith2008additive}:
\begin{theorem}\label{thm:additivity via DPI}
Let $\mc N$ and $\widehat{\mc N}$ be quantum channels. \added{Suppose 
\begin{itemize}
    \item $\mc N$ can be simulated by $\widehat{\mc N}$, i.e., there exist quantum channels $\mc E$ and $\mc D$ such that
\begin{equation}\label{eqn:simulation}
    \mc D \circ \widehat{\mc N} \circ \mc E = \mc N.
\end{equation}
    \item $\widehat{\mc N}$ is weakly dominated by $\mc N$.
\end{itemize}}
Then the following properties hold:
\begin{itemize}
    \item If $\mc Q(\widehat{\mc N}) = \mc Q^{(1)}(\widehat{\mc N})$, then $\mc Q(\mc N) = \mc Q^{(1)}(\mc N)$.
    \item If $\mc Q^{(1)}(\widehat{\mc N} \otimes \mc M) = \mc Q^{(1)}(\widehat{\mc N}) + \mc Q^{(1)}(\mc M)$ for some other channel $\mc M$, then 
    \begin{equation}
        \mc Q^{(1)}(\mc N \otimes \mc M) = \mc Q^{(1)}(\mc N) + \mc Q^{(1)}(\mc M).
    \end{equation}
\end{itemize}
\end{theorem}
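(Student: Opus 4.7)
The strategy is to chain the two hypotheses through the bottleneck inequality \eqref{bottleneck inequality}. The first step is to establish that the single-letter coherent informations agree, $\mc Q^{(1)}(\mc N) = \mc Q^{(1)}(\widehat{\mc N})$. Applying the bottleneck inequality to the simulation identity $\mc N = \mc D \circ \widehat{\mc N} \circ \mc E$ yields $\mc Q^{(1)}(\mc N) \le \mc Q^{(1)}(\widehat{\mc N})$, while the weak-domination hypothesis provides $\mc Q^{(1)}(\widehat{\mc N}) \le \mc Q^{(1)}(\mc N)$. Combined, these give the desired equality, which will be the workhorse of both bullets.

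For the first bullet, I would tensor the simulation into $\mc N^{\otimes n} = \mc D^{\otimes n} \circ \widehat{\mc N}^{\otimes n} \circ \mc E^{\otimes n}$ and apply the bottleneck inequality at the regularized level, obtaining $\mc Q(\mc N) \le \mc Q(\widehat{\mc N})$. Substituting the hypothesis $\mc Q(\widehat{\mc N}) = \mc Q^{(1)}(\widehat{\mc N})$ together with the equality of single-letter coherent informations established above then gives $\mc Q(\mc N) \le \mc Q^{(1)}(\mc N)$. The reverse inequality is the LSD lower bound and is automatic, so weak additivity for $\mc N$ follows.

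For the second bullet, I would rewrite the joint channel as a sandwich
\begin{equation*}
\mc N \otimes \mc M \;=\; (\mc D \otimes \id) \;\circ\; (\widehat{\mc N} \otimes \mc M) \;\circ\; (\mc E \otimes \id),
\end{equation*}
so that the bottleneck inequality yields $\mc Q^{(1)}(\mc N \otimes \mc M) \le \mc Q^{(1)}(\widehat{\mc N} \otimes \mc M)$. The assumed strong additivity of $\widehat{\mc N}$ with $\mc M$ rewrites the right-hand side as $\mc Q^{(1)}(\widehat{\mc N}) + \mc Q^{(1)}(\mc M)$, and one more substitution $\mc Q^{(1)}(\widehat{\mc N}) = \mc Q^{(1)}(\mc N)$ produces the $\le$ direction. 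The $\ge$ direction is nothing but the super-additivity inequality \eqref{superadditive}, closing the identity.

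The argument is essentially careful bookkeeping of data-processing inequalities, and I do not foresee any serious analytical obstacle. The one conceptual point worth emphasizing is the necessity of both hypotheses in tandem: the simulation identity alone gives only the one-sided comparison $\mc Q^{(1)}(\mc N) \le \mc Q^{(1)}(\widehat{\mc N})$, and without the weak-domination assumption one cannot upgrade this to equality, which is exactly the step that converts the bottleneck bounds into tight additivity statements.
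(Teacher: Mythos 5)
Your proposal is correct and follows essentially the same route as the paper: both arguments combine the bottleneck inequality applied to the simulation identity with the weak-domination hypothesis; the paper merely presents the steps as a single chain of inequalities that collapses to equality, whereas you first isolate $\mc Q^{(1)}(\mc N) = \mc Q^{(1)}(\widehat{\mc N})$ and then substitute. No gaps.
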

\begin{proof}
Using Bottleneck inequality \eqref{bottleneck inequality} and \eqref{eqn:simulation}, we have
\begin{equation}
    \mc Q^{(1)}(\widehat{\mc N}) \ge \mc Q^{(1)}(\mc N),\quad \mc Q(\widehat{\mc N}) \ge \mc Q(\mc N).
\end{equation}
Therefore, we have 
\begin{equation}
    \mc Q^{(1)}(\widehat{\mc N}) = \mc Q(\widehat{\mc N}) \ge \mc Q(\mc N) \ge \mc Q^{(1)}(\mc N).
\end{equation}
\added{Since $\widehat{\mc N}$ is weakly dominated by $\mc N$, we have $\mc Q^{(1)}(\mc N) \ge \mc Q^{(1)}(\widehat{\mc N})$, thus we get the weak additivity.} Using similar argument, we have
\begin{align*}
 \mc Q^{(1)}(\widehat{\mc N}) + \mc Q^{(1)}(\mc M) & = \mc Q^{(1)}(\widehat{\mc N} \otimes  \mc M) \\
    & \ge \mc Q^{(1)}(\mc N\otimes  \mc M) \\
    & \ge \mc Q^{(1)}(\mc N) + \mc Q^{(1)}(\mc M) \\
    & \ge \mc Q^{(1)}(\widehat{\mc N}) + \mc Q^{(1)}(\mc M).
\end{align*}
Therefore, the inequalities collapse into equalities, implying $\mc Q^{(1)}(\mc N\otimes  \mc M) = \mc Q^{(1)}(\mc N) + \mc Q^{(1)}(\mc M)$.
\end{proof}
Using the above theorem and direct sum of channels, one can immediately get a class of channels which are neither degradable nor anti-degradable, but this class has additive coherent information:
\begin{corollary}\label{cor:direct sum}
    Suppose $\Phi_1$ is degradable and $\Phi_2$ is anti-degradable. Then $\Phi_1 \oplus \Phi_2$ defined by \eqref{def:direct sum channel} satisfies:
    \begin{itemize}
        \item $\mc Q(\Phi_1 \oplus \Phi_2) = \mc Q^{(1)}(\Phi_1 \oplus \Phi_2)$. 
        \item For any degradable channel $\Psi$, we have 
        \begin{equation}
            \mc Q^{(1)}\left( (\Phi_1 \oplus \Phi_2) \otimes \Psi\right) = \mc Q^{(1)}(\Phi_1 \oplus \Phi_2) + \mc Q^{(1)}(\Psi). 
        \end{equation}
    \end{itemize}
\end{corollary}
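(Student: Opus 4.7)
The plan is to reduce everything to Lemma~\ref{lemma:basic} by using distributivity of tensor over direct sums. Under the canonical isomorphism $(\mc H_{A_1}\oplus\mc H_{A_2})\otimes\mc H\cong(\mc H_{A_1}\otimes\mc H)\oplus(\mc H_{A_2}\otimes\mc H)$, the block-diagonal action of $\Phi_1\oplus\Phi_2$ gives, for any channel $\Psi$,
\[
(\Phi_1\oplus\Phi_2)\otimes\Psi\;\cong\;(\Phi_1\otimes\Psi)\oplus(\Phi_2\otimes\Psi).
\]
Iterating, $(\Phi_1\oplus\Phi_2)^{\otimes n}$ decomposes as a direct sum over $2^n$ binary strings of tensor products of $\Phi_1$'s and $\Phi_2$'s, so Lemma~\ref{lemma:basic} gives
\[
\mc Q^{(1)}\bigl((\Phi_1\oplus\Phi_2)^{\otimes n}\bigr)=\max_{0\le\ell\le n}\mc Q^{(1)}\bigl(\Phi_1^{\otimes\ell}\otimes\Phi_2^{\otimes(n-\ell)}\bigr).
\]

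Next I would argue that each summand equals $\ell\,\mc Q^{(1)}(\Phi_1)$. Tensor powers of degradable (resp.\ anti-degradable) channels remain degradable (resp.\ anti-degradable) by tensoring the corresponding (anti-)degrading maps, and any anti-degradable channel has $\mc Q^{(1)}=0$: the data-processing inequality applied to $\mc D\circ\Phi_2^c=\Phi_2$ gives $I_c(\rho,\Phi_2)\le -I_c(\rho,\Phi_2)$, hence $I_c\le 0$, while pure-state inputs realize $I_c=0$. Combined with the strong additivity of coherent information between a degradable and an anti-degradable channel, each summand collapses to $\ell\,\mc Q^{(1)}(\Phi_1)$, and the maximum is attained at $\ell=n$, yielding $\mc Q^{(1)}((\Phi_1\oplus\Phi_2)^{\otimes n})=n\,\mc Q^{(1)}(\Phi_1\oplus\Phi_2)$ and hence the weak-additivity claim $\mc Q(\Phi_1\oplus\Phi_2)=\mc Q^{(1)}(\Phi_1\oplus\Phi_2)$. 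For strong additivity with a degradable channel $\Psi$, distributivity and Lemma~\ref{lemma:basic} reduce the problem to $\max\{\mc Q^{(1)}(\Phi_1\otimes\Psi),\mc Q^{(1)}(\Phi_2\otimes\Psi)\}$: the first equals $\mc Q^{(1)}(\Phi_1)+\mc Q^{(1)}(\Psi)$ by the standard strong additivity for two degradable channels, while the second equals $\mc Q^{(1)}(\Psi)$ by the same degradable-with-anti-degradable argument, so the maximum is $\mc Q^{(1)}(\Phi_1\oplus\Phi_2)+\mc Q^{(1)}(\Psi)$.

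The main obstacle is establishing the strong additivity between a degradable and an anti-degradable channel invoked twice above. Unlike the degradable-with-degradable case, this does not follow from the telescoping argument in the proof of Theorem~\ref{thm:weak degradability}, which required informational degradability of both factors. The standard route is to invoke Smith and Smolin's additivity of the symmetric side-channel-assisted capacity $\mc Q_{ss}$ together with $\mc Q^{(1)}\le\mc Q\le\mc Q_{ss}$; since $\mc Q_{ss}$ equals $\mc Q^{(1)}$ on degradable channels and vanishes on anti-degradable ones, the upper bound $\mc Q^{(1)}(\Phi_1^{\otimes\ell}\otimes\Phi_2^{\otimes(n-\ell)})\le\ell\,\mc Q^{(1)}(\Phi_1)$ matches the superadditive lower bound.
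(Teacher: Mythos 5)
Your proof is correct, but it follows a genuinely different route from the paper's. The paper never decomposes the tensor product: instead it invokes Lemma~\ref{lemma:anti-degradable decomposition} to write the anti-degradable block as $\Phi_2 = \mc P \circ \wt{\Phi}_2$ with $\wt{\Phi}_2$ symmetric, so that $\Phi_1 \oplus \Phi_2$ is \emph{simulated} by the degradable channel $\Phi_1 \oplus \wt{\Phi}_2$; weak domination is checked with Lemma~\ref{lemma:basic}, and both bullets then drop out of Theorem~\ref{thm:additivity via DPI} together with the standard additivity for degradable channels. You instead distribute $\otimes$ over $\oplus$, reduce via Lemma~\ref{lemma:basic} to the mixed products $\Phi_1^{\otimes \ell}\otimes\Phi_2^{\otimes(n-\ell)}$ and $\Phi_i\otimes\Psi$, and dispose of the degradable-times-anti-degradable cross terms with the subadditivity of $\mc Q_{ss}$ from Smith--Smolin--Winter (the paper's reference \cite{smith2008quantumss}; the result you need is theirs, not Smith--Smolin alone). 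Your diagnosis that the telescoping argument of Theorem~\ref{thm:weak degradability} does not cover this cross term is accurate, and the $\mc Q_{ss}$ bound does close it: $\mc Q^{(1)}(\Phi_1^{\otimes\ell}\otimes\Phi_2^{\otimes(n-\ell)}) \le \mc Q_{ss}(\Phi_1^{\otimes\ell}) + \mc Q_{ss}(\Phi_2^{\otimes(n-\ell)}) = \ell\,\mc Q^{(1)}(\Phi_1)$. What each approach buys: yours is more explicit about where the coherent information of $(\Phi_1\oplus\Phi_2)^{\otimes n}$ is attained, but it imports an external additivity theorem; the paper's is shorter, stays entirely within its own Theorem~\ref{thm:additivity via DPI} framework, and generalizes immediately to the other examples in that section. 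The two are morally linked, since the flagged symmetric extension in Lemma~\ref{lemma:anti-degradable decomposition} is precisely the mechanism that forces $\mc Q_{ss}$ to vanish on anti-degradable channels.
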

To prove the corollary, the following well-known fact of anti-degradable channels is useful:
\begin{lemma}\label{lemma:anti-degradable decomposition}
    Suppose $\mc N: \mb B(\mc H_A) \to \mb B(\mc H_B)$ is anti-degradable, then there exists a symmetric channel $\widehat{\mc N}: \mb B(\mc H_A) \to \mb B(\widehat{\mc H_B})$ and a quantum channel $\mc P:\mb B(\widehat{\mc H_B}) \to \mb B(\mc H_B)$ such that 
    \begin{equation}
        \mc N = \mc P \circ\widehat{\mc N}. 
    \end{equation}
\end{lemma}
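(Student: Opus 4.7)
The plan is to build $\widehat{\mathcal N}$ as a ``two-copy cloning extension'' of $\mathcal N$, exploiting the symmetric extension that is guaranteed to exist for anti-degradable channels, and to realise $\mathcal P$ as a partial trace over the extra copy. The starting point is the degrading map supplied by the hypothesis: fix a CPTP map $\mathcal A:\mathbb B(\mathcal H_E)\to\mathbb B(\mathcal H_B)$ with $\mathcal A\circ \mathcal N^c=\mathcal N$. Let $V:\mathcal H_A\to\mathcal H_{B_1}\otimes\mathcal H_E$ be a Stinespring isometry of $\mathcal N$ (with $\mathcal H_{B_1}\cong\mathcal H_B$) and $W:\mathcal H_E\to\mathcal H_{B_2}\otimes\mathcal H_F$ a Stinespring isometry of $\mathcal A$ (with $\mathcal H_{B_2}\cong\mathcal H_B$), and form the composed isometry
\[
   \widetilde V\;:=\;(I_{B_1}\otimes W)\,V\;:\;\mathcal H_A\;\longrightarrow\;\mathcal H_{B_1}\otimes\mathcal H_{B_2}\otimes\mathcal H_F.
\]

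With $\widehat{\mathcal H_B}:=\mathcal H_{B_1}\otimes\mathcal H_{B_2}$ I would define $\widehat{\mathcal N}(\rho):=\operatorname{Tr}_F[\widetilde V\rho\widetilde V^{\dagger}]$ and take the post-processing $\mathcal P:=\operatorname{Tr}_{B_2}$. A short partial-trace computation using $W^{\dagger}W=I_E$ shows $\mathcal P\circ\widehat{\mathcal N}=\mathcal N$, and the analogous trace over $\mathcal H_{B_1}$ (together with $\operatorname{Tr}_F[W(\cdot)W^{\dagger}]=\mathcal A$) shows that the other $\mathcal H_B$-marginal of $\widehat{\mathcal N}(\rho)$ likewise equals $\mathcal N(\rho)=\mathcal A(\mathcal N^c(\rho))$. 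Thus both marginals coincide and the factorization $\mathcal N=\mathcal P\circ\widehat{\mathcal N}$ holds.

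The essential remaining step is to verify that $\widehat{\mathcal N}$ is symmetric, i.e.\ simultaneously degradable and anti-degradable. Degradability is the easier half: because one marginal of $\widehat{\mathcal N}(\rho)$ is $\mathcal N(\rho)$, composing $\operatorname{Tr}_{B_1}$ (or $\operatorname{Tr}_{B_2}$) with a further channel derived from $W$ produces $\widehat{\mathcal N}^c(\rho)=\mathcal A^c(\mathcal N^c(\rho))$. For anti-degradability one uses the additional freedom in choosing $\mathcal A$, together with the standard equivalence between anti-degradability of $\mathcal N$ and $2$-extendability of its Choi state: one can select $\mathcal A$ (equivalently, the extension $\widetilde V$) so that $\widehat{\mathcal N}(\rho)$ is invariant under the swap of $\mathcal H_{B_1}\leftrightarrow\mathcal H_{B_2}$ for every $\rho$. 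Applying Uhlmann's theorem to the pure extension $\widetilde V\ket{\psi}_{RA}$ then yields a unitary on the environment $\mathcal H_F$ implementing this swap symmetry, and this unitary furnishes the CPTP map that recovers $\widehat{\mathcal N}(\rho)$ from $\widehat{\mathcal N}^c(\rho)$.

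The main obstacle is precisely this last point: the construction $\widetilde V=(I_{B_1}\otimes W)V$ for a generic choice of $\mathcal A$ does not automatically yield a swap-symmetric state on $\mathcal H_{B_1}\otimes\mathcal H_{B_2}$, and consequently $\widehat{\mathcal N}^c$ can have a smaller environment than is needed for anti-degradability. Overcoming this requires invoking the symmetric $2$-extension of the Choi state (which exists because $\mathcal N$ is anti-degradable) to pick $\mathcal A$ canonically, and then running the Uhlmann argument above; once this identification is made, the rest of the verification is routine partial-trace bookkeeping.
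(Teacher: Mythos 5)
Your factorization $\mc P\circ\widehat{\mc N}=\mc N$ is fine, but the claim that the two-copy channel $\widehat{\mc N}:A\to B_1B_2$ is \emph{symmetric} (simultaneously degradable and anti-degradable) is where the proof breaks, and the Uhlmann/swap-symmetry patch you sketch does not repair it. Having both marginals equal to $\mc N(\rho)$ --- or even having the output state swap-invariant on $\mc H_{B_1}\otimes\mc H_{B_2}$ for every input --- only gives a covariance property of $\widehat{\mc N}$; it says nothing about the existence of a channel $B_1B_2\to F$ or $F\to B_1B_2$ relating $\widehat{\mc N}$ to its complement. Anti-degradability of $A\to B_1B_2$ would require a symmetric extension of the \emph{three-party} Choi state $\rho_{AB_1B_2}$ over the pair $(B_1B_2)$, which is a strictly stronger (roughly, $4$-extendability) condition than the $2$-extendability guaranteed by anti-degradability of $\mc N$. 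A concrete counterexample: let $\mc N=\mc E_{2,1/2}$ be the $50\%$ erasure channel, so $\mc N^c=\mc N$ and one may take $\mc A=\id$. Then $W$ has trivial environment, $\widetilde V=V$, and $\widehat{\mc N}(\rho)=V\rho V^{\dagger}$ is an \emph{isometric} channel into $B_1B_2$ with capacity $\log d_A>0$; it cannot be anti-degradable, hence is not symmetric, even though both $B$-marginals equal $\mc E_{2,1/2}(\rho)$. Choosing the ``canonical'' swap-symmetric extension instead gives $\rho\mapsto\tfrac12\,\rho\otimes\ketbra{e}{e}+\tfrac12\,\ketbra{e}{e}\otimes\rho$, which by Lemma \ref{lemma:basic} has coherent information $S(\rho)>0$ and is again not anti-degradable. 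Your ``easier half'' is also off: $\Tr_{B_1}\circ\widehat{\mc N}=\mc A\circ\mc N^c$, whereas $\widehat{\mc N}^c=\mc A^c\circ\mc N^c$, and passing from the former to the latter would require $\mc A$ itself to be degradable, which is not given.

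The paper avoids all of this with a flagged mixture rather than a tensor-product extension: $\widehat{\mc N}=\tfrac12\ketbra{0}{0}\otimes\mc N+\tfrac12\ketbra{1}{1}\otimes\mc N^c$, whose complement is the same channel with the flag values interchanged, so the flag-swap is simultaneously a degrading and an anti-degrading map; $\mc P$ then reads the flag and applies either the identity or the anti-degrading map $\mc A$ before discarding the flag. If you want to keep a two-output-copy picture, you would have to build the symmetry into the channel itself (as the flag construction does) rather than hope to extract it from properties of the marginals.
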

\begin{proof}
The construction of the symmetric channel $\widehat{\mc N}$ is given by a flagged channel with equal probability:
\begin{align*}
    \widehat{\mc N} = \frac{1}{2} \ketbra{0}{0} \otimes \mc N + \frac{1}{2} \ketbra{1}{1} \otimes \mc N^c.
\end{align*}
This channel is symmetric, since
\begin{align*}
     \widehat{\mc N}^c = \frac{1}{2} \ketbra{0}{0} \otimes \mc N^c + \frac{1}{2} \ketbra{1}{1} \otimes \mc N,
\end{align*}
and the degrading and anti-degrading map is constructed via swapping the flag. The processing channel $\mc P$ is constructed by recovering $\mc N$ from $\mc N^c$ (implied by the fact that $\mc N$ is anti-degradable) and tracing out the flag. 
\end{proof}

\begin{proof}[Proof of Corollary \ref{cor:direct sum}]
    Via Lemma \ref{lemma:anti-degradable decomposition}, there exist a symmetric channel $\widehat{\Phi_2}$ and another quantum channel $\mc P$ such that $\Phi_2 = \mc P \circ \widehat{\Phi_2}$. Therefore, $\Phi_1 \oplus \Phi_2$ can be simulated by $\Phi_1 \oplus \widehat{\Phi_2}$, i.e., 
    \begin{align*}
        (id \oplus \mc P )\circ (\Phi_1 \oplus \widehat{\Phi_2}) = \Phi_1 \oplus \Phi_2. 
    \end{align*}
    Moreover, $\Phi_1 \oplus \widehat{\Phi_2}$ is weakly dominated by $\Phi_1 \oplus \Phi_2$:
    \begin{align*}
        & \mc Q^{(1)}(\Phi_1 \oplus \widehat{\Phi_2}) = \max\{\mc Q^{(1)}(\Phi_1), \mc Q^{(1)}(\widehat{\Phi_2})\} =  \mc Q^{(1)}(\Phi_1) \le \mc Q^{(1)}(\Phi_1 \oplus \Phi_2),
    \end{align*}
    \added{where the first equality follows from Lemma \ref{lemma:basic}; the second equality follows from $\mc Q^{(1)}(\widehat{\Phi_2}) = 0$ since $\widehat{\Phi_2}$ is symmetric (in particular anti-degradable) and thus has zero capacity; and the inequality follows from the definition of maximal coherent information and choosing the optimized states in the first component of the direct sum. }
    
    Finally, note that the additivity properties for $\Phi_1 \oplus \widehat{\Phi_2}$ follows from the fact that it is a degradable channel. We conclude the proof via Theorem \ref{thm:additivity via DPI}. 
\end{proof}
In the remaining section, we show that the additivity examples discussed in \cite{Gao_2018, chessa2021quantum, chessa2023resonant} can be shown via Theorem \ref{thm:additivity via DPI}. 
\begin{example}\label{example:Gao}
(\cite[Section 4.3]{Gao_2018}) The quantum channel 
    \begin{equation}\label{example:dephasing}
    \begin{aligned}
       & \Phi_{\alpha}\begin{pmatrix}
            a_{00} & a_{01} & a_{02} & a_{03} \\
            a_{10} & a_{11} & a_{12} & a_{13} \\
            a_{20} & a_{21} & a_{22} & a_{23} \\
            a_{30} & a_{31} & a_{32} & a_{33} 
        \end{pmatrix}  = \begin{pmatrix}
            a_{00} + a_{11} & \alpha a_{02} & \alpha a_{13} \\
            \alpha a_{20} & a_{22} & 0 \\
            \alpha a_{31} & 0 & a_{33} 
        \end{pmatrix},\quad |\alpha| \le 1,
    \end{aligned}
\end{equation}
has weak additive coherent information, which is neither degradable nor anti-degradable if $\alpha \in (0,1]$.
\end{example}
Note that in \cite{Gao_2018}, the authors developed an upper bound of $\mc Q(\Phi_{\alpha})$ via comparison to TRO channels. Here, we present a simpler proof using Theorem \ref{thm:additivity via DPI}: 
\begin{proof}
    Denote $\mc D_{\alpha}$ as the qubit dephasing channel defined by 
    \begin{align*}
        \mc D_{\alpha} \begin{pmatrix}
            a_{00} & a_{01} \\
            a_{10} & a_{11} 
        \end{pmatrix} = \begin{pmatrix}
            a_{00} & \alpha a_{01} \\
            \alpha a_{10} & a_{11} 
        \end{pmatrix}.
    \end{align*}
    Via \eqref{def:direct sum channel}, $\mc D_\alpha \oplus \mc D_\alpha$ is given by 
    \begin{align*}
        (\mc D_\alpha \oplus \mc D_\alpha)\begin{pmatrix}
            a_{00} & a_{01} & a_{02} & a_{03} \\
            a_{10} & a_{11} & a_{12} & a_{13} \\
            a_{20} & a_{21} & a_{22} & a_{23} \\
            a_{30} & a_{31} & a_{32} & a_{33} 
        \end{pmatrix} = \begin{pmatrix}
            a_{00} & \alpha a_{01} & 0 & 0 \\
            \alpha a_{10} & a_{11} & 0 & 0 \\
            0  & 0 & a_{22} & \alpha a_{23} \\
            0 & 0 & \alpha a_{32} & a_{33} 
        \end{pmatrix}
    \end{align*}
    Therefore, $\Phi_{\alpha}$ can be simulated by $\mc D_{\alpha} \oplus \mc D_{\alpha}$ in the sense of \eqref{eqn:simulation}, i.e., there exist channels $\mc E,\mc D$ such that $\mc D \circ (\mc D_{\alpha} \oplus \mc D_{\alpha})\circ \mc E = \Phi_\alpha$. To be more specific, $\mc E,\mc D$ are given by 
    \begin{align*}
        & \mc E\begin{pmatrix}
            a_{00} & a_{01} & a_{02} & a_{03} \\
            a_{10} & a_{11} & a_{12} & a_{13} \\
            a_{20} & a_{21} & a_{22} & a_{23} \\
            a_{30} & a_{31} & a_{32} & a_{33} 
        \end{pmatrix} = \begin{pmatrix}
            1 & 0 & 0 & 0 \\
            0 & 0 & 1 & 0 \\
            0 & 1 & 0 & 0 \\
            0 & 0 & 0 & 1 
        \end{pmatrix} \begin{pmatrix}
            a_{00} & a_{01} & a_{02} & a_{03} \\
            a_{10} & a_{11} & a_{12} & a_{13} \\
            a_{20} & a_{21} & a_{22} & a_{23} \\
            a_{30} & a_{31} & a_{32} & a_{33} 
        \end{pmatrix}\begin{pmatrix}
            1 & 0 & 0 & 0 \\
            0 & 0 & 1 & 0 \\
            0 & 1 & 0 & 0 \\
            0 & 0 & 0 & 1 
        \end{pmatrix}=\begin{pmatrix}
            a_{00} & a_{02} & a_{01} & a_{03} \\
            a_{20} & a_{22} & a_{21} & a_{23} \\
            a_{10} & a_{12} & a_{11} & a_{13} \\
            a_{30} & a_{32} & a_{31} & a_{33} 
        \end{pmatrix}, \\
        & \mc D \begin{pmatrix}
            a_{00} & a_{01} & a_{02} & a_{03} \\
            a_{10} & a_{11} & a_{12} & a_{13} \\
            a_{20} & a_{21} & a_{22} & a_{23} \\
            a_{30} & a_{31} & a_{32} & a_{33} 
        \end{pmatrix} = \begin{pmatrix}
            a_{00} + a_{22} & a_{01} & a_{23} \\
            a_{10} & a_{11} & 0 \\
            a_{32} & 0 & a_{33}
        \end{pmatrix}
    \end{align*}
    Via Corollary \ref{cor:direct sum}, $\mc D_{\alpha} \oplus \mc D_{\alpha}$ has weak additive coherent information. Moreover, for $\alpha \in [0,1]$, it is well-known that for dephasing channel,  
    \begin{align*}
        \mc Q^{(1)}(\mc D_{\alpha}\oplus \mc D_\alpha) = \mc Q^{(1)}(\mc D_{\alpha}) = 1 - h\left( \frac{1+\alpha}{2} \right),
    \end{align*}
    where \(h(x)\) is the binary entropy function. By choosing the input state \(\frac{1}{2} \ketbra{0}{0} + \frac{1}{2} \ketbra{2}{2}\) for \(\Phi_{\alpha}\), we have:
\[
\mc Q^{(1)}(\Phi_{\alpha}) \geq 1 - h\left( \frac{1+\alpha}{2} \right) \quad \text{if } \alpha \in [0,1],
\]
showing that $\mc D_{\alpha} \oplus \mc D_{\alpha}$ is weakly dominated by $\Phi_\alpha$. We conclude the proof by applying Theorem \ref{thm:additivity via DPI}.
\end{proof}
In \cite{chessa2021quantum,chessa2023resonant}, several examples of channels are presented whose coherent information is additive. Here, we observe that every such channel that is neither degradable nor anti-degradable arises as a special case of Theorem~\ref{thm:additivity via DPI}. We discuss an example called multi-level amplitude damping channel in \cite{chessa2023resonant}:
\begin{example}
The qutrit channel 
\begin{align*}
    & \mc A_{\gamma_0,\gamma_1}\left(\begin{pmatrix}
            a_{00} & a_{01} & a_{02} \\
            a_{10} & a_{11} & a_{12} \\
            a_{20} & a_{21} & a_{22} 
        \end{pmatrix}\right)  = \begin{pmatrix}
            a_{00}+\gamma_0 a_{22} & a_{01} & \sqrt{\gamma_2}a_{02} \\
            a_{10} & a_{11}+ \gamma_1 a_{22} & \sqrt{\gamma_2} a_{12} \\
            \sqrt{\gamma_2}a_{20} & \sqrt{\gamma_2}a_{21} & \gamma_2 a_{22} 
        \end{pmatrix}
\end{align*}
where $\gamma_2:= 1-\gamma_0 - \gamma_1$ and $0\le \gamma_0,\gamma_1 \le \gamma_0+\gamma_1 \le 1$, is neither degradable nor anti-degradable when $\gamma_0+ \gamma_1 > \frac{1}{2}$, but has weak additive coherent information.
\end{example}
\begin{proof}
    The isometry $U_{\mc A_{\gamma_0,\gamma_1}}: \mc H_A \to \mc H_B \otimes \mc H_E$ generating $\mc A_{\gamma_0,\gamma_1}$ is given by
    \begin{align*}
        & U_{\mc A_{\gamma_0,\gamma_1}} \ket{0} = \ket{0}\otimes \ket{0}, \\
        & U_{\mc A_{\gamma_0,\gamma_1}} \ket{1} = \ket{1}\otimes \ket{0}, \\
        & U_{\mc A_{\gamma_0,\gamma_1}} \ket{2} = \sqrt{\gamma_2}\ket{2}\otimes \ket{0} +\sqrt{\gamma_1}\ket{1}\otimes \ket{1} + \sqrt{\gamma_0}\ket{0}\otimes \ket{2}.
    \end{align*}
    This channel can be intuitively understood as a level-3 amplitude damping channel, such that $\ket{0},\ket{1}$ are fixed, and $\ket{2}$ will be damped to $\ket{0} $ or $ \ket{1}$ with probability $\gamma_0$ or $\gamma_1$. It is straightforward to check that if $\gamma_2 = 1-\gamma_0-\gamma_1 \ge \frac{1}{2}$, the channel is degradable, but is neither degradable nor anti-degradable when $\gamma_2 < \frac{1}{2}$. However, if $\gamma_2 < \frac{1}{2}$, there exists $\gamma_0',\gamma_1'$ with $\gamma_0' + \gamma_1' = \frac{1}{2}$ such that $\mc A_{\gamma_0,\gamma_1}$ can be simulated by $\mc A_{\gamma_0',\gamma_1'}$, see \cite[Appendix A]{chessa2023resonant}. Moreover, since $\mc A_{\gamma_0',\gamma_1'}$ is degradable, it is straightforward to check that $\mc Q(\mc A_{\gamma_0',\gamma_1'}) = \mc Q^{(1)}(\mc A_{\gamma_0',\gamma_1'}) = 1$. Moreover, since the two by two upper left block is perfectly transmitted, $\mc Q^{(1)}(\mc A_{\gamma_0,\gamma_1}) \ge 1 = \mc Q^{(1)}(\mc A_{\gamma_0',\gamma_1'})$. By Theorem \ref{thm:additivity via DPI}, we have $\mc Q(\mc A_{\gamma_0,\gamma_1}) = \mc Q^{(1)}(\mc A_{\gamma_0,\gamma_1}) = 1$
\end{proof}

\section{Analysis of flagged mixture of degradable and anti-degradable channels}\label{sec:flagged channel example}
In this section, we examine a class of flagged channels of the form \eqref{flagged channel:general}. Whenever \(p < 1\), these channels are neither degradable nor anti-degradable, ensuring that the various notions in Definition~\ref{def:weaker degradability} can be distinguished, provided that the nonnegative constants in \eqref{ratio:1}--\eqref{ratio:4} remain strictly positive.

Specifically, we consider flagged mixtures of qubit amplitude-damping channels. Let \(\mathcal{A}_\gamma\), with \(\gamma \in [0,1]\), denote the qubit amplitude-damping channel with the isometry 
\[
U_{\mathcal{A}_{\gamma}}\colon \mathcal{H}_A \to \mathcal{H}_B \otimes \mathcal{H}_E
\]
where \(\mathcal{H}_A = \mathcal{H}_B = \mathcal{H}_E = \mathbb{C}^2\) is given by
\begin{equation*}
\begin{aligned}
    U_{\mathcal{A}_{\gamma}} \ket{0} &= \ket{00}, \\
    U_{\mathcal{A}_{\gamma}} \ket{1} &= \sqrt{1-\gamma}\ket{10} + \sqrt{\gamma}\,\ket{01}.
\end{aligned}
\end{equation*}
We then form the complementary pair of quantum channels \((\Phi_{p,\gamma,\eta}, \Phi_{p,\gamma,\eta}^c)\) as follows:
\begin{equation}\label{eqn:complementary pair of flagged mixture}
\begin{aligned}
    \Phi_{p,\gamma,\eta}(\rho)
    &= (1-p)\,\ketbra{0} \otimes \mathcal{A}_{\gamma}(\rho)
       \;+\; p\,\ketbra{1} \otimes \mathcal{A}_{\eta}(\rho), \\[6pt]
    \Phi_{p,\gamma,\eta}^c(\rho)
    &= (1-p)\,\ketbra{0} \otimes \mathcal{A}_{\gamma}^c(\rho)
       \;+\; p\,\ketbra{1} \otimes \mathcal{A}_{\eta}^c(\rho),
\end{aligned}
\end{equation}
where $p,\gamma,\eta \in [0,1]$.

We first determine the parameter region \((p,\gamma,\eta)\) in which \(\Phi_{p,\gamma,\eta}\) is (non-)degradable and (non-)anti-degradable. Then, for parameter choices that make \(\Phi_{p,\gamma,\eta}\) neither degradable nor anti-degradable, we present evidence that for each \((\gamma,\eta)\) in this region, there exists a threshold \(p^*(\gamma,\eta)\) beyond which \(\Phi_{p,\gamma,\eta}\) becomes either informationally degradable or informationally anti-degradable (as in Definition~\ref{def:weaker degradability}). This threshold is given by Proposition~\ref{main:less noisy}.


\subsection{Degradable and anti-degradable regions}
We briefly illustrate the idea before we present the formal statement. First, when $\gamma, \eta$ are both less than or greater than $\frac{1}{2}$, $\mc A_{\gamma}, \mc A_{\eta}$ are degradable or anti-degradable, then it is well-known that flagged mixture of degradable or anti-degradable channels is again degradable or anti-degradable \cite{smith2008additive}. If one of $\gamma,\eta$ is strictly greater than $\frac{1}{2}$ and the other one is strictly smaller than $\frac{1}{2}$, then it is a flagged mixture of degradable and anti-degradable channels. In this case, a more general sufficient condition is given in \cite{fanizza2020quantum,kianvash2022bounding}, which does not present a full characterization of the non-degradable regions.

In our case, based on the special structure of the channels, we can characterize the full region of degradability or anti-degradability. Outside that region, it is neither degradable nor anti-degradable. The crucial idea is to construct crossing degrading maps as follows:
\begin{figure}[ht]
    \centering\includegraphics[width=.3\textwidth]{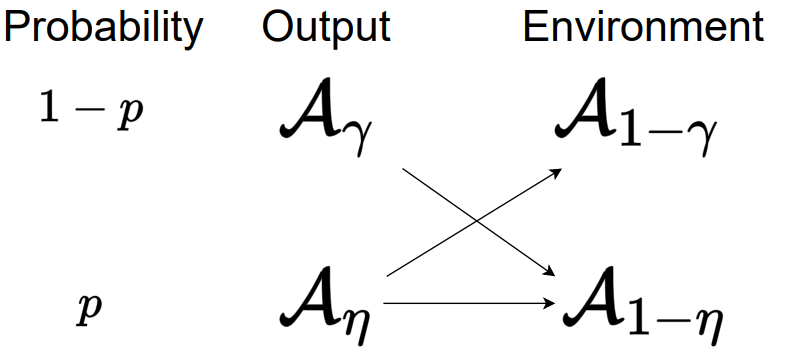}
    \caption{Construction of the degrading map for $p>\frac{1}{2}$, $\eta + \gamma \le 1$ and $\eta \le \frac{1}{2}$.}
    \label{fig:flagged degradable}
\end{figure}
\begin{prop}\label{main:degradable characterization}
    $\Phi_{p,\gamma,\eta}$ is degradable if and only if $(p,\gamma, \eta)$ satisfies one of the following conditions:
    \begin{enumerate}
        \item For $p = \frac{1}{2}$: $\eta + \gamma \le 1$. 
        \item For $p> \frac{1}{2}$: $\eta + \gamma \le 1$ and $\eta \le \frac{1}{2}$.
        \item For $p<\frac{1}{2}$: $\eta + \gamma \le 1$ and $\gamma \le \frac{1}{2}$.
    \end{enumerate}
    $\Phi_{p,\gamma,\eta}$ is anti-degradable if and only if $(p,\gamma, \eta)$ satisfies one of the following conditions:
    \begin{enumerate}
        \item For $p = \frac{1}{2}$: $\eta + \gamma \ge 1$. 
        \item For $p> \frac{1}{2}$: $\eta + \gamma \ge 1$ and $\eta \ge \frac{1}{2}$.
        \item For $p<\frac{1}{2}$: $\eta + \gamma \ge 1$ and $\gamma \ge \frac{1}{2}$.
    \end{enumerate}
\end{prop}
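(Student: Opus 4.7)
The plan is to split the proof into a sufficiency step (explicit construction of degrading maps) and a necessity step (obstruction), using two symmetries to trim the case work. First, since $\Phi^c_{p,\gamma,\eta} = \Phi_{p,1-\gamma,1-\eta}$, the anti-degradability statement is obtained from the degradability statement by substituting $\gamma \mapsto 1-\gamma$ and $\eta \mapsto 1-\eta$, so only the degradability half requires independent proof. Second, relabeling the two flag registers together with $\gamma \leftrightarrow \eta$ interchanges the roles of $p$ and $1-p$, which reduces $p < \frac{1}{2}$ to $p > \frac{1}{2}$. Hence it suffices to treat $p=\frac{1}{2}$ and $p>\frac{1}{2}$.

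For sufficiency, I would exploit the commutative semigroup law $\mc A_{\beta}\circ \mc A_{\gamma} = \mc A_{\gamma+\beta-\gamma\beta}$, which lets $\mc A_\gamma$ be post-processed into $\mc A_{\gamma'}$ for any $\gamma' \ge \gamma$. For $p=\frac{1}{2}$ with $\gamma+\eta\le 1$, take the flag-swap degrader $\ketbra{i}{i}^F \otimes \sigma \mapsto \ketbra{1-i}{1-i}^F \otimes \mc B_i(\sigma)$, choosing amplitude damping maps $\mc B_0, \mc B_1$ so that $\mc B_0 \circ \mc A_\gamma = \mc A_{1-\eta}$ and $\mc B_1 \circ \mc A_\eta = \mc A_{1-\gamma}$; both exist precisely because $\gamma \le 1-\eta$ and $\eta \le 1-\gamma$. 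For $p>\frac{1}{2}$ with $\gamma+\eta\le 1$ and $\eta \le \frac{1}{2}$, the target output flag marginals $(1-p,p)$ force the partial-swap construction shown in Figure \ref{fig:flagged degradable}: on input flag $0$ always swap to output flag $1$ while degrading $\mc A_\gamma \to \mc A_{1-\eta}$; on input flag $1$, with probability $(1-p)/p$ swap to flag $0$ and degrade $\mc A_\eta \to \mc A_{1-\gamma}$, and with probability $(2p-1)/p$ stay on flag $1$ and degrade $\mc A_\eta \to \mc A_{1-\eta}$. All three amplitude-damping degradations exist under the stated constraints, and summing the contributions reproduces $\Phi^c_{p,\gamma,\eta}$ exactly.

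For necessity, I would decompose any candidate CPTP degrader $\mc D$ along the classical input flag and, without loss of generality, dephase its output flag (since the target $\Phi^c$ is flag-diagonal). This yields sub-channels $\Lambda_{ij}\colon \mb B(\mc H_B) \to \mb B(\mc H_E)$ with $\sum_j \Lambda_{ij}$ trace preserving, and the degrading relation splits into
\begin{equation}
(1-p)\,\Lambda_{0j}\circ \mc A_\gamma + p\,\Lambda_{1j}\circ \mc A_\eta = p_j\,\mc A_{1-\gamma_j}, \qquad j\in\{0,1\},
\end{equation}
with $(p_0,p_1)=(1-p,p)$ and $(\gamma_0,\gamma_1)=(\gamma,\eta)$. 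Taking traces pins down the mixing weights $\alpha_{ij}$, and CP positivity forces each nonzero $\Lambda_{ij}$ to be of the form $\alpha_{ij}\,\mc B_{ij}$ for a genuine CPTP $\mc B_{ij}$. Matching the rank-two Choi operator of the target $\mc A_{1-\gamma_j}$ against those of the amplitude-damping compositions $\mc B_{ij}\circ \mc A_{\gamma_i}$ then forces a rigid alignment of the post-processings, from which I would read off the required conditions: $\gamma+\eta \le 1$ (because any realization of $\mc A_\gamma \to \mc A_{1-\eta}$ or of $\mc A_\eta \to \mc A_{1-\gamma}$ is equivalent, for amplitude damping, to $\gamma+\eta\le 1$) and, in the $p>\frac{1}{2}$ case, $\eta \le \frac{1}{2}$ (because the surplus probability $2p-1$ on input flag $1$ cannot be routed to output flag $0$ without violating the marginals, and so must be absorbed through $\mc A_\eta \to \mc A_{1-\eta}$, which is possible only when $\eta \le \frac{1}{2}$).

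The main obstacle is this rigidity step in the necessity direction, namely ruling out every convex combination of post-processings of $\mc A_\gamma$ and $\mc A_\eta$ that could jointly reproduce $\mc A_{1-\gamma_j}$ outside the stated region. Because amplitude-damping channels have an explicit rank-$2$ Choi operator parametrized by a single scalar, I expect a direct Choi-matrix computation to suffice, with the threshold $p=\frac{1}{2}$ entering naturally through the sign of $2p-1$, which dictates whether the $\mc A_\eta$-branch carries surplus mass that must be self-degraded via $\mc A_\eta \to \mc A_{1-\eta}$ or can instead be fully absorbed into the $\mc A_\gamma$-branch.
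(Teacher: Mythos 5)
Your sufficiency argument and your use of the two symmetries (complementation via $\gamma\mapsto 1-\gamma$, $\eta\mapsto 1-\eta$, and flag relabeling to reduce $p<\tfrac12$ to $p>\tfrac12$) match the paper's proof essentially verbatim: the paper builds the same flag-swap / partial-flag-swap degraders out of qubit maps realizing $\mc A_\gamma\to\mc A_{1-\eta}$, $\mc A_\eta\to\mc A_{1-\gamma}$ and (for $p>\tfrac12$) $\mc A_\eta\to\mc A_{1-\eta}$, with the same weights $\tfrac{1-p}{p}$ and $\tfrac{2p-1}{p}$. That half of your proposal is complete.

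The necessity half has a genuine gap. After decomposing the candidate degrader into flag blocks you arrive at the correct system
\begin{equation*}
(1-p)\,\Lambda_{0j}\circ \mc A_\gamma + p\,\Lambda_{1j}\circ \mc A_\eta = p_j\,\mc A_{1-\gamma_j},\qquad j\in\{0,1\},
\end{equation*}
but your next step --- ``taking traces pins down the mixing weights $\alpha_{ij}$, and CP positivity forces each nonzero $\Lambda_{ij}$ to be of the form $\alpha_{ij}\mc B_{ij}$ for a genuine CPTP $\mc B_{ij}$'' --- is false. A completely positive trace-non-increasing map need not be a scalar multiple of a trace-preserving one (take $\Lambda(\rho)=K\rho K^\dagger$ with $K^\dagger K\le I$ not proportional to $I$), and the trace constraint only fixes the combination $(1-p)\Tr\Lambda_{0j}(\mc A_\gamma(\rho))+p\Tr\Lambda_{1j}(\mc A_\eta(\rho))=p_j\Tr\rho$, not the individual traces. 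Since your subsequent ``rigid alignment'' of rank-two Choi operators leans on this normal form, the core of the necessity argument is not established; you correctly identify this as the main obstacle but do not resolve it. The paper handles it differently in two regimes: for $\gamma+\eta>1$ it composes both equations with $\mc A_\eta^{-1}$ (resp.\ $\mc A_\gamma^{-1}$) and derives a contradiction from the fact that the right-hand sides $\mc A_{1-\gamma}\circ\mc A_\eta^{-1}$ and $\mc A_{1-\eta}\circ\mc A_\gamma^{-1}$ are non-positive while the left-hand sides remain completely positive; for $\gamma+\eta\le 1$, $\eta>\tfrac12$, $p>\tfrac12$ it writes out the corner entries of the transfer matrices, converts to Choi operators, and uses positivity of the diagonal entries together with the partial-trace normalization to force $\Lambda_{10}=0$, reducing the equation to $\Lambda_{11}\circ\mc A_\eta=\mc A_{1-\eta}$, which fails for $\eta>\tfrac12$. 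Your sketch would need to be replaced by (or fleshed out into) an argument of this concreteness.
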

\begin{figure*}
    \centering\includegraphics[width=1.0\textwidth]{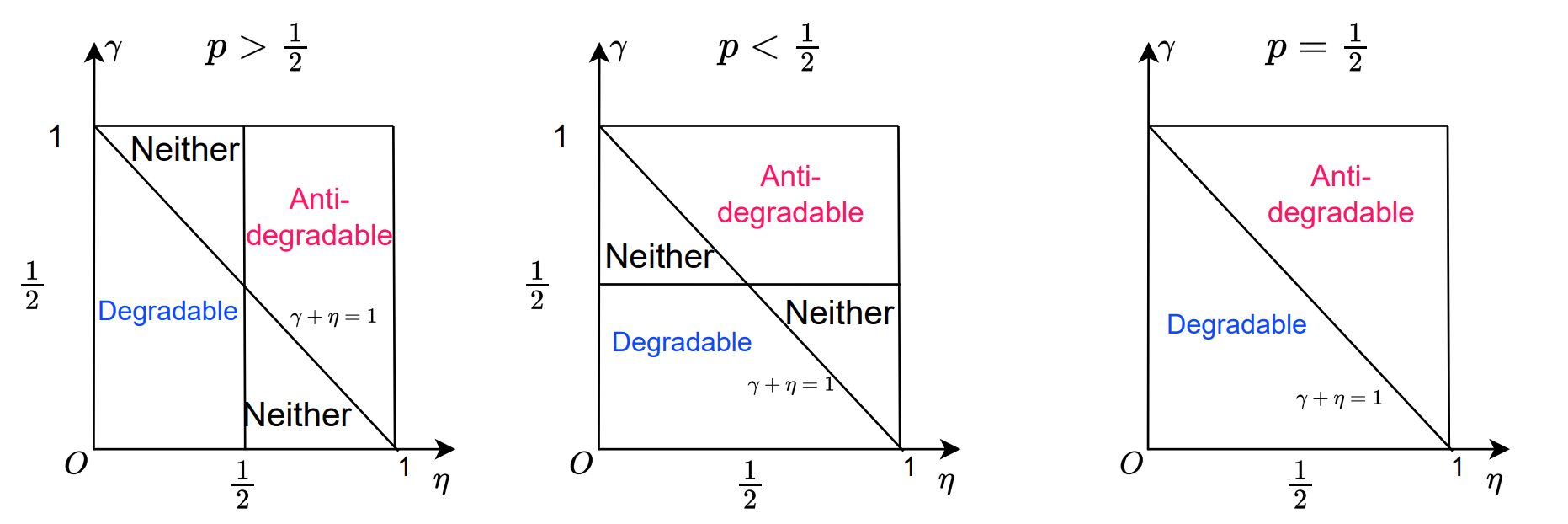}
    \caption{Degradable and anti-degradable regions for $\Phi_{p,\gamma,\eta}$.}
    \label{fig:deg region}
\end{figure*}
\begin{proof}
We only need to prove the degradable case, since by replacing $\gamma$ by $1-\gamma$ and $\eta$ by $1-\eta$, we get the anti-degradable region. Our proof is based on the well-known fact about the inversion and composition of qubit amplitude damping channel \cite{giovannetti2005information}: suppose $0\le \gamma_1.\gamma_2 < 1$, the inverse linear map $\mc A_{\gamma}^{-1}$ is unique and non-positive unless $\gamma = 0$. The explicit calculation of the inversion and composition is calculated as 
\begin{equation}\label{eqn:inversion of AD}
    \begin{aligned}
    & \mc A_{\gamma}^{-1}\begin{pmatrix}
        \rho_{00} & \rho_{01}\\
        \rho_{10} & \rho_{11}
    \end{pmatrix} = \begin{pmatrix}
        \rho_{00} - \frac{\gamma}{1-\gamma}\rho_{11} & \frac{1}{\sqrt{1-\gamma}}\rho_{01}\\
        \frac{1}{\sqrt{1-\gamma}}\rho_{10} & \frac{1}{1-\gamma}\rho_{11}
    \end{pmatrix}, \\
    & \mc A_{\gamma_2}\circ \mc A_{\gamma_1}^{-1}\begin{pmatrix}
        \rho_{00} & \rho_{01}\\
        \rho_{10} & \rho_{11}
    \end{pmatrix} = \begin{pmatrix}
        \rho_{00} + \frac{\gamma_2 - \gamma_1}{1-\gamma_1}\rho_{11} & \frac{\sqrt{1-\gamma_2}}{\sqrt{1-\gamma_1}}\rho_{01}\\
        \frac{\sqrt{1-\gamma_2}}{\sqrt{1-\gamma_1}}\rho_{10} & \frac{1-\gamma_2}{1-\gamma_1}\rho_{11}
    \end{pmatrix}. 
\end{aligned}
\end{equation}
In particular, there exists a CPTP map $\mc D$ such that $\mc D \circ \mc A_{\gamma_1} = \mc A_{\gamma_2}$ if and only if $\mc A_{\gamma_2}\circ \mc A_{\gamma_1}^{-1}$ is CPTP if and only if $\gamma_1 \le \gamma_2$. If $\gamma_1 > \gamma_2$, $\mc A_{\gamma_2}\circ \mc A_{\gamma_1}^{-1}$ is non-positive, i.e., there exists $\sigma_0 \ge 0$ such that $\mc A_{\gamma_2}\circ \mc A_{\gamma_1}^{-1}(\sigma_0)$ has a negative eigenvalue. 

\noindent \textbf{Sufficiency}. We prove degradability by explicitly constructing the degrading map depicted in Figure \ref{fig:flagged degradable}. \\
\textbf{Case 1: $p=\frac{1}{2}$}. Since $\gamma+ \eta \le 1$, we have $\gamma \le 1- \eta$ and $\eta \le 1- \gamma$ then using \eqref{eqn:inversion of AD}, there exist qubit degrading maps $\mc D_1, \mc D_2$ such that 
\begin{equation}
    \mc D_1 \circ \mc A_{\gamma} = \mc A_{1-\eta},\quad \mc D_2 \circ \mc A_{\eta} = \mc A_{1-\gamma}.
\end{equation}
Then using the explicit formula of $(\Phi_{p,\gamma,\eta},  \Phi_{p,\gamma,\eta}^c)$ in \eqref{eqn:complementary pair of flagged mixture}, it is immediate to see that the degrading map $\mc D: \mb B(\mb C^2 \otimes \mb C^2) \to \mb B(\mb C^2 \otimes \mb C^2)$ can be chosen as
\begin{equation}
\begin{aligned}
        \mc D = \mc S_{10} \otimes \mc D_1 + \mc S_{01} \otimes \mc D_2,
\end{aligned}
\end{equation}
where the switch channel $\mc S_{ij}$ is defined by 
\begin{equation}\label{def:switch channel}
    \mc S_{ij}(\rho):= \bra{j}\rho \ket{j} \ketbra{i},\ i,j = 0,1.
\end{equation}
Note that the operational meaning of $\mc S_{ij}$ is to replace the flag $\ket{j}$ by $\ket{i}$. It is straightforward to check $\mc D \circ \Phi_{p,\gamma,\eta} = \Phi_{p,\gamma,\eta}^c$.

\noindent \textbf{Case 2: $1>p>\frac{1}{2}$}. Since $\eta \le \frac{1}{2}$ and $\gamma+ \eta \le 1$, using \eqref{eqn:inversion of AD}, there exist qubit degrading maps $\mc D_1, \mc D_2, \mc D_3$ such that 
\begin{equation}
    \mc D_1 \circ \mc A_{\gamma} = \mc A_{1-\eta},\quad \mc D_2 \circ \mc A_{\eta} = \mc A_{1-\eta},\quad \mc D_3 \circ \mc A_{\eta} = \mc A_{1-\gamma}
\end{equation}
Then one can see that the degrading map defined by 
\begin{equation}
    \mc D = \mc S_{10} \otimes \mc D_1 + \frac{2p - 1}{p} \mc S_{11} \otimes \mc D_2 + \frac{1-p}{p}\mc S_{01} \otimes \mc D_3
\end{equation}
satisfies $\mc D \circ \Phi_{p,\gamma,\eta}(\rho) = \Phi_{p,\gamma,\eta}(\rho)^c$. 

\noindent \textbf{Case 3: $0<p<\frac{1}{2}$}. Since $\gamma \le \frac{1}{2}$ and $\gamma+ \eta \le 1$, using \eqref{eqn:inversion of AD}, there exist qubit degrading maps $\mc D_1, \mc D_2, \mc D_3$ such that 
\begin{equation}
    \mc D_1 \circ \mc A_{\gamma} = \mc A_{1-\eta},\quad \mc D_2 \circ \mc A_{\gamma} = \mc A_{1-\gamma},\quad \mc D_3 \circ \mc A_{\eta} = \mc A_{1-\gamma}
\end{equation}
Then one can see that the degrading map defined by 
\begin{equation}
    \mc D = \frac{p}{1-p}\mc S_{10} \otimes \mc D_1 + \frac{1-2p}{1-p} \mc S_{00} \otimes \mc D_2 + \mc S_{01} \otimes \mc D_3
\end{equation}
satisfies $\mc D \circ \Phi_{p,\gamma,\eta}(\rho) = \Phi_{p,\gamma,\eta}(\rho)^c$. 

\noindent \textbf{Sufficiency}. The proof follows from proof by contradiction. In fact, we conclude the proof by showing that $\Phi_{p,\gamma,\eta}$ is non-degradable if $(p,\gamma, \eta)$ satisfies one of the following conditions:
    \begin{enumerate}
        \item $\eta + \gamma > 1$. 
        \item $p> \frac{1}{2}$: $\eta + \gamma \le 1$ and $\eta > \frac{1}{2}$.
        \item $p<\frac{1}{2}$: $\eta + \gamma \le 1$ and $\gamma > \frac{1}{2}$.
    \end{enumerate}
\textbf{Case 1: $\eta + \gamma > 1$}. Suppose in this case there exists a CPTP degrading map $\mc D$ such that $\mc D \circ \Phi_{p,\gamma,\eta}(\rho) = \Phi_{p,\gamma,\eta}(\rho)^c$. Then using the Kraus representation of $\mc D = \sum_k E_k \cdot E_k^{\dagger}$ where $E_k = \begin{pmatrix}
    E_{00}^k & E_{01}^k \\
    E_{10}^k & E_{11}^k
\end{pmatrix}, E_{ij}^k \in \mb B(\mb C^2)$, we have 
\begin{align*}
   & \sum_k \begin{pmatrix}
    E_{00}^k & E_{01}^k \\
    E_{10}^k & E_{11}^k
\end{pmatrix} \begin{pmatrix}
    (1-p)\mc A_{\gamma} & 0 \\
    0 & p\mc A_{\eta}
\end{pmatrix}\begin{pmatrix}
    (E_{00}^k)^{\dagger} & (E_{10}^k)^{\dagger} \\
    (E_{01}^k)^{\dagger} & (E_{11}^k)^{\dagger}
\end{pmatrix} = \begin{pmatrix}
    (1-p)\mc A_{1-\gamma} & 0 \\
    0 & p\mc A_{1- \eta}
\end{pmatrix}.
\end{align*}
Simplifying the above equation, and denote $\mc D_{ij} = \sum_k E_{ij}^k \cdot (E_{ij}^k)^{\dagger}$ we get 
\begin{equation}\label{eqn:key property of flagged mixture}
    \begin{cases}
        (1-p)\mc D_{00} \circ \mc A_{\gamma} + p \mc D_{01}\circ \mc A_{\eta} = (1-p) \mc A_{1-\gamma}, \\
        (1-p)\mc D_{10} \circ \mc A_{\gamma} + p \mc D_{11}\circ \mc A_{\eta} = p \mc A_{1-\eta}. 
    \end{cases}
\end{equation}
Note that using $\sum_k E_k^{\dagger} E_k = I_4$, $\mc D_{ij}$ are completely positive and trace decreasing such that $\mc D_{00} + \mc D_{01}$ and $\mc D_{10} + \mc D_{11}$ are quantum channels. Using the fact that $\mc A_{1-\gamma}\circ \mc A_{\eta}^{-1}$ is non-positive via $\eta > 1-\gamma$ and similarly $\mc A_{1-\eta}\circ \mc A_{\gamma}^{-1}$ is non-positive, \eqref{eqn:key property of flagged mixture} is given as 
\begin{equation}
    \begin{cases}
        (1-p)\mc D_{00} \circ \mc A_{\gamma}\circ \mc A_{\eta}^{-1} + p \mc D_{01} = (1-p) \mc A_{1-\gamma}\circ \mc A_{\eta}^{-1}, \\
        (1-p)\mc D_{10} + p \mc D_{11}\circ \mc A_{\eta}\circ \mc A_{\gamma}^{-1}  = p \mc A_{1-\eta}\circ \mc A_{\gamma}^{-1}. 
    \end{cases}
\end{equation}
On the left hand side, either $\mc A_{\gamma}\circ \mc A_{\eta}^{-1}$ or $\mc A_{\eta}\circ \mc A_{\gamma}^{-1}$ is completely positive but the right hand side is non-positive and we get a contradiction. Therefore, $\Phi_{p,\gamma,\eta}(\rho)$ is non-degradable.

\noindent \textbf{Case 2: $p> \frac{1}{2}$, $\eta + \gamma \le 1$ and $\eta > \frac{1}{2}$}. In this case, following the same calculation as the previous case, we arrive at \eqref{eqn:key property of flagged mixture}. We conclude the proof by showing that 
\begin{align}\label{eqn:non-degradable case 2}
    (1-p)\mc D_{10} \circ \mc A_{\gamma} + p \mc D_{11}\circ \mc A_{\eta} = p \mc A_{1-\eta}
\end{align}
is not possible, where $\mc D_{10}$ and $\mc D_{11}$ are completely positive and $\mc D_{10} + \mc D_{11}$ is trace-preserving. Denote 
\begin{align*}
    \mc T_{\mc D_{10}} = \begin{pmatrix}
        d_1 & * & * & d_3 \\
        \cdots & & & \cdots \\
        \cdots & & & \cdots \\
        d_2& * & * & d_4
    \end{pmatrix},\quad \mc T_{\mc D_{11}} = \begin{pmatrix}
        \wt d_1 & * & * & \wt d_3 \\
        \cdots & & & \cdots \\
        \cdots & & & \cdots \\
        \wt d_2& * & * & \wt d_4
    \end{pmatrix}.
\end{align*}
Using the relation between Choi–Jamio\l{}kowski opertor and transfer matrix \eqref{eqn:relation Choi and transfer}, the Choi–Jamio\l{}kowski opertors $\mc J_{\mc D_{10}}, \mc J_{\mc D_{11}} \ge 0$ are given by 
\begin{align*}
    \mc J_{\mc D_{10}} = \begin{pmatrix}
        d_1 & * & * & * \\
        * & d_2& * & * \\
        * & * & d_3 & * \\
        * & * & * & d_4
    \end{pmatrix},\  \mc J_{\mc D_{11}} = \begin{pmatrix}
        \wt d_1 & * & * & * \\
        * & \wt d_2 & * & * \\
        * & * & \wt d_3 & * \\
        * & * & * & \wt d_4
    \end{pmatrix}.
\end{align*}
Also note that $\tr_2 (\mc J_{\mc D_{10}}+ \mc J_{\mc D_{11}}) = I_2$, the restriction on $d_i,\wt d_i$ is given by
\begin{equation}
\begin{aligned}
        &d_i, \wt d_i \ge 0,\quad 1\le i \le 4;\\
        &d_1 + \wt d_1 + d_2 + \wt d_2 = 1; \ d_3 + \wt d_3 + d_4 + \wt d_4 = 1.
\end{aligned}
\end{equation}
Using the composition rule for transfer matrix \eqref{eqn:composition of transfer matrix}, we have $(1-p)\mc T_{\mc D_{10}} \mc T_{\mc A_{\gamma}} + p \mc T_{\mc D_{11}} \mc T_{\mc A_{\eta}} = p \mc T_{\mc A_{1-\eta}}$, where the transfer matrix of amplitude damping channel is 
\begin{align*}
    \mc T_{\mc A_{\gamma}} = \begin{pmatrix}
        1 & 0 & 0 & \gamma \\
        0 & \sqrt{1-\gamma} & 0 & 0 \\
        0 & 0 & \sqrt{1-\gamma} & 0 \\
        0 & 0 & 0 & 1- \gamma
    \end{pmatrix},
\end{align*}
and compare the four corner elements, we have 
\begin{align*}
    \begin{cases}
        (1-p)d_1 + p \wt d_1 = p,\\
        (1-p)d_2 + p \wt d_2 = 0, \\
        (1-p)[d_1 \gamma + d_3(1-\gamma)] + p[\wt d_1 \eta + \wt d_3(1-\eta)] = p(1-\eta), \\
        (1-p)[d_2 \gamma + d_4(1-\gamma)] + p[\wt d_2 \eta + \wt d_4(1-\eta)] = p\eta. \\
    \end{cases}
\end{align*}
Using the positivity of $d_i, \wt d_i$, we can conclude by elementary algebra that the only possible solution is $d_i = 0, 1\le i\le 4$ thus $\mc D_{10} = 0$. In fact, it is easy to see that $d_1 = d_2 = \wt d_2 = 0,\ \wt d_1 = 1$. Then the last two equations simplify as 
\begin{align*}
    & (1-p)(1-\gamma)d_3 + p(1-\eta) \wt d_3 = p(1-2\eta), \\
    & (1-p)(1-\gamma)d_4 + p(1-\eta) \wt d_4 = p\eta.
\end{align*}
Taking the sum, we see that the only possible solution is $d_3 + d_4 = 0$ thus $\mc D_{10} = 0$. The equation \eqref{eqn:non-degradable case 2} becomes $\mc D_{11} \circ \mc A_{\eta} = \mc A_{1-\eta}$, which is not possible because $\eta > \frac{1}{2}$.

\noindent \textbf{Case 3: $p< \frac{1}{2}$, $\eta + \gamma \le 1$ and $\gamma > \frac{1}{2}$}. This case follows from the same argument as Case 2. In fact, using \eqref{eqn:key property of flagged mixture}, we can conclude the proof by showing that 
\begin{align*}
    (1-p)\mc D_{00} \circ \mc A_{\gamma} + p \mc D_{01}\circ \mc A_{\eta} = (1-p) \mc A_{1-\gamma}
\end{align*}
is not possible. Then the same calculation in Case 2 holds if we replace $\eta$ by $\gamma$ and $p$ by $1-p$. 
\end{proof}

\subsection{Informationally degradable and anti-degradable regions}
In the previous subsection, we characterize the regions where the channel $\Phi_{p,\gamma,\eta}$ is neither degradable nor anti-degradable. In this subsection, we provide evidence (rigorous proof for special cases and numerical evidence in full generality) that for any $(\gamma,\eta)$ in that region, there exists a threshold $p^*(\gamma,\eta)$ given by Proposition \ref{main:less noisy}, such that when $p$ is above or below the threshold, we have informational degradability or informational anti-degradability introduced in Definition \ref{def:weaker degradability}. \added{To be more specific, we establish the positivity of $R_3$, defined in \eqref{ratio:3}. }

There are four separate regions, see Figure \ref{fig:deg region} and we only consider the region $$p>\frac{1}{2},\ \gamma >\frac{1}{2},\ \eta < \frac{1}{2},\ \gamma + \eta >1,$$ since the other regions are similar. 

Recall that Proposition \ref{main:less noisy} shows that the channel is informational degradable for $p$ greater than a threshold, if $R_3$ defined in \eqref{ratio:3} is positive:
\begin{align*}
    R_3(\mc A_{\eta}^{A \to B_1}, \mc A_{1-\gamma}^{A \to B_2}) = \inf_{\mc H_V} \inf_{\rho_{VA}} \frac{I(V;B_1) - I(V;E_1)}{I(V;B_2) - I(V;E_2)} > 0.
\end{align*}

\subsubsection{Positivity of $R_3$---a necessary condition}
To ensure that $R_3$ is strictly positive, it is necessary to exclude the scenario where $I(V;B_1) = I(V;E_1)$ but $I(V;B_2) > I(V;E_2)$ for some state $\rho_{VA}$. The following proposition establishes that $I(V;B_i) = I(V;E_i)$ if and only if $\rho_{VA} = \rho_V \otimes \rho_A$.
\begin{prop}\label{main:uniqueness of fixed points}
    Suppose $\mc N = \mc N^{B\to E}$ has a unique fixed state, i.e., there exists a unique quantum state $\rho_0$ such that $\mc N(\rho_0) = \rho_0$. Then for any finite-dimensional quantum system $\mc H_V$ and quantum state $\rho_{VB}$, 
    \begin{equation}\label{fixed point}
        \left(id_{\mb B(\mc H_V)} \otimes \mc N\right) (\rho_{VB}) = \rho_{VB}
    \end{equation}
    if and only if $\rho_{VB} = \rho_V \otimes \rho_B$ and $\rho_B = \rho_0$. 
\end{prop}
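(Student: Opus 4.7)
The ``if'' direction is immediate: $(\id \otimes \mc N)(\rho_V \otimes \rho_0) = \rho_V \otimes \mc N(\rho_0) = \rho_V \otimes \rho_0$.

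For the ``only if'' direction, my plan is to test the fixed-point equation against arbitrary rank-one slices of the $V$-system and invoke uniqueness of $\rho_0$. First I would define, for each $\ket{u}\in\mc H_V$, the operator $A(u) := \bra{u}\rho_{VB}\ket{u}\in\mb B(\mc H_B)$. Sandwiching $(\id \otimes \mc N)(\rho_{VB}) = \rho_{VB}$ with $\bra{u}\cdot\ket{u}$ on the $V$ system commutes with the action of $\mc N$ on the $B$ system and so yields $\mc N(A(u)) = A(u)$; i.e., $A(u)$ is a fixed operator of $\mc N$. Moreover $A(u) \ge 0$ by positivity of $\rho_{VB}$, with $\Tr A(u) = \bra{u}\rho_V\ket{u}$. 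When $\bra{u}\rho_V\ket{u} > 0$, the normalization $A(u)/\bra{u}\rho_V\ket{u}$ is a bona fide quantum state fixed by $\mc N$, and uniqueness of $\rho_0$ forces $A(u) = \bra{u}\rho_V\ket{u}\,\rho_0$. When $\bra{u}\rho_V\ket{u} = 0$, the operator $A(u)$ is positive semidefinite with trace zero, hence vanishes, and the same identity still holds. In either case $A(u) = \bra{u}\rho_V\ket{u}\,\rho_0$.

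Next, setting $D := \rho_{VB} - \rho_V \otimes \rho_0$, the previous step gives $\bra{u}D\ket{u} = 0$ in $\mb B(\mc H_B)$ for every $\ket{u}\in\mc H_V$. Viewing $(v,w)\mapsto \bra{v}D\ket{w}$ as a $\mb B(\mc H_B)$-valued sesquilinear form on the complex vector space $\mc H_V$ and applying the polarization identity coefficient-by-coefficient (against any matrix element in $\mc H_B$), one concludes $\bra{v}D\ket{w} = 0$ for all $\ket{v},\ket{w}\in\mc H_V$, whence $D = 0$. This yields $\rho_{VB} = \rho_V \otimes \rho_0$, and in particular $\rho_B = \rho_0$.

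The core observation is that every conditional slice $A(u)$ is itself a (possibly unnormalized) fixed operator of $\mc N$; combined with positivity and the uniqueness assumption, this rigidifies every ``diagonal'' slice completely, after which polarization recovers all off-diagonal data with no additional structural input. The only mild subtlety to handle carefully is the slices with $\bra{u}\rho_V\ket{u} = 0$, where $A(u)$ fails to be a normalized state and one must instead use positive-semidefiniteness together with vanishing trace. Notably, no structure theorem for the full fixed-point space $\mc F(\mc N)$ is required: the hypothesis of a unique fixed state is used in the cleanest possible way, directly on quantum states that arise as conditional slices.
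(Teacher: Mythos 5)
Your proof is correct and follows essentially the same route as the paper's: sandwich the fixed-point equation with rank-one slices of the $V$ system, use positivity and uniqueness of the fixed state to pin each diagonal slice to a multiple of $\rho_0$ (including the trace-zero case), and then recover the off-diagonal blocks. The paper implements the last step by explicitly varying superpositions $\alpha\ket{k}+\beta\ket{l}$, which is just a hands-on version of your polarization argument.
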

The proof of Proposition \ref{main:uniqueness of fixed points} is provided in Appendix \ref{appendix:equality case}. For a degradable amplitude damping channel $\mc A_{\gamma}$ ($\gamma \leq \frac{1}{2}$) with output $B$ and environment $E$, we can verify that $I(V;B) = I(V;E)$ if and only if $\rho_{VA} = \rho_V \otimes \rho_A$ by calculating
\begin{align*}
    & I(V;B) = D(\rho_{VB}\| \rho_V \otimes \rho_B),\ I(V;E) = D(\rho_{VE}\| \rho_V \otimes \rho_E),
\end{align*}
where 
\begin{align*}
    \rho_{VE} = \big(id_{\mb B(\mc H_V)} \otimes \mc A_{\frac{1-2\gamma}{1-\gamma}}\big)(\rho_{VB}).
\end{align*}
By the data processing inequality, we have $I(V;B) \geq I(V;E)$, with equality ($I(V;B) = I(V;E)$) if and only if \cite{uhlmann1977relative, nielsen2004simple, petz1986quasi}
\begin{equation}\label{recovery channel}
    \mc R_{id_{\mb B(\mc H_V)}\otimes \mc A_{\frac{1-2\gamma}{1-\gamma}}, \rho_V\otimes \rho_B}
    \big((id_{\mb B(\mc H_V)}\otimes \mc A_{\frac{1-2\gamma}{1-\gamma}})(\rho_{VB})\big) = \rho_{VB},
\end{equation}
where the recovery map $\mc R_{\mc A, \sigma}$ for a given quantum channel $\mc A$ and quantum state $\sigma$ is defined as
\begin{equation}
    \mc R_{\mc A, \sigma}(X) := \sigma^{1/2} \mc A^*\big(\mc A(\sigma)^{-1/2} X \mc A(\sigma)^{-1/2} \big)\sigma^{1/2},
\end{equation}
where $\mc A^{*}$ is the dual map of $\mc A$, defined by $\Tr(\mc A(\rho)X) = \Tr(\rho \mc A^*(X)), \forall \rho, X$. Here the channel is only defined when $\text{supp}(X) \subseteq \text{supp}(\mc A(\sigma))$. 

It is straightforward to verify that the composition of the recovery map and the amplitude damping channel results in a channel with the form
\begin{equation}\label{eqn:recove amplitude damping}
    id_{\mb B(\mc H_V)}\otimes \mc N,
\end{equation}
where $\mc N$ has a unique fixed point; the details are presented in Appendix \ref{appendix:equality case}. Consequently, Proposition \ref{main:uniqueness of fixed points} implies that for a degradable amplitude damping channel, $I(V;B) = I(V;E)$ if and only if $\rho_{VA} = \rho_V \otimes \rho_A$.

\subsubsection{Complete version of contraction and expansion coefficient}
\added{We propose a framework that offers a sufficient condition ensuring the positivity of \(R_3\).} Let \(\mc N\) and \(\mc M\) be two quantum channels with isometries given by 
\begin{align*}
    U_{\mc N}: \mc H_A \to \mc H_{B_1} \otimes \mc H_{E_1}, \quad U_{\mc M}: \mc H_A \to \mc H_{B_2} \otimes \mc H_{E_2}.
\end{align*}
We define the \textit{complete} contraction and expansion coefficients of \((\mc N, \mc M)\) as follows:
\begin{align}\label{eqn:complete contraction and expansion}
    \eta_{\mc N,\mc M}^{cb} &= \sup_{\rho_{VA}} \frac{I(V;B_1)}{I(V;B_2)}, \quad 
    \widecheck{\eta}_{\mc N,\mc M}^{cb} = \inf_{\rho_{VA}} \frac{I(V;B_1)}{I(V;B_2)},
\end{align}
where the optimization is performed over all valid states \(\rho_{VA}\) on \(\mc H_V \otimes \mc H_A\).

\added{The following conjecture is a sufficient condition for positivity of $R_3$}:
\begin{conj}\label{conjecture}
    Suppose \(0 < \gamma_2 < \gamma_1 < 1\). Then, for amplitude damping channels \(\mc A_{\gamma_1}\) and \(\mc A_{\gamma_2}\), the following holds:
    \begin{equation}
        \eta_{\mc A_{\gamma_1},\mc A_{\gamma_2}}^{cb} < 1, \quad \widecheck{\eta}_{\mc A_{\gamma_1},\mc A_{\gamma_2}}^{cb} > 0.
    \end{equation}
\end{conj}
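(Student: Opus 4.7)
The plan is to address the two inequalities separately, although both start from the same observation. Since $\gamma_2<\gamma_1$, formula \eqref{eqn:inversion of AD} guarantees that $\mc A_{\gamma_1}\circ\mc A_{\gamma_2}^{-1}$ is CPTP; hence there exists a quantum channel $\mc D$ with $\mc A_{\gamma_1}=\mc D\circ \mc A_{\gamma_2}$, and ordinary DPI already gives $I(V;B_1)\le I(V;B_2)$, so $\eta_{\mc A_{\gamma_1},\mc A_{\gamma_2}}^{cb}\le 1$ and $\widecheck{\eta}_{\mc A_{\gamma_1},\mc A_{\gamma_2}}^{cb}\ge 0$. The work is to promote these to strict inequalities.

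For the upper bound $\eta^{cb}<1$, I would first reduce to a bounded reference dimension: $I(V;B)$ depends only on the reduced state $\rho_{VA}$, so by purifying $A$ one can take $\dim V\le d_A=2$ without loss of generality, restricting the optimization to a compact set. On this set, the equality case of the monotonicity of relative entropy \cite{petz1986quasi}, together with Proposition \ref{main:uniqueness of fixed points} and the recovery-channel identity \eqref{recovery channel}, forces $I(V;B_1)=I(V;B_2)$ only when $\rho_{VA}=\rho_V\otimes \rho_A$, where both quantities vanish and the ratio is indeterminate. To handle the behavior near this product-state locus, I would perform a second-order expansion $\rho_{VA}=\rho_V\otimes\rho_A+\varepsilon X$; the leading-order term of $I(V;B_i)$ is a chi-squared-type quadratic form in $X$ determined by the transfer matrix \eqref{def:transfer matrix}, and an explicit computation on the two-dimensional amplitude damping channel shows that the ratio of the two quadratic forms, which coincides with the complete chi-squared contraction coefficient of $\mc D$, is bounded strictly below $1$. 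Combining this local estimate with compactness away from product states yields a uniform bound.

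For the lower bound $\widecheck{\eta}^{cb}>0$, the task is a reverse data-processing estimate. Here the key structural observation is that $\mc D$ in the decomposition $\mc A_{\gamma_1}=\mc D\circ \mc A_{\gamma_2}$ is itself an amplitude-damping-type channel with effective parameter $(\gamma_1-\gamma_2)/(1-\gamma_2)<1$ by \eqref{eqn:inversion of AD}, so $\mc D$ is not entanglement-breaking and sends the full-rank state $\rho_V\otimes\rho_{B_2}$ to a full-rank output. Writing the ratio as $D(\mc D(\sigma)\|\mc D(\tau))/D(\sigma\|\tau)$ with $\sigma=\rho_{VB_2}$ and $\tau=\rho_V\otimes\rho_{B_2}$, the plan is to invoke a reverse-Pinsker-style inequality bounding $D(\mc D(\sigma)\|\mc D(\tau))$ below by $\|\mc D(\sigma)-\mc D(\tau)\|_1^2$ (with a constant depending on the smallest eigenvalue of $\mc D(\tau)$), and then to exploit the bounded invertibility of $\mc D$ on its image to relate this to $\|\sigma-\tau\|_1^2$ and ultimately to $D(\sigma\|\tau)$ via a local second-order lower bound.

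The hard part is the \emph{complete} nature of the coefficients: both numerator and denominator collapse at product states, so a naive min/max over a compact set does not separate the extremes from the degenerate values $0$ and $1$. The correct strategy is the dimension-reduction-plus-local-expansion scheme outlined above, combined with uniform control of the chi-squared quadratic forms across all choices of $\rho_V$, perturbation direction $X$, and reference dimension. This is precisely the program pursued in the companion work \cite{BGSW24}, where a reverse-type data processing inequality for qubit channels is developed specifically to supply the quantitative lower bound needed for $\widecheck{\eta}^{cb}$, and in parallel provides the matching strict upper bound for $\eta^{cb}$.
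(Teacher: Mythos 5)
This statement is a \emph{conjecture}: the paper explicitly does not prove it and states ``We cannot rigorously prove this conjecture in full generality, and we leave it as an open question,'' offering only a perturbative computation for one family of states, numerical evidence, and a reference to \cite{BGSW24} for the \emph{classical-quantum restricted} case. Your proposal is therefore not matching a proof in the paper, and as a standalone argument it has several gaps that would not close. First, the dimension reduction ``by purifying $A$ one can take $\dim V\le d_A=2$'' is unjustified: $I(V;B)$ depends on the full joint state $\rho_{VA}$, not merely on $\rho_A$, and the entire difficulty of the \emph{complete} coefficients in \eqref{eqn:complete contraction and expansion} is that the ancilla dimension is unbounded. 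The paper points to exactly this obstacle (dimension bounds \`a la \cite{beigi2013dimension, hirche2020alphabet}) as the reason the conjecture is open. Without that reduction you have no compactness, and the ``compactness away from product states'' step collapses.

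Second, the local analysis is wrong in the regime that actually matters. Your plan assumes the leading behavior of $I(V;B_i)$ near the product-state locus is a chi-squared quadratic form in the perturbation $X$, but the paper's own computation shows that near the rank-deficient product state $\ketbra{11}{11}_{VA}$ one gets $I(V;B)\sim -\tfrac{1-\gamma}{4\gamma}\,\varepsilon^2\log\varepsilon$, i.e.\ a logarithmic singularity rather than a pure quadratic; the quadratic (chi-squared) expansion is only valid when the base state is full rank. The numerics in the paper indicate the infimum defining $\widecheck{\eta}^{cb}$ is approached precisely at such rank-deficient states, so the quadratic-form comparison misses the extremizers. The same issue defeats the reverse-Pinsker chain for the lower bound: the constants in $D(\sigma\|\tau)\le C\|\sigma-\tau\|_1^2$ depend on $\lambda_{\min}(\tau)$ and blow up as $\tau=\rho_V\otimes\rho_{B_2}$ degenerates, which again is exactly where the ratio is critical (and where $D\sim\varepsilon^2|\log\varepsilon|$ is not comparable to $\|\sigma-\tau\|_1^2\sim\varepsilon^2$). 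Finally, deferring to \cite{BGSW24} does not rescue the argument: that work, per the paper, establishes the bound only for classical-quantum inputs ($R_4$), and the fully entangled case is precisely what remains open.
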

Note that \(\widecheck{\eta}_{\mc A_{\gamma_1},\mc A_{\gamma_2}}^{cb} > 0\) can be interpreted as a complete version of the \textit{reverse-type data processing inequality}. Specifically, since there exists a quantum channel \(\mc D\) such that \(\mc A_{\gamma_1} = \mc D \circ \mc A_{\gamma_2}\) given by \eqref{eqn:inversion of AD}, we are seeking a universal constant \(c > 0\) such that 
\begin{align*}
    & D\left( \mc D \circ \mc A_{\gamma_2}(\rho_{VA})\| \mc D \circ \mc A_{\gamma_2}(\rho_V \otimes \rho_A)\right) \geq c \, D\left(\mc A_{\gamma_2}(\rho_{VA})\| \mc A_{\gamma_2}(\rho_V \otimes \rho_A)\right).
\end{align*}
If the quantum state \(\rho_{VA}\) is restricted to classical-quantum states, the conjecture has been proven in \cite[Proposition 5.8, Lemma 6.3]{BGSW24}. However, for the fully quantum case, the entanglement between the ancillary system and the input system introduces significant challenges. Another possible route when $\gamma_1,\gamma_2$ are close is to use stability trick studied in \cite{cubitt2015stability, junge2023stability, junge2022stability}.

Here, we present a special case to illustrate why \(\widecheck{\eta}_{\mc A_{\gamma_1},\mc A_{\gamma_2}}^{cb} > 0\) can still hold in the presence of entanglement. By Proposition \ref{main:uniqueness of fixed points}, we know that \(I(V;B)\) approaches zero when \(\rho_{VA}\) is close to a product state due to the continuity of the relative entropy. To explore the infinitesimal behavior, we consider the following entangled operators parametrized by \(\varepsilon\), which are constructed by perturbing a product state with an entangled state $\rho_{VA}(\varepsilon) = \rho_V \otimes \rho_A + \varepsilon \sigma_{VA}$, where \(\sigma_{VA}\) is a traceless Hermitian operator encoding the entanglement, and \(\varepsilon \ll 1\). Now we choose a special product state and entangled operator, resulting in the following construction:
\begin{equation}
\begin{aligned}
    & \rho_{VA}(\varepsilon)= (1-\varepsilon)\ketbra{1}{1}_V \otimes \ketbra{1}{1}_A + \frac{\varepsilon}{2} \left(\ketbra{01}{01}_{VA}  + \ketbra{10}{10}_{VA} + \ketbra{01}{10}_{VA} + \ketbra{10}{01}_{VA}\right) 
    = \begin{pmatrix}
        0 & 0 & 0 & 0\\
        0 & \frac{\varepsilon}{2} & \frac{\varepsilon}{2} & 0 \\
        0 &  \frac{\varepsilon}{2} & \frac{\varepsilon}{2} & 0 \\
        0 & 0 & 0 & 1-\varepsilon
    \end{pmatrix}.
\end{aligned}
\end{equation}
Then, denote $\rho_{VB}(\varepsilon) = (id_{\mb B(\mc H_V)} \otimes \mc A_\gamma)(\rho_{VA}(\varepsilon))$, it is straightforward to calculate that
\begin{align*}
    &\rho_{VB}(\varepsilon)  = (1-\varepsilon)\ketbra{1}{1}_V \otimes (\gamma \ketbra{0}{0}_A + (1-\gamma)\ketbra{1}{1}_A )  + \frac{\varepsilon}{2} \left(\ketbra{0}{0}_V \otimes (\gamma \ketbra{0}{0}_A + (1-\gamma)\ketbra{1}{1}_A ) + \ketbra{10}{10}_{VA} \right) \\
    & + \frac{\varepsilon}{2} \sqrt{1-\gamma}\left(\ketbra{0}{1}_V \otimes \ketbra{1}{0}_A + \ketbra{1}{0}_V \otimes \ketbra{0}{1}_A \right) = \begin{pmatrix}
        \frac{\varepsilon}{2} \gamma & 0 & 0 & 0\\
        0 & \frac{\varepsilon}{2}(1-\gamma) & \frac{\varepsilon}{2}\sqrt{1-\gamma} & 0 \\
        0 &  \frac{\varepsilon}{2}\sqrt{1-\gamma} & \frac{\varepsilon}{2} + (1-\varepsilon)\gamma & 0 \\
        0 & 0 & 0 & (1-\varepsilon)(1-\gamma)
    \end{pmatrix}.
\end{align*}
Using the formula for eigenvalues of two by two matrices, and $(1+x)^{\frac{1}{2}} = 1 + \frac{1}{2}x - \frac{1}{8}x^2 + \mc O(x^3)$, we can expand the eigenvalues to second order in \(\varepsilon\):
\begin{align*}
   & \lambda_{1}(\varepsilon) = \frac{\varepsilon}{2}\gamma,
 \\
& \lambda_{2}(\varepsilon) = \frac{\gamma - \frac{3\gamma}{2} \varepsilon + \varepsilon + \gamma \sqrt{1 - \varepsilon + \frac{(\gamma-2)^2}{4\gamma^2} \varepsilon^2}}{2} = \gamma + \frac{1-2\gamma}{2}\varepsilon + \frac{1-\gamma}{4\gamma}\varepsilon^2 + \mc O(\varepsilon^3), \\
& \lambda_{3}(\varepsilon) = \frac{\gamma - \frac{3\gamma}{2} \varepsilon + \varepsilon - \gamma \sqrt{1 - \varepsilon + \frac{(\gamma-2)^2}{4\gamma^2} \varepsilon^2}}{2} = \frac{1-\gamma}{2}\varepsilon - \frac{1-\gamma}{4\gamma}\varepsilon^2 + \mc O(\varepsilon^3),  \\
& \lambda_{4}(\varepsilon)= (1-\varepsilon)(1-\gamma).
\end{align*}
The reduced states are given by
\begin{align*}
    &\rho_V(\varepsilon) =  \frac{\varepsilon}{2} \ketbra{0}{0} +  (1-\frac{\varepsilon}{2})\ketbra{1}{1}, \\
    &\rho_B(\varepsilon) = (\frac{\varepsilon}{2} + \gamma(1-\frac{\varepsilon}{2})) \ketbra{0}{0} + (1-\frac{\varepsilon}{2})(1-\gamma) \ketbra{1}{1}.
\end{align*}
Using the fact 
\begin{equation}\label{eq:1}
  \begin{aligned}
    & (a+\delta) \log (a+\delta)= a \log a + (\log a + 1)\delta + \frac{1}{2a} \delta^2 + O(\delta^3),\quad a>0,\ \delta \to 0,
\end{aligned}  
\end{equation}
the mutual information is calculated as follows:
\begin{align*}
    & I(V;B) = S(V) + S(B) - S(BV) = h(\frac{\varepsilon}{2})+ h((1-\frac{\varepsilon}{2})(1-\gamma)) + \sum_{i=1}^4 \lambda_i(\varepsilon) \log \lambda_i(\varepsilon)\\
    & = -\frac{\varepsilon}{2}\log \frac{\varepsilon}{2} - (1-\frac{\varepsilon}{2})\log(1-\frac{\varepsilon}{2}) - \big((1-\frac{\varepsilon}{2})(1-\gamma)\big)\log \big((1-\frac{\varepsilon}{2})(1-\gamma)\big)  - \big(\frac{\varepsilon}{2} + \gamma(1-\frac{\varepsilon}{2})\big)\log \big(\frac{\varepsilon}{2} + \gamma(1-\frac{\varepsilon}{2})\big)  \\
    & + \frac{\varepsilon}{2}\gamma \log (\frac{\varepsilon}{2}\gamma)  + \big(\gamma + \frac{1-2\gamma}{2}\varepsilon + \frac{1-\gamma}{4\gamma}\varepsilon^2 \big)\log\big(\gamma + \frac{1-2\gamma}{2}\varepsilon + \frac{1-\gamma}{4\gamma}\varepsilon^2 \big)  + \big(\frac{1-\gamma}{2}\varepsilon - \frac{1-\gamma}{4\gamma}\varepsilon^2 \big) \log \big(\frac{1-\gamma}{2}\varepsilon - \frac{1-\gamma}{4\gamma}\varepsilon^2 \big) \\
    & + \big((1-\varepsilon)(1-\gamma)\big) \log \big((1-\varepsilon)(1-\gamma)\big) + \mc O(\varepsilon^3) =: C_0 + C_1 \varepsilon + C_2'\varepsilon^2 \log \varepsilon + C_2 \varepsilon^2 + \mc O(\varepsilon^3).
\end{align*}
Using \eqref{eq:1}, we can show that $C_0 = C_1 = 0$. Thus the asymtotic behavior is determined by $C_2'$. Note that $C_2'$ originates from 
\begin{align*}
    \big(\frac{1-\gamma}{2}\varepsilon - \frac{1-\gamma}{4\gamma}\varepsilon^2 \big) \log \big(\frac{1-\gamma}{2}\varepsilon - \frac{1-\gamma}{4\gamma}\varepsilon^2 \big)
\end{align*}
and is given by $-\frac{1-\gamma}{4\gamma}$. Thus for this choice of entangled states, 
\begin{equation}
   I(V;B) \sim -\frac{1-\gamma}{4\gamma} \varepsilon^2 \log \varepsilon.
\end{equation}
Therefore, for two parameters \(0 < \gamma_2 < \gamma_1 < 1\), the ratio of mutual information is lower bounded using the special family of entangled states that are close to the product state \(\ketbra{11}{11}_{VA}\). Numerical results strongly suggest that the lower bound is achieved near this product state, leading to the conjectured closed formula:
\begin{equation}
    \widecheck{\eta}_{\mc A_{\gamma_1},\mc A_{\gamma_2}}^{cb} = \inf_{\rho_{VA}} \frac{I(V;B_1)}{I(V;B_2)} = \frac{\gamma_2(1-\gamma_1)}{\gamma_1(1-\gamma_2)}.
\end{equation}
We cannot rigorously prove this conjecture in full generality, and we leave it as an open question.

Finally, assuming Conjecture \ref{conjecture} holds, we can show that \(R_3(\mc A_{\eta}^{A \to B_1}, \mc A_{1-\gamma}^{A \to B_2}) > 0\) when \(\gamma > \frac{1}{2},\ \eta < \frac{1}{2},\ \gamma + \eta > 1\). Specifically, we have
\begin{align*}
    \frac{I(V;B_1) - I(V;E_1)}{I(V;B_2) - I(V;E_2)} 
    &= \frac{I(V;B_1)}{I(V;B_2)} \left(\frac{1 - \frac{I(V;E_1)}{I(V;B_1)}}{1 - \frac{I(V;E_2)}{I(V;B_2)}} \right) \geq \frac{I(V;B_1)}{I(V;B_2)} \left(1 - \frac{I(V;E_1)}{I(V;B_1)}\right) \geq \widecheck{\eta}_{\mc A_{\eta},\mc A_{1-\gamma}}^{cb} \big(1 - \eta_{\mc A_{1-\eta},\mc A_{\eta}}^{cb}\big) > 0.
\end{align*}

Therefore, for any \(\gamma, \eta\) satisfying \(\gamma > \frac{1}{2},\ \eta < \frac{1}{2},\ \gamma + \eta > 1\), and using Proposition \ref{main:less noisy} and Proposition \ref{main:degradable characterization}, the quantum channel \(\Phi_{p,\gamma,\eta}\) is informationally degradable for \(p \geq \frac{1}{1 + R_3(\mc A_{\eta}, \mc A_{1-\gamma})}\), but it is neither degradable nor anti-degradable.

\begin{remark}
    In \cite{sutter2017approximate,fanizza2020quantum, kianvash2022bounding, wang2021pursuing}, an upper bound on the quantum capacity based on approximate degradability and flagged extension is given. Here our example shows that approximate degradability for flagged channels will not always give us a good bound on the capacity, since additivity may still hold even if it is non-degradable.
\end{remark}
\section{Additivity and non-additivity properties for generalized Platypus channels}\label{sec:Platypus}
In this section, we introduce a two-parameter quantum channel \(\mc N_{s,t}\) with \(0 \leq s, t \leq 1\) and \(s + t \leq 1\), which generalizes the Platypus channels introduced in \cite{leditzky2023platypus}. This class of channels exhibits rich additivity and non-additivity properties. In the parameter region where the channel is neither degradable nor anti-degradable, we show the following:
\begin{itemize}
    \item \textbf{Weak and strong additivity for special cases}:  
    If \(s = 0\) or \(s + t = 1\), the channels exhibit weak additivity of coherent information and strong additivity of coherent information when paired with degradable channels. This result follows from Theorem \ref{thm:additivity via DPI}.

    \item \textbf{Failure of strong additivity}:  
    If \(s > 0\) and \(s + t < 1\), strong additivity with degradable channels fails. In fact, within this parameter range, we observe phenomena such as super-activation and amplification effects when tensoring with erasure channels (and some other degradable channels). Depending on the specific parameter region, the arguments for strong non-additivity vary. These include the Smith-Yard argument introduced in \cite{smith2008quantum} and the log-singularity argument developed in \cite{siddhu2021entropic}.
\end{itemize}
\added{A summary of the main results in this section is presented as follows:}
\begin{figure}[H]
    \centering\includegraphics[width=1\textwidth]{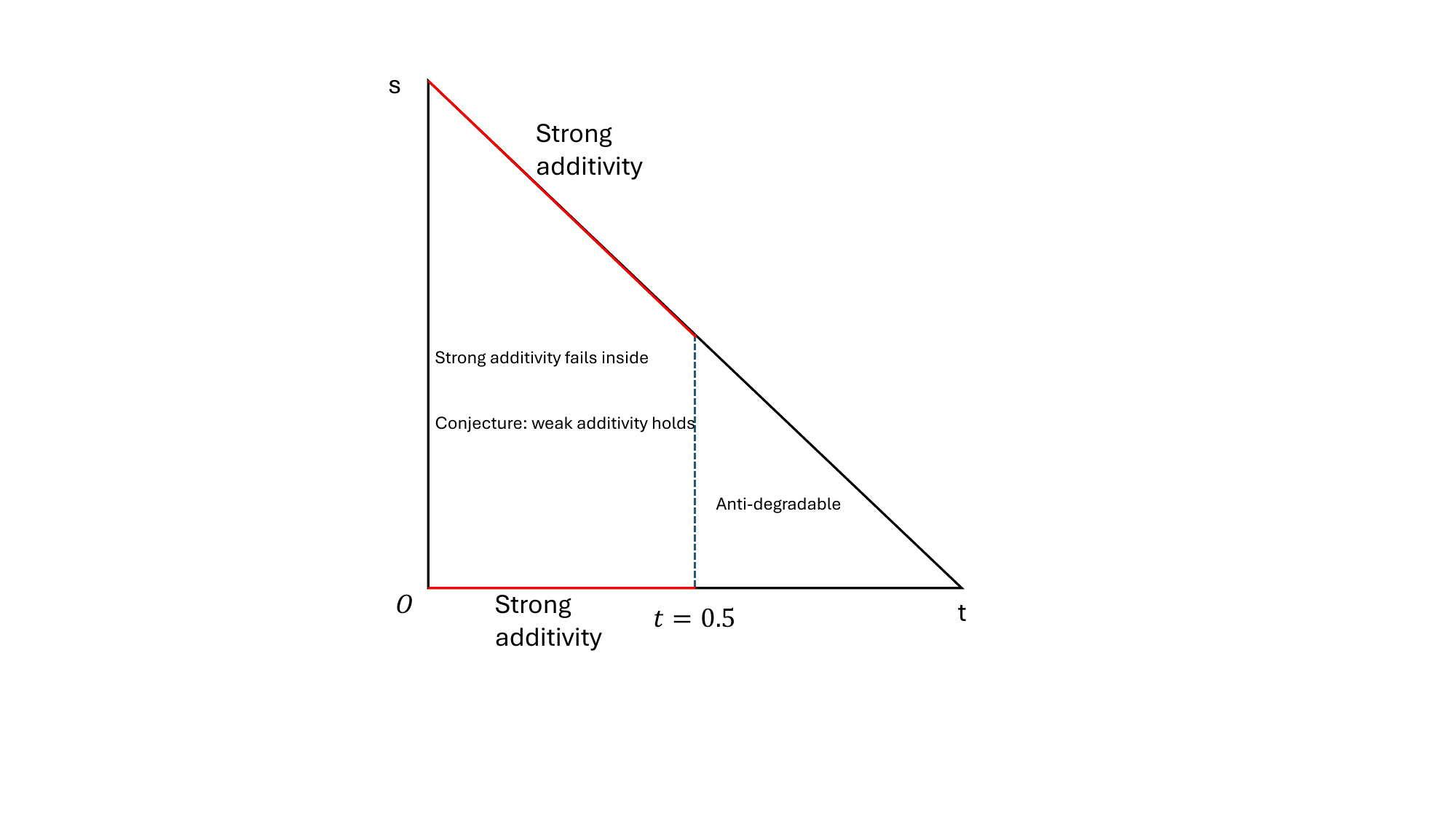}
    \caption{Additivity and non-additivity properties for $\mc N_{s,t}$.}
\label{figure:summary Platypus}
\end{figure}

\subsection{Basic properties of generalized Platypus channels}
Consider an isometry $F_{s,t}: \mc H_A \to \mc H_B \otimes \mc H_E$ with $\rm{dim}\mc H_A = \rm{dim}\mc H_B = \rm{dim}\mc H_E = 3$ of the form:
\begin{equation}\label{eqn:Generalized Platypus}
    \begin{aligned}
        & F_{s,t} \ket{0}  = \sqrt{s} \ket{0}\otimes \ket{0} + \sqrt{1-s-t} \ket{1}\otimes \ket{1}+ \sqrt{t} \ket{2} \otimes \ket{2}, \\
        & F_{s,t} \ket{1} = \ket{2} \otimes \ket{0}, \\
        & F_{s,t} \ket{2} = \ket{2} \otimes \ket{1},
    \end{aligned}
\end{equation}
where $0\le s,t \le 1$ with $s+t \le 1$. We denote the complementary pair as $(\mc N_{s,t}, \mc N_{s,t}^c)$ with $\mc N_{s,t}(\rho) := \Tr_{E}(F_{s,t} \rho F_{s,t}^{\dagger}),\ \mc N^c_{s,t}(\rho) := \Tr_B(F_{s,t} \rho F_{s,t}^{\dagger})$. In the matrix form, for $\rho = \sum_{i,j=0}^2 \rho_{ij} \ketbra{i}{j}$, we have 
\begin{equation}\label{eqn:matrix form of platypus}
\begin{aligned}
    & \mc N_{s,t}(\rho) = \begin{pmatrix}
        s \rho_{00} & 0 & \sqrt{s} \rho_{01} \\
        0 & (1-s-t)\rho_{00} & \sqrt{1-s-t} \rho_{02} \\
        \sqrt{s} \rho_{10} & \sqrt{1-s-t}\rho_{20} & t\rho_{00} + \rho_{11} + \rho_{22}
    \end{pmatrix},   \\
    & \mc N^c_{s,t}(\rho) = \begin{pmatrix}
        s \rho_{00} + \rho_{11} & \rho_{12} & \sqrt{t} \rho_{10} \\
        \rho_{21} & (1-s-t)\rho_{00}+ \rho_{22} & \sqrt{t} \rho_{20} \\
        \sqrt{t} \rho_{01} & \sqrt{t}\rho_{02} & t\rho_{00} 
    \end{pmatrix}.
\end{aligned}
\end{equation}
In terms of Kraus representation, we have $\mc N_{s,t}(\rho) = \sum_{k=0}^2 E_k \rho E_k^{\dagger},\ \mc N_{s,t}^c(\rho) = \sum_{k=0}^2 \wt E_k \rho  \wt E_k^{\dagger}$, where 
\begin{align*}
    &E_0 = \sqrt{s} \ketbra{0}{0} + \ketbra{2}{1},\ E_1 = \sqrt{1-s-t} \ketbra{1}{0} + \ketbra{2}{2},\ E_2 = \sqrt{t} \ketbra{2}{0},
\end{align*}
and 
\begin{align*}
    &\wt E_0 = \sqrt{s} \ketbra{0}{0},\ \wt E_1 = \sqrt{1-s-t} \ketbra{1}{0},\ \wt E_2 = \sqrt{t} \ketbra{2}{0} + \ketbra{0}{1} + \ketbra{1}{2}.
\end{align*}
In terms of transfer matrix, arranging the order of basis as $\{\ket{00}, \ket{01}, \ket{02}, \ket{10}, \ket{11}, \ket{12}, \ket{20}, \ket{21}, \ket{22}\}$, we have 
\begin{align*}
    \mc T_{\mc N_{s,t}} = \begin{pmatrix}
            s & 0 & 0 & 0 & 0 & 0 & 0 & 0 & 0  \\
            0 & 0 & 0 & 0 & 0 & 0 & 0 & 0 & 0  \\
            0 & \sqrt{s} & 0 & 0 & 0 & 0 & 0 & 0 & 0  \\
            0 & 0 & 0 & 0 & 0 & 0 & 0 & 0 & 0  \\
            u_{s,t} & 0 & 0 & 0 & 0 & 0 & 0 & 0 & 0  \\
            0 & 0 & \sqrt{u_{s,t}} & 0 & 0 & 0 & 0 & 0 & 0  \\
            0 & 0 & 0 & \sqrt{s} & 0 & 0 & 0 & 0 & 0  \\
            0 & 0 & 0 & 0 & 0 & 0 & \sqrt{u_{s,t}} & 0 & 0  \\
            t & 0 & 0 & 0 & 1 & 0 & 0 & 0 & 1  \\
        \end{pmatrix},\ 
    \mc T_{\mc N^c_{s,t}} = \begin{pmatrix}
            s & 0 & 0 & 0 & 1 & 0 & 0 & 0 & 0  \\
            0 & 0 & 0 & 0 & 0 & 1 & 0 & 0 & 0  \\
            0 & 0 & 0 & \sqrt{t} & 0 & 0 & 0 & 0 & 0  \\
            0 & 0 & 0 & 0 & 0 & 0 & 0 & 1 & 0  \\
            u_{s,t} & 0 & 0 & 0 & 0 & 0 & 0 & 0 & 1  \\
            0 & 0 & 0 & 0 & 0 & 0 & \sqrt{t} & 0 & 0  \\
            0 & \sqrt{t} & 0 & 0 & 0 & 0 & 0 & 0 & 0  \\
            0 & 0 & \sqrt{t} & 0 & 0 & 0 & 0 & 0 & 0  \\
            t & 0 & 0 & 0 & 0 & 0 & 0 & 0 & 0  \\
        \end{pmatrix}.
\end{align*}
Here we denote $u_{s,t} = 1-s-t$. It is straightforward to see that there is no matrix $D$ such that $D \mc T_{\mc N_{s,t}} = \mc T_{\mc N^c_{s,t}}$, since the sixth and eighth column of $\mc T_{\mc N_{s,t}}$ is zero but $\mc T_{\mc N^c_{s,t}}$ has non-zero sixth and eighth column. Therefore, for any $s,t$ the channel is not degradable via composition rule for transfer matrix. 

To see antidegrability, assume there is a superoperator $\mc D: \mb B(\mc H_E) \to \mb B(\mc H_B)$ such that $\mc D \circ \mc N^c_{s,t} = \mc N_{s,t}$, then using the composition rule of transfer matrix, we have $\mc T_{\mc D}\mc T_{\mc N^c_{s,t}} = \mc T_{\mc N_{s,t}}$. Moreover, when $t>0$, $\mc T_{\mc N^c_{s,t}}$ is invertible thus $\mc T_{\mc D} = \mc T_{\mc N_{s,t}} \mc T_{\mc N^c_{s,t}}^{-1}$ and we only need to determine whether it generates a CPTP map. It is straightforward to calculate $\mc T_{\mc N^c_{s,t}}^{-1}$. Then calculating matrix multiplication and using the relation between transfer matrix and Choi–Jamio\l{}kowski operator \eqref{eqn:relation Choi and transfer}, we have 
\begin{equation}\label{eqn:Transfer to Choi}
\begin{aligned}
    \mc J_{\mc D} = \begin{pmatrix}
            0 & 0 & 0 & 0 & 0 & 0 & 0 & 0 & 0  \\
            0 & 0 & 0 & 0 & 0 & 0 & 0 & 0 & 0  \\
            0 & 0 & 1 & 0 & 0 & 0 & \frac{\sqrt{s}}{\sqrt{t}} & 0 & 0  \\
            0 & 0 & 0 & 0 & 0 & 0 & 0 & 0 & 0  \\
            0 & 0 & 0 & 0 & 0 & 0 & 0 & 0 & 0  \\
            0 & 0 & 0 & 0 & 0 & 1 & 0 & \frac{\sqrt{u_{s,t}}}{\sqrt{t}} & 0  \\
            0 & 0 & \frac{\sqrt{s}}{\sqrt{t}} & 0 & 0 & 0 & \frac{s}{t} & 0 & 0  \\
            0 & 0 & 0 & 0 & 0 & \frac{\sqrt{u_{s,t}}}{\sqrt{t}} & 0 & \frac{u_{s,t}}{t} & 0  \\
            0 & 0 & 0 & 0 & 0 & 0 & 0 & 0 & \frac{2t-1}{t}  \\
        \end{pmatrix}.
\end{aligned}
\end{equation}
Note that $\mc D$ is a CPTP map if and only if $\mc J_{\mc D}$ is positive and $\Tr_2(\mc J_{\mc D}) = I$ if and only if $\frac{2t-1}{t}\ge 0\iff t\ge \frac{1}{2}$. In summary, we show the following:
\begin{prop}
    If $t\ge \frac{1}{2}$, $\mc N_{s,t}$ is anti-degradable; if $t < \frac{1}{2}$, $\mc N_{s,t}$ is neither degradable nor anti-degradable.
\end{prop}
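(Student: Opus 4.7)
The plan is to treat anti-degradability separately, since the paper has already shown via the zero-column argument on $\mc T_{\mc N_{s,t}}$ that $\mc N_{s,t}$ fails to be degradable for every $(s,t)$. I would split the analysis into the generic case $t>0$, where the candidate degrading map is uniquely determined, and the boundary case $t=0$, where the transfer-matrix inversion breaks down.

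For $t>0$, the invertibility of $\mc T_{\mc N^c_{s,t}}$ forces the unique linear $\mc D$ satisfying $\mc D\circ \mc N^c_{s,t} = \mc N_{s,t}$ to have Choi operator $\mc J_{\mc D}$ as displayed in \eqref{eqn:Transfer to Choi}, so anti-degradability reduces to checking that this specific $\mc D$ is CPTP. I would first observe that after permuting rows and columns of $\mc J_{\mc D}$ according to the grouping $\{\ket{02},\ket{20}\}$, $\{\ket{12},\ket{21}\}$, $\{\ket{22}\}$ (all remaining basis vectors contributing zero rows and columns), the matrix becomes block-diagonal with two $2\times 2$ blocks $\begin{pmatrix} 1 & \sqrt{s/t} \\ \sqrt{s/t} & s/t \end{pmatrix}$ and $\begin{pmatrix} 1 & \sqrt{u_{s,t}/t} \\ \sqrt{u_{s,t}/t} & u_{s,t}/t \end{pmatrix}$, plus a scalar block $\tfrac{2t-1}{t}$. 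Each $2\times 2$ block has the rank-one form $vv^{\dagger}$ and is automatically PSD, so positivity of $\mc J_{\mc D}$ reduces to $\tfrac{2t-1}{t}\ge 0$, i.e.\ $t\ge \tfrac{1}{2}$. The trace-preserving condition $\Tr_2(\mc J_{\mc D})=I$ is then a direct diagonal-sum check: the three diagonal sums $0+0+1$, $0+0+1$, and $\tfrac{s+u_{s,t}+(2t-1)}{t} = 1$ give $I_E$, while off-diagonal partial traces collapse to zero.

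For the boundary $t=0$ I would give a direct structural argument rather than trying to invert a singular transfer matrix. In this case $\mc N^c_{s,0}(\ketbra{1}{1}) = \ketbra{0}{0}_E$ and $\mc N^c_{s,0}(\ketbra{2}{2}) = \ketbra{1}{1}_E$, whereas $\mc N_{s,0}(\ketbra{1}{1}) = \mc N_{s,0}(\ketbra{2}{2}) = \ketbra{2}{2}_B$. Any hypothetical degrading $\mc D$ would therefore be forced by linearity to map $\ketbra{0}{0}_E \mapsto \ketbra{2}{2}_B$ and $\ketbra{1}{1}_E \mapsto \ketbra{2}{2}_B$, whence $\mc D\bigl(\mc N^c_{s,0}(\ketbra{0}{0})\bigr) = \mc D\bigl(s\ketbra{0}{0}_E + (1-s)\ketbra{1}{1}_E\bigr) = \ketbra{2}{2}_B$, contradicting $\mc N_{s,0}(\ketbra{0}{0}) = s\ketbra{0}{0}_B + (1-s)\ketbra{1}{1}_B$ for any $s\in(0,1)$. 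The edge values $s=0,1$ reduce to simpler channels that can be handled by direct inspection.

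The only real obstacle is bookkeeping in the block decomposition of \eqref{eqn:Transfer to Choi}: identifying the correct permutation of the ordered basis $\{\ket{00},\ldots,\ket{22}\}$ that exposes the block-diagonal structure, and tracking the rank-one factorizations so that the PSD check becomes transparent. Once that decomposition is in hand, both directions of the equivalence for $t>0$ and the separate boundary case assemble immediately, giving the claimed dichotomy.
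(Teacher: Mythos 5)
Your proof is correct and follows essentially the same route as the paper: non-degradability via the zero-column observation on $\mc T_{\mc N_{s,t}}$, and anti-degradability by inverting $\mc T_{\mc N^c_{s,t}}$ to obtain the unique candidate degrading map whose Choi operator \eqref{eqn:Transfer to Choi} is PSD precisely when $\frac{2t-1}{t}\ge 0$; your explicit block decomposition into two rank-one $2\times 2$ blocks plus the scalar $\frac{2t-1}{t}$ just makes the paper's positivity claim transparent. Your separate direct argument for $t=0$ is a worthwhile supplement, since the paper's inversion argument formally requires $t>0$ and leaves that boundary case implicit.
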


\subsection{Subchannels and optimizer of coherent information}
For the quantum channel $\mc N_{s,t}$ where $0<t<1/2, s>0$ and $s+t < 1$, the coherent information given by
\begin{align*}
    \mc Q^{(1)}(\mc N_{s,t}) = \max_{\rho} S(\mc N_{s,t}(\rho)) - S(\mc N^c_{s,t}(\rho))
\end{align*}
is achieved at 
\begin{equation}\label{eqn:optimizer of generalized platypus}
    \begin{cases}
       u \ketbra{0}{0} + (1-u) \ketbra{2}{2},\quad 1-s-t \ge s \\
       u \ketbra{0}{0} + (1-u) \ketbra{1}{1},\quad 1-s-t \le s \\
    \end{cases}
\end{equation}
A proof of the above fact is given in Appendix \ref{app:optimizer}. Since the maximal coherent information is given on the subspace, we introduce the subchannel, or restriction channel, denoted as $\widehat{\mc N}_{s,t}$ defined by
\begin{equation}\label{def:restriction channel}
    \widehat{\mc N}_{s,t} = \begin{cases}
        \mc N_{s,t}\big|_{\text{span}\{\ket{0}, \ket{1}\}},\ s \ge \frac{1-t}{2}, \\
        \mc N_{s,t}\big|_{\text{span}\{\ket{0}, \ket{2}\}},\ s \le \frac{1-t}{2}.
    \end{cases}
\end{equation}
Recalling Definition~\ref{def:weak_dom}, one sees that \(\mc N_{s,t}\) is \emph{weakly dominated} by its subchannel \(\widehat{\mc N}_{s,t}\). This observation is especially helpful for analyzing the behavior of \(\mc N_{s,t}\) in the regime where it is neither degradable nor anti-degradable, since it allows us to focus on a lower-dimensional restriction that captures the essential behavior of the coherent information. The subchannel has the following property:
\begin{prop}
    $\widehat{\mc N}_{s,t}$ defined by \eqref{def:restriction channel} is either degradable or anti-degradable. 
\end{prop}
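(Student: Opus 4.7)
The plan is to argue by an explicit construction, splitting along the two branches of \eqref{def:restriction channel} and, within each branch, further dividing by a single parameter inequality. In each of the four resulting subregions I will exhibit a pair of Kraus operators implementing either a degrading or an anti-degrading map.

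First I would compute, from \eqref{eqn:matrix form of platypus}, the action of $\widehat{\mc N}_{s,t}$ and $\widehat{\mc N}_{s,t}^c$ on qubit inputs. The crucial structural fact in Case 1 ($s\ge (1-t)/2$, inputs on $\mathrm{span}\{\ket{0},\ket{1}\}$) is that both outputs decompose on $\mc H_B$ as a classical flag $\ket{1}$-sector plus a 2D $\{\ket{0},\ket{2}\}$-sector carrying all of the quantum coherence, with weight $\sqrt{s}$ on the $B$-side and weight $\sqrt{t}$ on the $E$-side. This asymmetry suggests that when $s\ge t$ a degrading map exists, and when $s\le t$ an anti-degrading map exists. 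Concretely, I would take one Kraus operator to be a diagonal dampener on $\ket{0}$ combined with the identity on $\ket{1}$, and the other to be the swap $\ketbra{2}{0}$ plus a matched $\ketbra{0}{2}$-component; the coefficients are then fixed by the twin requirements $\sum_k K_k^\dagger K_k=I$ and the correct rescaling of the off-diagonal coherence from $\sqrt{s}$ to $\sqrt{t}$ (or the reverse).

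The analysis of Case 2 ($s\le (1-t)/2$, inputs on $\mathrm{span}\{\ket{0},\ket{2}\}$) is parallel: the coherence now lives on the pair $\{\ket{1},\ket{2}\}$ with weights $\sqrt{1-s-t}$ and $\sqrt{t}$, so the subdivision becomes $s+2t\le 1$ (degradable) versus $s+2t\ge 1$ (anti-degradable), and the Kraus construction follows the same recipe up to relabeling which basis vectors act as flag and as the coherent pair. In every subregion, the output of $\widehat{\mc N}_{s,t}$ has a $2\oplus 1$ block-diagonal form, so the (anti-)degrading map can be assembled from two "qubit-level" Kraus operators that act trivially on the flag sector.

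The main obstacle, and the step that forces the subregion split, is choosing the Kraus operators so that they are simultaneously completely positive (which demands a nonnegative argument under every square root, yielding exactly the inequalities $s\ge t$, $s\le t$, $s+2t\le 1$, $s+2t\ge 1$ that define the subcases) and trace-preserving (which pins down the diagonal weights). Once these two constraints are balanced, verifying $\mc D\circ\widehat{\mc N}_{s,t} = \widehat{\mc N}_{s,t}^c$ (or its anti-degrading counterpart) is a short calculation on the five linearly independent basis elements $\ketbra{0},\ketbra{1},\ketbra{2},\ketbra{0}{2},\ketbra{2}{0}$ that span the image of $\widehat{\mc N}_{s,t}$, so that freedom of $\mc D$ on the remaining off-diagonal directions plays no role.
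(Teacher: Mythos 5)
Your proposal is correct and takes essentially the same route as the paper: in each branch of \eqref{def:restriction channel} you exhibit an explicit (anti-)degrading map acting trivially on the flag sector and as a damped swap on the coherent pair, with exactly the same subregion splits ($s \ge t$ versus $s \le t$ in the first branch, $s+2t \le 1$ versus $s+2t \ge 1$ in the second). The only difference is presentational --- the paper encodes the same two-Kraus-operator construction via transfer matrices $\mc T_{\mc D}$ and $\mc T_{\wt{\mc D}}$ rather than writing the Kraus operators directly.
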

\begin{proof}
\textbf{Case I: $s \ge 1-s-t$.} In the matricial form, $ \widehat{\mc N}_{s,t}$ is given by
\begin{align*}
     & \widehat{\mc N}_{s,t}\begin{pmatrix}
         \rho_{00} & \rho_{01} \\
         \rho_{10} & \rho_{11}
     \end{pmatrix} = \begin{pmatrix}
        s \rho_{00} & 0 & \sqrt{s} \rho_{01} \\
        0 & (1-s-t)\rho_{00} & 0 \\
        \sqrt{s} \rho_{10} & 0 & t\rho_{00} + \rho_{11}
    \end{pmatrix},   \\
    & \widehat{\mc N}^c_{s,t}\begin{pmatrix}
         \rho_{00} & \rho_{01} \\
         \rho_{10} & \rho_{11}
     \end{pmatrix} = \begin{pmatrix}
        s \rho_{00} + \rho_{11} & 0& \sqrt{t} \rho_{10} \\
        0 & (1-s-t)\rho_{00} & 0 \\
        \sqrt{t} \rho_{01} & 0 & t\rho_{00} 
    \end{pmatrix}.
\end{align*}
In terms of Kraus representation, we have $\widehat{\mc N}_{s,t}(\rho) = \sum_{k=0}^3 E^{Res}_k \rho (E^{Res}_k)^{\dagger},\ \mc N_{s,t}^c(\rho) = \sum_{k=0}^3 \wt E^{Res}_k \rho  (\wt E^{Res}_k)^{\dagger}$, where 
\begin{align*}
    &E^{Res}_0 = \sqrt{s} \ketbra{0}{0} + \ketbra{2}{1},\ E^{Res}_1 = \sqrt{1-s-t} \ketbra{1}{0},\ E^{Res}_2 = \sqrt{t} \ketbra{2}{0}
\end{align*}
and
\begin{align*}
    &\wt E^{Res}_0 = \sqrt{s} \ketbra{0}{0},\ \wt E^{Res}_1 = \sqrt{1-s-t} \ketbra{1}{0},\ \wt E^{Res}_2 = \sqrt{t} \ketbra{2}{0} + \ketbra{0}{1}.
\end{align*}
In terms of transfer matrix, we have 
    \begin{align*}
        & \mc T_{\widehat{\mc N}_{s,t}} = \begin{pmatrix}
            s & 0 & 0 & 0  \\
            0 & 0 & 0 & 0 \\
            0 & \sqrt{s} & 0 & 0  \\
            0 & 0 & 0 & 0  \\
            1-s-t & 0 & 0 & 0   \\
            0 & 0 & 0 & 0  \\
            0 & 0 & \sqrt{s} & 0  \\
            0 & 0 & 0 & 0  \\
            t & 0 & 0 & 1   \\
        \end{pmatrix}, \
    \mc T_{\widehat{\mc N}^c_{s,t}} = \begin{pmatrix}
            s & 0 & 0 & 1  \\
            0 & 0 & 0 & 0 \\
            0 & 0 & \sqrt{t} & 0  \\
            0 & 0 & 0 & 0  \\
            1-s-t & 0 & 0 & 0   \\
            0 & 0 & 0 & 0  \\
            0 & \sqrt{t} & 0 & 0  \\
            0 & 0 & 0 & 0  \\
            t & 0 & 0 & 0   \\
        \end{pmatrix}.
\end{align*}
It is straightforward to see $\mc T_{\mc D} \mc T_{\widehat{\mc N}_{s,t}} = \mc T_{\widehat{\mc N}^c_{s,t}}$, where
\begin{align*}
\mc T_{\mc D} = \begin{pmatrix}
            1- \frac{t}{s} & 0 & 0 & 0 & 0 & 0 & 0 & 0 & 1  \\
            0 & 0 & 0 & 0 & 0 & 0 & 0 & 0 & 0  \\
            0 & 0 & 0 & 0 & 0 & 0 & \frac{\sqrt{t}}{\sqrt{s}} & 0 & 0  \\
            0 & 0 & 0 & 0 & 0 & 0 & 0 & 0 & 0  \\
            0 & 0 & 0 & 0 & 1 & 0 & 0 & 0 & 0  \\
            0 & 0 & 0 & 0 & 0 & 0 & 0 & 0 & 0  \\
            0 & 0 & \frac{\sqrt{t}}{\sqrt{s}} & 0 & 0 & 0 & 0 & 0 & 0  \\
            0 & 0 & 0 & 0 & 0 & 0 & 0 & 0 & 0  \\
            \frac{t}{s} & 0 & 0 & 0 & 0 & 0 & 0 & 0 & 0  \\
        \end{pmatrix}.
\end{align*}
Note that if $t \le s$, $\mc T_{\mc D}$ induces a quantum channel, which implies $\widehat{\mc N}_{s,t}$ is degradable. Similarly, we have $\mc T_{\wt {\mc D}} \mc T_{\widehat{\mc N}^c_{s,t}} = \mc T_{\widehat{\mc N}_{s,t}}$, where
\begin{align*}
\mc T_{\wt {\mc D}} = \begin{pmatrix}
            0 & 0 & 0 & 0 & 0 & 0 & 0 & 0 & \frac{s}{t}  \\
            0 & 0 & 0 & 0 & 0 & 0 & 0 & 0 & 0  \\
            0 & 0 & 0 & 0 & 0 & 0 & \frac{\sqrt{s}}{\sqrt{t}} & 0 & 0  \\
            0 & 0 & 0 & 0 & 0 & 0 & 0 & 0 & 0  \\
            0 & 0 & 0 & 0 & 1 & 0 & 0 & 0 & 0  \\
            0 & 0 & 0 & 0 & 0 & 0 & 0 & 0 & 0  \\
            0 & 0 & \frac{\sqrt{s}}{\sqrt{t}} & 0 & 0 & 0 & 0 & 0 & 0  \\
            0 & 0 & 0 & 0 & 0 & 0 & 0 & 0 & 0  \\
            1 & 0 & 0 & 0 & 0 & 0 & 0 & 0 & 1 - \frac{s}{t} \\
        \end{pmatrix}.
\end{align*}
Note that if $s \le t$, $\mc T_{\wt {\mc D}}$ induces a quantum channel, which implies $\widehat{\mc N}_{s,t}$ is anti-degradable. 

\textbf{Case II: $s \le 1-s-t$.} 
Via exactly the same argument as \textbf{Case I}, we see that if $t \le 1-s-t$, $ \widehat{\mc N}_{s,t}$ is degradable and, if $t \ge 1-s-t$, $ \widehat{\mc N}_{s,t}$ is anti-degradable. 
\end{proof}
The degradable and anti-degradable regions are summarized in Figure \ref{figure:deg region platypus}:

\begin{figure}[H]
\centering
\begin{tikzpicture}[scale=6]
    \draw[->] (0,0) -- (1,0) node[right] {$t$}; 
    \draw[->] (0,0) -- (0,1) node[above] {$s$}; 
    
    \draw[dotted] (0.5,0) -- (0.5,1) node[above right] {$t = 0.5$};
    
    \fill[blue!20, opacity=0.5] (0.33,0.33) -- (0.5,0.5) -- (0.5,0) -- cycle; 
    \fill[red!20, opacity=0.5] (0,0) -- (0,1) -- (0.5,0) -- cycle;
    \fill[red!20, opacity=0.5] (0.33,0.33) -- (0,1) -- (0.5,0.5) -- cycle;
    \draw[scale=1, domain=0:0.5, smooth, variable=\x, black, dotted] plot ({\x}, {\x}); 
    \draw[scale=1, domain=0:0.5, smooth, variable=\x, black, dotted] plot ({\x}, {1 - 2*\x}); 
    \draw[scale=1, domain=0:0.5, smooth, variable=\x, black, dotted] plot ({\x}, {1 - \x}); 

    \node[blue] at (0.7,0.25) {Blue region: antidegradable};
    \node[red] at (0.3,0.6) {Red region: degradable};
\end{tikzpicture}
\caption{Degradability and antidegradability regions for $\widehat{\mc N}_{s,t}$.}
\label{figure:deg region platypus}
\end{figure}

It helps us plot the maximal coherent information of $\mc N_{s,t}$ in the region where it is neither degradable nor anti-degradable:
\begin{figure}[ht]
    \centering\includegraphics[width=.5\textwidth]{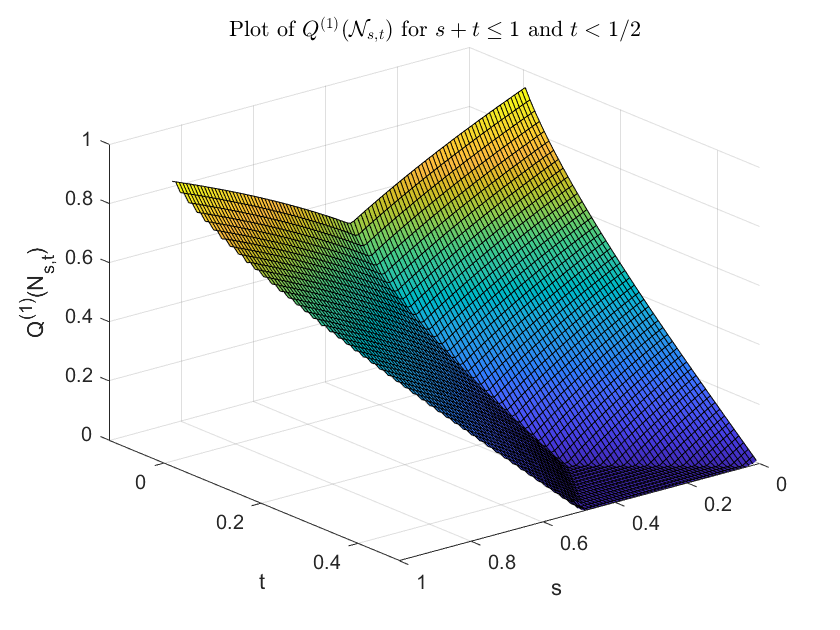}
    \caption{Plot of the coherent information of $\mc N_{s,t}$}
    \label{fig:Plot of Coherent information}
\end{figure}

\begin{remark}
    Note that our argument also establishes \emph{conjugate non-degradability} \cite{bradler2010conjugate}, since our channel maps a real matrix to a real matrix. By restricting to real matrices, the conjugation of the complementary channel is identical to the original complementary channel. Consequently, there exists no degrading map to the conjugation of the complementary channel.
\end{remark}
\begin{remark}
    It is easy to see that the channel $\mc N_{s,t}$ and $\widehat{\mc N}_{s,t}$ are not PPT, unless $t = 1$. 
\end{remark}

\subsection{Additivity property} 
\subsubsection{Strong and weak additivity when $s+t = 1$ or $s=0$}
\begin{prop}\label{main:strong additivity}
    Suppose $s+t = 1$ or $s=0$. We have $\mc Q(\mc N_{s,t}) = \mc Q^{(1)}(\mc N_{s,t})$ and for any degradable channel $\mc M$, we have 
\begin{equation}
    \mc Q^{(1)}(\mc N_{s,t} \otimes \mc M ) = \mc Q^{(1)}(\mc N_{s,t}) + \mc Q^{(1)}(\mc M).
\end{equation}
\end{prop}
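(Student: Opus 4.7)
The plan is to reduce to Theorem~\ref{thm:additivity via DPI} with the subchannel $\widehat{\mc N}_{s,t}$ playing the role of the simulator $\widehat{\mc N}$ and $\mc N_{s,t}$ playing the role of $\mc N$. Two hypotheses need to be verified: a simulation $\mc D\circ \widehat{\mc N}_{s,t}\circ \mc E = \mc N_{s,t}$, and the weak domination $\mc Q^{(1)}(\widehat{\mc N}_{s,t})\le \mc Q^{(1)}(\mc N_{s,t})$. Since $\mc N_{s,t}$ is already anti-degradable when $t\ge \tfrac{1}{2}$ (so additivity is trivial), the real content is the regime $t<\tfrac{1}{2}$; combining this with $s=0$ or $s+t=1$ and Figure~\ref{figure:deg region platypus} places $\widehat{\mc N}_{s,t}$ in the \emph{degradable} region, which is the reason the reduction is useful.

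Weak domination is immediate: every input to $\widehat{\mc N}_{s,t}$ embeds in $\mc N_{s,t}$ via the subspace inclusion and gives the same coherent information, and in fact the optimizer in \eqref{eqn:optimizer of generalized platypus} already lies in that subspace, so $\mc Q^{(1)}(\widehat{\mc N}_{s,t}) = \mc Q^{(1)}(\mc N_{s,t})$. The heart of the argument is the simulation, which rests on the following observation: in both special cases the isometry $F_{s,t}$ sends two of the three input basis vectors to the same Bob-side output. When $s+t=1$ (so $1-s-t=0$), both $\ket{1}$ and $\ket{2}$ produce $\ketbra{2}{2}$ on $B$; when $s=0$ the same pair again produces $\ketbra{2}{2}$, while $F_{s,t}\ket{0}$ lands entirely in the $\{\ket{1},\ket{2}\}$ subspace of $B$. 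Reading off \eqref{eqn:matrix form of platypus}, $\mc N_{s,t}(\rho)$ then depends on $\rho$ only through $\rho_{00}$, a single off-diagonal entry, and the sum $\rho_{11}+\rho_{22}$. I therefore take $\mc D = \id$ and choose for $\mc E$ the two-Kraus coarse-graining encoder that merges the two indistinguishable inputs into one qubit basis vector of the domain of $\widehat{\mc N}_{s,t}$: concretely $K_0 = \ketbra{\tilde 0}{0}+\ketbra{\tilde 1}{1},\ K_1 = \ketbra{\tilde 1}{2}$ in the $s+t=1$ case, and $K_0 = \ketbra{\tilde 0}{0}+\ketbra{\tilde 1}{2},\ K_1 = \ketbra{\tilde 1}{1}$ in the $s=0$ case, where $\{\ket{\tilde 0},\ket{\tilde 1}\}$ denotes the qubit subspace of $\widehat{\mc N}_{s,t}$. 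A short entry-by-entry matrix computation then verifies $\widehat{\mc N}_{s,t}\circ \mc E = \mc N_{s,t}$.

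With both hypotheses in hand and $\widehat{\mc N}_{s,t}$ degradable, the first bullet of Theorem~\ref{thm:additivity via DPI} yields $\mc Q(\mc N_{s,t}) = \mc Q^{(1)}(\mc N_{s,t})$. For strong additivity against any degradable $\mc M$, the standard strong additivity of degradable tensor degradable channels recalled in Section~\ref{sec:preliminary} gives $\mc Q^{(1)}(\widehat{\mc N}_{s,t}\otimes \mc M) = \mc Q^{(1)}(\widehat{\mc N}_{s,t}) + \mc Q^{(1)}(\mc M)$, and the second bullet of Theorem~\ref{thm:additivity via DPI} then delivers $\mc Q^{(1)}(\mc N_{s,t}\otimes \mc M) = \mc Q^{(1)}(\mc N_{s,t}) + \mc Q^{(1)}(\mc M)$. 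The only step requiring real care is writing down the correct encoder for each of the two cases — the qubit subspace of $\widehat{\mc N}_{s,t}$ shifts between $\{\ket{0},\ket{1}\}$ and $\{\ket{0},\ket{2}\}$ — but no deeper analytic difficulty arises, since once the encoder is identified the simulation identity is a direct matrix check.
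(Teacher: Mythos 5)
Your proposal is correct and follows essentially the same route as the paper: simulate $\mc N_{s,t}$ by its (anti-)degradable subchannel $\widehat{\mc N}_{s,t}$ via a coarse-graining pre-processing channel, note $\mc Q^{(1)}(\widehat{\mc N}_{s,t}) = \mc Q^{(1)}(\mc N_{s,t})$, and invoke Theorem~\ref{thm:additivity via DPI}. Your encoder for $s+t=1$ is exactly the paper's map $\mc A$, and your explicit encoder for $s=0$ correctly fills in the case the paper dismisses as ``similar.''
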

\begin{proof}
When $s+t = 1$, there exists a qutrit-to-qubit quantum channel $\mc A$ defined by 
\begin{equation}
    \mc A(\sum_{i,j=0}^2 \rho_{ij} \ketbra{i}{j}) = \begin{pmatrix}
    \rho_{00} & \rho_{01} \\
    \rho_{10} & \rho_{11}+ \rho_{22}
\end{pmatrix},
\end{equation}
such that $\widehat{\mc N}_{s,t} \circ \mc A = \mc N_{s,t}$. Then using $\mc Q^{(1)}(\widehat{\mc N}_{s,t}) = \mc Q^{(1)}(\mc N_{s,t})$ and Theorem \ref{thm:additivity via DPI}, we get the desired additivity properties. For the case $s = 0$, the construction is similar and we conclude the proof. 
\end{proof}

\subsubsection{Discussion of weak additivity when $s+t < 1$ and $s > 0$}  
In this region, weak additivity may still hold. Specifically, we showed that \(\mc N_{s,t}\) is weakly dominated by its restriction channel, in the sense of Definition \ref{def:weak_dom}. The restriction channel is either degradable or anti-degradable. Although in this case \(\mc N_{s,t}\) cannot be simulated by its restriction channel (meaning Theorem \ref{thm:additivity via DPI} is not applicable), it appears that entangled inputs do not help boost the coherent information when many copies of \(\mc N_{s,t}\) are used. 

If weak additivity were rigorously verified for any \(s > 0\) and \(s + t < 1\), then \(\mc N_{s,t}\) would have zero quantum capacity in certain regions, despite being neither anti-degradable nor PPT.

When \(t = 0\), weak additivity can be verified if the \emph{spin alignment conjecture} holds \cite{leditzky2023platypus}. An additional observation from \cite{leditzky2023platypus} is that the structure of \(\mc N_{s,0}^{\otimes n}\) for any \(n\) suggests the optimizer of coherent information must take a simple form. This claim is closely tied to their spin alignment conjecture, which implies precisely that simple form of the optimizer. However, since the optimization is non-convex and spans arbitrary \(n\), it is not straightforward to derive this form directly. 

Currently, the best known result (see \cite{alhejji2024towards,alhejji2024refining}) is
\begin{equation}
        \mc Q^{(1)}(\mc N_{s,0}^{\otimes 2}) \;=\; 2\, \mc Q^{(1)}(\mc N_{s,0}).
\end{equation}
We leave the rigorous proof of weak additivity as an open problem.

\subsection{Failure of strong additivity with degradable channels}\label{sec:amplification}
In this subsection, we demonstrate that for the parameter regime \( s + t < 1 \) and \( s > 0 \), strong additivity fails when \(\mc N_{s,t}\) tensors with degradable channels . Recall that \(\mc Q^{(1)}(\mc N_{s,t}) = \mc Q^{(1)}(\widehat{\mc N}_{s,t})\), where the channel \(\widehat{\mc N}_{s,t}\), defined by \eqref{def:restriction channel}, is either degradable or anti-degradable depending on the parameter values.

To establish the failure of strong additivity, we employ two distinct arguments:
\begin{enumerate}
    \item Smith-Yard argument \cite{smith2008quantum} in the region where \(\widehat{\mc N}_{s,t}\) is anti-degradable.
    \item Log-singularity argument \cite{siddhu2021entropic} in the region where \(\widehat{\mc N}_{s,t}\) is degradable.
\end{enumerate}

It is important to note that neither of these methods alone suffices for both cases, necessitating the combined approach.

\subsubsection{Anti-degradable region---Smith-Yard argument}
The Smith-Yard argument, introduced in \cite{smith2008quantum}, shows that strong non-additivity can be achieved for erasure channels with an erasure probability of \(1/2\), provided that the maximal private information of a channel exceeds twice its maximal coherent information.

We define a \(d\)-dimensional erasure channel with erasure probability \(\lambda \in [0,1]\) as follows:
\begin{equation}\label{def:erasure channel}
    \mc E_{d,\lambda}(\rho) := (1-\lambda)\rho + \lambda \ketbra{e}{e},
\end{equation}
where \(\ket{e}\) is a fixed erasure state orthogonal to the input state space.

For any ensemble of states \(\{p_x, \rho_x^A\}_{x \in \mc X}\) and a channel \(\mathcal{N}\) with input \(A\), output \(B\), and environment \(E\), there exists a system \(C\) (of dimension equal to the sum of the ranks of the states \(\rho_x^A\)) and a joint state \(\rho^{AC}\) such that
\begin{align*}
   & I_c(\rho^{AC}, \mathcal{N} \otimes \mathcal{E}_{d_C, 1/2}) = \frac{1}{2} \bigl(I(\mc X; B) - I(\mc X; E)\bigr) 
      \;=\; \frac{1}{2} I_p(\{p_x, \rho_x^A\}, \mathcal{N}),
\end{align*}
where the private information of the channel \(\mathcal{N}\) for the ensemble \(\{p_x, \rho_x^A\}\) is defined as:
\begin{align*}
    I_p(\{p_x, \rho_x^A\}, \mathcal{N}) := I(\mc X; B) \;-\; I(\mc X; E),\quad 
    \mc P^{(1)}(\mc N) := \sup_{\{p_x, \rho_x^A\}} I_p(\{p_x, \rho_x^A\}, \mathcal{N}).
\end{align*}

If the maximal private information of \(\mc N\) exceeds twice its maximal coherent information, i.e.,
\[
\mc P^{(1)}(\mc N) > 2 \,\mc Q^{(1)}(\mc N),
\]
then it can be shown that
\begin{equation}
    \mc Q^{(1)}\bigl(\mc N \otimes \mathcal{E}_{d_C, 1/2}\bigr) > \mc Q^{(1)}(\mc N),
\end{equation}
for sufficiently large \(d_C\).

Applying this technique to \(\widehat{\mc N}_{s,t}\), we establish the failure of strong additivity in the anti-degradable region. Specifically, this demonstrates that \(\mc Q^{(1)}(\mc N_{s,t})\) is non-additive when combined with suitable erasure channels.
\begin{prop}\label{main:strong non-additivity Smith-Yard}
    Suppose $s>0$, $s+t < 1$ and $\mc Q^{(1)}(\mc N_{s,t})=0$. Then 
\begin{equation}
    \mc Q^{(1)}(\mc N_{s,t} \otimes \mc E_{d,\frac{1}{2}}) > \mc Q^{(1)}(\mc N_{s,t}), \quad d \ge 3.
\end{equation}
\end{prop}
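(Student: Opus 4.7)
Plan. The strategy is to apply the Smith-Yard argument recalled just before the proposition: to verify $\mc Q^{(1)}(\mc N_{s,t} \otimes \mc E_{d,1/2}) > 0 = \mc Q^{(1)}(\mc N_{s,t})$ for $d \ge 3$, it suffices to exhibit an input ensemble $\{p_x, \rho_x^A\}$ for $\mc N_{s,t}$ with strictly positive private information $I_p$, since the Smith-Yard construction then produces a joint input on $\mc N_{s,t} \otimes \mc E_{d_C,1/2}$ achieving coherent information $\tfrac12 I_p$, with $d_C$ equal to the sum of the ranks of the signals. The ensemble I will use has two signals of ranks $1$ and $2$, so $d_C = 3$; monotonicity in the erasure dimension (restricting the input of $\mc E_{d,1/2}$ to a $3$-dimensional subspace) then handles every $d \ge 3$.

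The ensemble I propose is
\begin{equation*}
    \bigl\{\tfrac{1}{2} : \ketbra{0}{0},\ \tfrac{1}{2} : u\,\ketbra{1}{1} + (1-u)\,\ketbra{2}{2}\bigr\},\quad u \in (0,1).
\end{equation*}
Using the matrix forms in \eqref{eqn:matrix form of platypus}, the first signal is mapped on both sides to the same diagonal state $\mathrm{diag}(s,\, 1{-}s{-}t,\, t)$, Bob's second signal collapses to the pure state $\ketbra{2}{2}$ (entropy $0$), and Eve's second signal becomes the mixture $u\,\ketbra{0}{0} + (1-u)\,\ketbra{1}{1}$ of binary entropy $h(u)$. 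Assembling the Holevo quantities yields
\begin{equation*}
    I_p(u) = H\!\bigl(\tfrac{s}{2},\,\tfrac{1-s-t}{2},\,\tfrac{1+t}{2}\bigr) - H\!\bigl(\tfrac{s+u}{2},\,\tfrac{2-s-t-u}{2},\,\tfrac{t}{2}\bigr) + \tfrac{1}{2} h(u),
\end{equation*}
so the problem reduces to optimizing the single parameter $u$.

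The heart of the argument is a clean cancellation. Solving $\tfrac{dI_p}{du} = 0$ yields the unique interior critical point $u^{*} = s/(1-t) \in (0,1)$, which remarkably equals both conditional probabilities $(s/2)/((1-t)/2)$ and $(s+u^{*})/(2-t)$ appearing in the chain-rule decomposition $H(p_1, p_2, p_3) = h(p_3) + (1-p_3)\,h\!\bigl(p_1/(p_1 + p_2)\bigr)$ of the two ternary entropies. Expanding both entropies this way and substituting $u = u^{*}$, the three contributions of $h(u^{*})$ cancel exactly, leaving
\begin{equation*}
    I_p(u^{*}) \;=\; h\!\bigl(\tfrac{1+t}{2}\bigr) - h\!\bigl(\tfrac{t}{2}\bigr) \;=\; h\!\bigl(\tfrac{1-t}{2}\bigr) - h\!\bigl(\tfrac{t}{2}\bigr).
\end{equation*}
Because $h$ is symmetric about $1/2$ and concave, the right-hand side is strictly positive precisely when $t < 1/2$, which is implicit in this subsection (for $t \ge 1/2$ the full channel $\mc N_{s,t}$ is itself anti-degradable and the strict inequality would fail). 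The main obstacle is identifying the right ensemble so that the $h(u^{*})$ cancellation happens cleanly; once that ensemble is in hand and the cancellation verified, Smith-Yard delivers $\mc Q^{(1)}(\mc N_{s,t} \otimes \mc E_{3,1/2}) \ge \tfrac{1}{2} I_p(u^{*}) > 0$, and the monotonicity argument extends this to all $d \ge 3$.
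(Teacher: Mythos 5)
Your proposal is correct and follows the same Smith--Yard route as the paper, using exactly the same two-signal ensemble $\{\ketbra{0},\; u\ketbra{1}+(1-u)\ketbra{2}\}$; the difference is in how positivity of the private information is certified. The paper simply states that optimizing over $\{p_x,\rho_x^A\}$ gives $\mc P^{(1)}(\mc N_{s,t})>0$ for $t<\tfrac12$ and points to a numerical plot (Figure~\ref{fig:private info}), whereas you fix $p_1=p_2=\tfrac12$ and carry out the optimization in closed form: the critical point $u^*=s/(1-t)\in(0,1)$ (which exists precisely because $s+t<1$) makes the conditional distribution on the first two outcomes equal to $(u^*,1-u^*)$ in both ternary entropies, so the chain-rule decomposition $H(p_1,p_2,p_3)=h(p_3)+(1-p_3)\,h\bigl(p_1/(p_1+p_2)\bigr)$ cancels the $h(u^*)$ terms with total coefficient $\tfrac{1-t}{2}-\tfrac{2-t}{2}+\tfrac12=0$ and leaves $I_p(u^*)=h\bigl(\tfrac{1-t}{2}\bigr)-h\bigl(\tfrac{t}{2}\bigr)>0$ for $t<\tfrac12$. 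I checked the derivative computation, the value of $u^*$, and the cancellation; all are correct, and the value at the critical point suffices (one does not even need it to be the maximizer). Your accounting of $d_C=3$ as the sum of the signal ranks and the embedding argument for $d>3$ also matches the paper. The one point worth flagging explicitly is the role of $t<\tfrac12$: as you note, the hypotheses of the proposition as stated do not literally exclude $t\ge\tfrac12$ (where $\mc N_{s,t}$ is anti-degradable, hence $\mc Q^{(1)}=0$, yet the conclusion would fail since an anti-degradable channel tensored with the symmetric channel $\mc E_{d,1/2}$ remains anti-degradable); the paper's own proof likewise restricts to $t<\tfrac12$, which is the regime the whole subsection is concerned with. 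Net effect: your argument proves the same statement with a fully analytic, figure-free justification, which is a strict improvement in rigor over the paper's numerical evidence.
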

\begin{proof}
We can choose 
\begin{equation}
    \rho_1^A = \ketbra{0}, \quad \rho_2^A = u\ketbra{1} + (1-u)\ketbra{2} 
\end{equation}
and optimize over $\{p_x, \rho_x^A\}_{x = 1,2}$, we see $\mc P^{(1)}(\mc N_{s,t})>0$ for any $t<\frac{1}{2}$, see Figure \ref{fig:private info}. In this case, $d \ge 3$ suffices to observe non-additivity. 
\begin{figure}[ht]
    \centering\includegraphics[width=.5\textwidth]{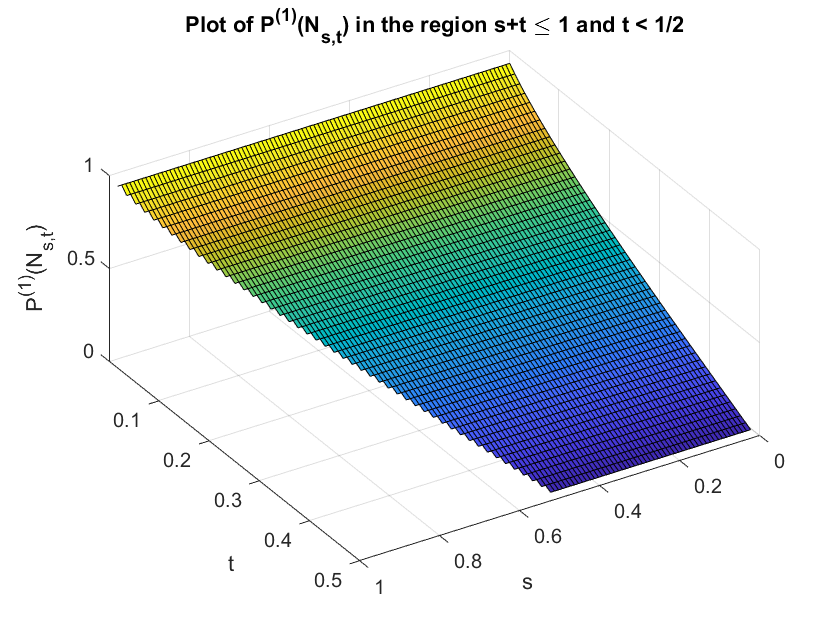}
    \caption{Plot of the private information of $\mc N_{s,t}$}
    \label{fig:private info}
\end{figure}
\end{proof}

\subsubsection{Degradable region---Log-singularity argument}
Using the idea of \(\varepsilon\)\emph{-log-singularity}~\cite{siddhu2021entropic}, we present a framework exhibiting amplification of coherent information. In other words, we discuss how
\[
   \mc Q^{(1)}\bigl(\mc N_1 \otimes \mc N_2\bigr) 
   \;>\; \mc Q^{(1)}(\mc N_1) \;+\; \mc Q^{(1)}(\mc N_2)
\]
can occur. Suppose \(\rho_{A_1}, \rho_{A_2}\) are the optimizers of \(\mc Q^{(1)}(\mc N_1)\) and \(\mc Q^{(1)}(\mc N_2)\), respectively; that is,
\[
    I_c(\rho_{A_1}, \mc N_1) \;=\; \mc Q^{(1)}(\mc N_1),
    \quad 
    I_c(\rho_{A_2}, \mc N_2) \;=\; \mc Q^{(1)}(\mc N_2).
\]
Then the tensor product state \(\rho_{A_1} \otimes \rho_{A_2}\) already achieves a lower bound for \(\mc Q^{(1)}(\mc N_1 \otimes \mc N_2)\) that is at least
\(\mc Q^{(1)}(\mc N_1) + \mc Q^{(1)}(\mc N_2)\).

Moreover, if we perturb \(\rho_{A_1}\otimes \rho_{A_2}\) using an entangled state \(\sigma_{A_1A_2}\), \emph{i.e.},
\begin{equation}
    \rho_{A_1A_2}(\varepsilon)
    \;:=\; 
    (1-\varepsilon)\,\rho_{A_1}\otimes \rho_{A_2} 
    \;+\; 
    \varepsilon \,\sigma_{A_1A_2},
\end{equation}
we can potentially achieve a larger coherent information as \(\varepsilon \to 0\). The underlying reason is that Lipschitz continuity for the von Neumann entropy fails in the presence of a possible log-singularity. This phenomenon was first noted by Fannes~\cite{fannes1973continuity} and further sharpened in~\cite{petz2007quantum}. More recently, the logarithmic dimension factor has been refined in~\cite{berta2024continuity,audenaert2024continuity}.

\begin{lemma}
\label{lemma:continuity}
For density operators \(\rho\) and \(\sigma\) on a Hilbert space \(\mc H_A\) of finite dimension \(d_A\), if \(\tfrac12\|\rho-\sigma\|_1 \le \varepsilon \le 1\), then
\[
\bigl|S(\rho) - S(\sigma)\bigr|
\;\le\;
\begin{cases}
   \varepsilon \log (d_A - 1) + h(\varepsilon) 
   & \text{if } \varepsilon \leq 1 - \frac{1}{d_A},\\[6pt]
   \log d_A 
   & \text{if } \varepsilon > 1 - \frac{1}{d_A},
\end{cases}
\]
where \(h(x) = -x \log x - (1 - x)\log (1-x)\) is the binary entropy.
\end{lemma}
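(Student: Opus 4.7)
The plan is to establish Audenaert's sharp continuity bound by writing convex decompositions of $\rho$ and $\sigma$ around a common state extracted from their Jordan decomposition, which reduces the entropy comparison to the comparison of two ``perturbation'' states with orthogonal supports in $\mc H_A$.

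Concretely, I would first decompose $\rho - \sigma = \Delta_+ - \Delta_-$ with $\Delta_\pm \ge 0$ and $\mathrm{supp}(\Delta_+) \perp \mathrm{supp}(\Delta_-)$. Tracelessness of $\rho-\sigma$ together with the hypothesis on the trace distance gives $\Tr(\Delta_+) = \Tr(\Delta_-) = T$, where $T := \tfrac{1}{2}\|\rho - \sigma\|_1 \le \varepsilon$. The operator $M := \rho - \Delta_+ = \sigma + \Delta_-$ is positive semidefinite, since the second expression displays it as a sum of PSD operators, and $\Tr(M) = 1 - T$. For $T < 1$, setting $\tau := M/(1-T)$, $\omega_1 := \Delta_+/T$, and $\omega_2 := \Delta_-/T$ produces three bona fide states with $\omega_1 \perp \omega_2$ and
\[
\rho = (1-T)\tau + T\omega_1, \qquad \sigma = (1-T)\tau + T\omega_2.
\]

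Applying concavity of $S$ from below and the Holevo inequality $S(\sum_i p_i \rho_i) \le H(p) + \sum_i p_i S(\rho_i)$ from above to both decompositions, then subtracting, yields
\[
|S(\rho) - S(\sigma)| \le T\,|S(\omega_1) - S(\omega_2)| + h(T).
\]
Because $\omega_1, \omega_2$ are supported on orthogonal subspaces of $\mc H_A$, their ranks sum to at most $d_A$, so each has rank at most $d_A - 1$; hence $0 \le S(\omega_i) \le \log(d_A - 1)$ and therefore $|S(\omega_1) - S(\omega_2)| \le \log(d_A - 1)$. Combining gives $|S(\rho) - S(\sigma)| \le T\log(d_A - 1) + h(T)$. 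A quick derivative computation shows the right-hand side is monotone increasing in $T$ on $[0, 1 - 1/d_A]$ (its derivative is $\log[(d_A-1)(1-T)/T]$), so replacing $T$ by the possibly larger $\varepsilon$ recovers the first branch of the lemma whenever $\varepsilon \le 1 - 1/d_A$. When $\varepsilon > 1 - 1/d_A$, I would fall back on the universal bound $|S(\rho) - S(\sigma)| \le \log d_A$, which follows from $S \in [0, \log d_A]$ and yields the second branch.

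There is no deep obstacle in this argument. The two bookkeeping points requiring care are the positivity of $M$ (immediate from the identity $M = \sigma + \Delta_-$) and the rank bound on $\omega_1, \omega_2$ (inherited from the orthogonality of supports in the Jordan decomposition). All remaining ingredients---concavity of $S$ and the Holevo inequality, the latter a consequence of subadditivity applied to the classical-quantum extension $\sum_i p_i \ketbra{i}{i} \otimes \rho_i$---are standard, so the proof reduces to assembling these facts in the order described.
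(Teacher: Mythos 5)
The paper does not actually prove this lemma; it is quoted as the known Audenaert--Winter sharp continuity bound with citations to \cite{fannes1973continuity, petz2007quantum, audenaert2024continuity, berta2024continuity}. So the only question is whether your argument is sound, and it is not: there is a genuine gap at the very first step. You claim $M := \rho - \Delta_+$ equals $\sigma + \Delta_-$ and is therefore PSD, but from $\rho - \sigma = \Delta_+ - \Delta_-$ one gets $\rho - \Delta_+ = \sigma - \Delta_-$ (note the sign), and this operator need \emph{not} be positive semidefinite. Concretely, take $\rho = \ketbra{0}{0}$ and $\sigma = \ketbra{+}{+}$ on $\mathbb{C}^2$: then $\Delta_+$ is the positive part of $\ketbra{0}{0} - \ketbra{+}{+}$, a rank-one operator $\tfrac{1}{\sqrt{2}}\ketbra{v}{v}$ whose eigenvector $\ket{v}$ has nonzero overlap with $\ket{1}$, so $\bra{1}(\rho - \Delta_+)\ket{1} = -\tfrac{1}{\sqrt{2}}|\braket{1}{v}|^2 < 0$. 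Hence the decomposition $\rho = (1-T)\tau + T\omega_1$, $\sigma = (1-T)\tau + T\omega_2$ with $\tau$ a state does not exist in general, and everything downstream (concavity from below, Holevo from above, the rank bound on $\omega_{1,2}$) has nothing to stand on. This failure of positivity is precisely the obstruction that makes the sharp bound nontrivial.

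The standard repairs are instructive to compare. Winter's proof replaces your $M$ by $\omega := \tfrac{1}{1+T}\bigl(\rho + \Delta_-\bigr) = \tfrac{1}{1+T}\bigl(\sigma + \Delta_+\bigr)$, which \emph{is} manifestly a state because each expression is a sum of PSD operators; running your concavity/Holevo argument on these two convex decompositions of $\omega$ then gives $|S(\rho)-S(\sigma)| \le T\log(d_A-1) + (1+T)\,h\!\left(\tfrac{T}{1+T}\right)$, which is slightly weaker than the stated $h(T)$ term. Audenaert's original route to the sharp constant instead first reduces to commuting states, using that $S$ depends only on the spectrum together with $\|\mathrm{Eig}^{\downarrow}(\rho)-\mathrm{Eig}^{\downarrow}(\sigma)\|_1 \le \|\rho-\sigma\|_1$, and then solves the resulting classical extremal problem, where a maximal-coupling construction plays the role of your intended decomposition and does go through. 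Your remaining steps (the rank-$(d_A-1)$ bound from orthogonal supports, and the monotonicity of $\varepsilon\log(d_A-1)+h(\varepsilon)$ on $[0,1-1/d_A]$) are correct, but as written the proof does not establish the lemma.
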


This continuity result motivates the following definition.

\begin{definition}[\(\varepsilon\)-log-singularity]
\label{def:log-singularity}
Let \(\{\sigma(\varepsilon)\}_{\varepsilon \ge 0}\) be a family of density operators that depends on \(\varepsilon\), and suppose there exists a universal constant \(C > 0\) such that
\(\|\sigma(\varepsilon) - \sigma(0)\|_1 \le C \varepsilon\).
We say that \(\sigma(\varepsilon)\) has an \(\varepsilon\)-log-singularity \emph{of rate} \(r \in (-\infty, +\infty)\) if
\[
 \lim_{\varepsilon \to 0^+} \frac{S(\sigma(\varepsilon)) - S(\sigma(0))}{\varepsilon \,\bigl|\log \varepsilon\bigr|} 
 \;=\; r.
\]
\end{definition}

\noindent The following example summarizes different types of perturbations and their corresponding rates of \(\varepsilon\)-log-singularity.

\begin{example}\label{example:log-singularity}
Suppose \(\ket{\varphi}, \ket{\psi}\) are two orthogonal pure states, and let \(\rho_0\) be a density operator whose support is orthogonal to \(\ket{\varphi}\) and \(\ket{\psi}\). Assume \(a \in (0,1)\) and \(b > 0\). Then the \(\varepsilon\)-log-singularity rates are computed as follows:
\begin{enumerate}
    \item \label{log-singularity:1}
    If \begin{align*}
        \sigma(\varepsilon) =  a \ketbra{\varphi}{\varphi} + (1-a)\rho_0  - b\varepsilon \ketbra{\varphi}{\varphi} +  b\varepsilon \ketbra{\psi}{\psi},
    \end{align*}
    then \(\sigma(\varepsilon)\) has an \(\varepsilon\)-log-singularity of rate \(b > 0\).
    \item \label{log-singularity:2}
    If \begin{align*}
        \sigma(\varepsilon) & = a \ketbra{\varphi}{\varphi} + (1-a)\rho_0 - b\varepsilon \ketbra{\varphi}{\varphi} + b \varepsilon \ketbra{\psi}{\psi} + b \sqrt{\varepsilon(1-\varepsilon)} (\ketbra{\psi}{\varphi} + \ketbra{\varphi}{\psi}) ,
    \end{align*}
    then \(\sigma(\varepsilon)\) has an \(\varepsilon\)-log-singularity of rate \(\tfrac{b(a-b)}{a}\).

    \item \label{log-singularity:3}
    (No log-singularity) If 
    \[
       \sigma(\varepsilon)
       \;=\;
       \sigma(0) 
       \;+\; \varepsilon\,H,
    \]
    where \(H\) is a Hermitian, traceless operator such that $\text{supp}(H) \subseteq \text{supp}(\sigma(0))$, then \(\sigma(\varepsilon)\) has an \(\varepsilon\)-log-singularity of rate \(0\).
\end{enumerate}
\end{example}
Now we discuss the channel setting. Suppose $U_{\mc N_1}: \mc H_{A_1} \to \mc H_{B_1} \otimes \mc H_{E_1}, \quad U_{\mc N_2}: \mc H_{A_2} \to \mc H_{B_2} \otimes \mc H_{E_2}$ are two isometries and $(\mc N_1,\mc N_1^c), (\mc N_2,\mc N_2^c)$ are the two complementary pairs of quantum channels generated by $U_{\mc N_1}, U_{\mc N_2}$ respectively. Denote $\sigma(0) = \rho_{A_1}\otimes \rho_{A_2}$, where $\rho_{A_1}, \rho_{A_2}$ are the optimizers of $\mc Q^{(1)}(\mc N_1), \mc Q^{(1)}(\mc N_2)$. We choose a perturbation $\sigma(\varepsilon) = \sigma(0) + \varepsilon H$ of $\sigma(0)$, where $H$ is a traceless Hermitian operator and $\sigma(\varepsilon)$ is an entangled state. Denote 
\begin{equation}
\begin{aligned}
    & \rho_{B_1B_2}(\varepsilon) = (\mc N_1 \otimes \mc N_2)(\sigma(\varepsilon)), \\
    & \rho_{E_1E_2}(\varepsilon) = (\mc N_1^c \otimes \mc N_2^c)(\sigma(\varepsilon)).
\end{aligned}
\end{equation}
Then for any $\varepsilon > 0$ such that $\sigma(\varepsilon)$ is a state,
\begin{equation}\label{framework:amplification}
    \begin{aligned}
   & \mc Q^{(1)}(\mc N_1 \otimes \mc N_2) - (\mc Q^{(1)}(\mc N_1) + \mc Q^{(1)}(\mc N_2)) \ge I_c(\sigma(\varepsilon), \mc N_1 \otimes \mc N_2) - I_c(\sigma(0), \mc N_1 \otimes \mc N_2)\\
   & = S(\rho_{B_1B_2}(\varepsilon)) - S(\rho_{B_1B_2}(0)) - \big(S(\rho_{E_1E_2}(\varepsilon)) - S(\rho_{E_1E_2}(0))\big).
\end{aligned}
\end{equation}
Denote $\Delta_B(\varepsilon) = S(\rho_{B_1B_2}(\varepsilon)) - S(\rho_{B_1B_2}(0))$, $\Delta_E(\varepsilon) = S(\rho_{E_1E_2}(\varepsilon)) - S(\rho_{E_1E_2}(0))$. If 
\begin{align}\label{amplification:sufficient}
    \lim_{\varepsilon\to 0+} \frac{\Delta_B(\varepsilon) - \Delta_E(\varepsilon)}{\varepsilon |\log \varepsilon|} =: r_B - r_E >0.
\end{align}
Then choose $\varepsilon>0$ reasonably small we have $\mc Q^{(1)}(\mc N_1 \otimes \mc N_2) - (\mc Q^{(1)}(\mc N_1) + \mc Q^{(1)}(\mc N_2)) >0$. 

We can formally show the following using \eqref{amplification:sufficient}:
\begin{prop}\label{main:strong non-additivity}
    Suppose $s>0$, $s+t < 1$ and $\mc Q^{(1)}(\mc N_{s,t})>0$. Then 
\begin{equation}
    \mc Q^{(1)}(\mc N_{s,t} \otimes \mc E_{2,\lambda}) > \mc Q^{(1)}(\mc N_{s,t}), 
\end{equation}
if $\lambda$ satisfies
\begin{equation}\label{inequality:lambda region}
    \frac{1}{2} \le \lambda < \begin{cases}
         \frac{1 - (s+t)u_*}{1 + u_* - 2(s+t)u_*},\ s \le \frac{1-t}{2}, \\
         \frac{1 - (1-s)u_*}{1 + u_* - 2(1-s)u_*},\ s \ge \frac{1-t}{2}.
    \end{cases}
\end{equation}
\added{where $u_* = u_*(s,t) \in (0,1)$ is the optimal parameter achieving the maximal coherent information of $ \mc N_{s,t}$ in \eqref{eqn:optimizer of generalized platypus}:
\begin{align*}
  \mc Q^{(1)}(\mc N_{s,t}) = 
  \begin{cases}
        I_c(u_* \ketbra{0}{0} + (1-u_*) \ketbra{2}{2}, \mc N_{s,t} ), \quad 1-s-t \ge s, \\
       I_c(u_* \ketbra{0}{0} + (1-u_*) \ketbra{1}{1}, \mc N_{s,t} ), \quad 1-s-t \le s. \\
  \end{cases}
\end{align*}}
\end{prop}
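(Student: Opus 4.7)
The plan is to apply the $\varepsilon$-log-singularity framework of Lemma~\ref{lemma:continuity}, Definition~\ref{def:log-singularity}, Example~\ref{example:log-singularity}, and the sufficient condition \eqref{amplification:sufficient} to a carefully chosen entangled input. Since $\mc Q^{(1)}(\mc E_{2,\lambda}) = 0$ when $\lambda \ge 1/2$, I would start from the product state $\sigma(0) = \rho_{A_1}\otimes \rho_{A_2}$, where $\rho_{A_1}$ is the optimizer from \eqref{eqn:optimizer of generalized platypus} (with parameter $u_*$) and $\rho_{A_2}$ is any pure state on the qubit. Because $I_c(\sigma(0),\mc N_{s,t}\otimes \mc E_{2,\lambda}) = \mc Q^{(1)}(\mc N_{s,t})$ in this regime, it suffices to produce a single entangled perturbation $\sigma(\varepsilon) = (1-\varepsilon)\sigma(0) + \varepsilon \omega$ for which the coherent information strictly increases at some small $\varepsilon > 0$.

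Using the direct-sum decomposition of the erasure channel, the $B$- and $E$-side entropies split as
\begin{align*}
 S(\rho_{B_1B_2}(\varepsilon)) &= h(\lambda) + (1-\lambda)\,S(\sigma_B(\varepsilon)) + \lambda\, S(\rho_{B_1}(\varepsilon)), \\
 S(\rho_{E_1E_2}(\varepsilon)) &= h(\lambda) + \lambda\, S(\sigma_E(\varepsilon)) + (1-\lambda)\, S(\rho_{E_1}(\varepsilon)),
\end{align*}
where $\sigma_B(\varepsilon) = (\mc N_{s,t}\otimes id_{A_2})(\sigma(\varepsilon))$, $\sigma_E(\varepsilon) = (\mc N_{s,t}^c\otimes id_{A_2})(\sigma(\varepsilon))$, and $\rho_{B_1}(\varepsilon), \rho_{E_1}(\varepsilon)$ are the corresponding $A_1$-marginals after the channels. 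Plugging these into \eqref{framework:amplification}, the amplification condition \eqref{amplification:sufficient} reduces to strict positivity of the weighted log-singularity rate
\[
(1-\lambda)\,r_{\sigma_B} + \lambda\, r_{\rho_{B_1}} - \lambda\, r_{\sigma_E} - (1-\lambda)\, r_{\rho_{E_1}} > 0,
\]
where each $r_{\cdot}$ is an $\varepsilon$-log-singularity rate in the sense of Definition~\ref{def:log-singularity}. I would then compute these four rates for the chosen $\omega$ by expanding the relevant spectra to leading order in $\varepsilon$ through standard $2\times 2$ block perturbation and identifying the vanishing eigenvalues via items~\ref{log-singularity:1}--\ref{log-singularity:3} of Example~\ref{example:log-singularity}.

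The central difficulty is the design of $\omega$. The structural feature driving the asymmetry is that $\mc N_{s,t}$ collapses both $\ket{1}$ and $\ket{2}$ on the $B$-output to the single state $\ket{2}$, whereas $\mc N_{s,t}^c$ sends them to two orthogonal states on the $E$-output; this creates the possibility of opening more new support directions on $B_1 A_2$ than on $E_1 A_2$. Exploiting this, I would choose a (pure or low-rank) $\omega$ that mixes the collapsing pair $\{\ket{1},\ket{2}\}_{A_1}$ entangled with $A_2$, rather than restricting to the optimizer subspace $\mathrm{span}\{\ket{0},\ket{j}\}_{A_1}$. A purely Schmidt-type perturbation on the optimizer subspace produces symmetric rates $r_{\sigma_B} = r_{\sigma_E}$ and only recovers the weak threshold $\lambda < 1/2$; the stated threshold \eqref{inequality:lambda region} forces $\omega$ to shift the $A_1$-marginal linearly in $\varepsilon$, which adds the contributions of $\rho_{B_1}(\varepsilon), \rho_{E_1}(\varepsilon)$ into the balance. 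The hardest step is pinning down these rates precisely enough to produce the explicit formula $\lambda^\ast = (1-(s+t)u_*)/(1+u_* - 2(s+t)u_*)$ in the case $s \le (1-t)/2$, and the analogous formula with $s+t$ replaced by $1-s$ in the case $s \ge (1-t)/2$; both cases must be treated symmetrically once the correct perturbation family is identified.
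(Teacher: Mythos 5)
Your outline follows the same route as the paper's proof (the $\varepsilon$-log-singularity framework applied to a perturbation of the product optimizer that entangles the unused basis directions), and the structural observation you make --- that $\mc N_{s,t}$ merges $\ket{1}_{A_1}$ and $\ket{2}_{A_1}$ into a single output direction while $\mc N_{s,t}^c$ keeps them orthogonal --- is exactly the mechanism the paper exploits. Your reorganization of the entropies via the direct-sum structure of the erasure channel, reducing \eqref{amplification:sufficient} to positivity of $(1-\lambda)r_{\sigma_B}+\lambda r_{\rho_{B_1}}-\lambda r_{\sigma_E}-(1-\lambda)r_{\rho_{E_1}}$, is correct and arguably cleaner than the paper's direct computation of the joint output states.

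However, there is a genuine gap: the proof stops precisely where the quantitative content of the proposition lives. You never write down the perturbation $\omega$, never compute the four log-singularity rates, and never derive the threshold \eqref{inequality:lambda region} --- and you say explicitly that this is ``the hardest step.'' The paper does this concretely: for $s\le(1-t)/2$ it takes $\rho(\varepsilon)=u_*\ketbra{00}{00}+(1-u_*)\ketbra{\psi_\varepsilon}{\psi_\varepsilon}$ with $\ket{\psi_\varepsilon}=\sqrt{1-\varepsilon}\ket{20}+\sqrt{\varepsilon}\ket{11}$, computes $\rho_{BB'}(\varepsilon)$ and $\rho_{EE'}(\varepsilon)$ explicitly as in \eqref{eqn:case I log singularity}, and applies Example~\ref{example:log-singularity} to read off the rates $(1-u_*)(1-\lambda)$ on the $B$ side and $\lambda u_*(1-u_*)(1-s-t)/(1-(s+t)u_*)$ on the $E$ side (the latter coming from the rank-one coherent perturbation on the erased branch, item~\ref{log-singularity:2}, since the non-erased branch lands inside the full support of $\rho_{EE'}(0)$ and contributes rate zero by item~\ref{log-singularity:3}). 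The inequality between these two rates is exactly the upper bound on $\lambda$ in \eqref{inequality:lambda region}; the case $s\ge(1-t)/2$ is handled with $\ket{\psi_\varepsilon}=\sqrt{1-\varepsilon}\ket{10}+\sqrt{\varepsilon}\ket{21}$. Without exhibiting a specific $\omega$ and carrying out this spectral expansion, your argument establishes only that amplification is plausible for some $\lambda$, not the stated threshold, so the proof as written is incomplete.
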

\begin{proof}
Denote the isometry of erasure channel as $\mc U_{\mc E_{2,\lambda}}: \mc H_{A'} \to \mc H_{B'} \otimes \mc H_{E'}$ with $\text{dim} \mc H_{B'} = \text{dim} \mc H_{E'} = 3$:
\begin{equation}\label{eqn:erasure isomtery}
    \begin{aligned}
     & \mc U_{\mc E_{2,\lambda}}\ket{0} = \sqrt{1-\lambda} \ket{02} + \sqrt{\lambda} \ket{20}, \\
     & \mc U_{\mc E_{2,\lambda}}\ket{1} = \sqrt{1-\lambda} \ket{12} + \sqrt{\lambda} \ket{21},  
\end{aligned}
\end{equation}
where $\ket{2} = \ket{e}$ is the erasure flag. 

\textbf{Case I: $s \le 1-s-t$}. In this case, under the assumption $\lambda \ge \frac{1}{2}$, we can choose $\ketbra{0}$ as the optimal input state of $\mc Q^{(1)}(\mc E_{2,\lambda})$. Then the product state 
\begin{equation}
   \rho(0) = u_* \ketbra{00} + (1-u_*) \ketbra{20} \in \mb B(\mc H_A \otimes \mc H_{A'})
\end{equation}
achieves $\mc Q^{(1)}(\mc N_{s,t}) + \mc Q^{(1)}(\mc E_{2,\lambda})$, i.e., $I_c(\rho(\varepsilon), \mc N_{s,t} \otimes \mc E_{2,\lambda}) = \mc Q^{(1)}(\mc N_{s,t}) + \mc Q^{(1)}(\mc E_{2,\lambda})$. Note that $\ket{1}_A, \ket{1}_{A'}$ is not used in the optimization of $\mc Q^{(1)}(\mc N_{s,t})$ and $\mc Q^{(1)}(\mc E_{2,\lambda})$, we aim to achieve amplification using $\ket{11}$. To this end, denote the entangled input state $\rho(\varepsilon) \in \mb B(\mc H_A \otimes \mc H_{A'})$ as 
\begin{equation}
    \rho(\varepsilon) = u_* \ketbra{00} + (1-u_*) \ketbra{\psi_{\varepsilon}},
\end{equation}
where $\ket{\psi_{\varepsilon}} = \sqrt{1-\varepsilon} \ket{20} + \sqrt{\varepsilon} \ket{11}$ and denote 
\begin{equation}
    \begin{aligned}
     & \rho_{BB'}(\varepsilon) = (\mc N_{s,t} \otimes \mc E_{2,\lambda})(\rho(\varepsilon)), \\
     &\rho_{EE'}(\varepsilon) = (\mc N^c_{s,t} \otimes \mc E^c_{2,\lambda})(\rho(\varepsilon)).
    \end{aligned}
\end{equation}
Following the framework of $\varepsilon$-log-singularity \eqref{framework:amplification}, if we show that $\rho_{BB'}(\varepsilon)$ has a higher rate of $\varepsilon$-log-singularity than $\rho_{EE'}(\varepsilon)$, then we have $\mc Q^{(1)}(\mc N_{s,t} \otimes \mc E_{2,\lambda}) > \mc Q^{(1)}(\mc N_{s,t}) + \mc Q^{(1)}(\mc E_{2,\lambda}) = \mc Q^{(1)}(\mc N_{s,t})$. In fact, using the expression of $\mc N_{s,t}$ \eqref{eqn:Generalized Platypus} and $\mc E_{2,\lambda}$ \eqref{eqn:erasure isomtery}, we have 
\begin{equation}\label{eqn:case I log singularity}
    \begin{aligned}
        & \rho_{BB'}(0)  = \left(u_*s \ketbra{0} 
        + u_*(1-s-t)\ketbra{1} 
        + (u_* t + 1-u_*) \ketbra{2} \right) \otimes ((1-\lambda)\ketbra{0} + \lambda \ketbra{2}), \\
        & \rho_{BB'}(\varepsilon)  = \rho_{BB'}(0) 
        + (1-u_*)(1-\lambda) \varepsilon 
        \left(\ketbra{21} - \ketbra{20} \right), \\
        & \rho_{EE'}(0) = \big(u_*s \ketbra{0} 
        + \big(u_*(1-s-t) + (1-u_*)\big)\ketbra{1} 
        + u_* t\ketbra{2} \big) \otimes (\lambda \ketbra{0} + (1-\lambda)\ketbra{2}), \\
        & \rho_{EE'}(\varepsilon)  = \rho_{EE'}(0) 
        + (1-u_*)(1-\lambda) \varepsilon 
        \left(\ketbra{02} - \ketbra{12}\right) \\
        & \hspace{1.5cm} + (1-u_*)\lambda 
        \big[\varepsilon \ketbra{01} 
        - \varepsilon \ketbra{10}  + \sqrt{\varepsilon(1-\varepsilon)} 
        (\ketbra{10}{01} + \ketbra{01}{10})\big].
    \end{aligned}
\end{equation}
Using Example \ref{example:log-singularity} \eqref{log-singularity:1}, $\rho_{BB'}(\varepsilon)$ has an $\varepsilon$-log-singularity of rate $(1-u_*)(1-\lambda)>0$. Note that for the state $\rho_{EE'}(\varepsilon)$, since $s>0, s+t<1$, $\rho_{EE'}(0)$ has full support on $\ket{02}, \ket{12}$ thus the $\varepsilon$-perturbation on that subspace does not have $\varepsilon$-log-singularity. Therefore, using Example \ref{example:log-singularity} \eqref{log-singularity:2}, \eqref{log-singularity:3}, $\rho_{EE'}(\varepsilon)$ has an $\varepsilon$-log-singularity of rate
\begin{align*}
    & \frac{b(a-b)}{a} = \frac{\lambda u_*(1-u_*)(1-s-t)}{1- (s+t)u_*}, \quad  a = \lambda (u_*(1-s-t)+(1-u_*)),\ b = (1-u_*)\lambda.
\end{align*}
Via \eqref{amplification:sufficient}, we have $\mc Q^{(1)}(\mc N_{s,t} \otimes \mc E_{2,\lambda}) > \mc Q^{(1)}(\mc N_{s,t})$ if $\lambda \ge \frac{1}{2}$ and 
\begin{align*}
       & (1-u_*)(1-\lambda) > \frac{\lambda u_*(1-u_*)(1-s-t)}{1- (s+t)u_*} \iff \lambda < \frac{1 - (s+t)u_*}{1 + u_* - 2(s+t)u_*} \in (\frac{1}{2}, 1). 
\end{align*}

\textbf{Case II: $s \ge 1-s-t$}. In this case, similar as before, using \eqref{eqn:optimizer of generalized platypus}, we can choose the product state 
\begin{equation}
   \rho(0) = u_* \ketbra{00} + (1-u_*) \ketbra{10} \in \mb B(\mc H_A \otimes \mc H_{A'})
\end{equation}
Note that $\ket{2}_A, \ket{1}_{A'}$ is not used in the optimization of $\mc Q^{(1)}(\mc N_{s,t})$ and $\mc Q^{(1)}(\mc E_{2,\lambda})$, we aim to achieve amplification using $\ket{21}$. 
\begin{equation}
    \rho(\varepsilon) = u_* \ketbra{00} + (1-u_*) \ketbra{\psi_{\varepsilon}},
\end{equation}
where $\ket{\psi_{\varepsilon}} = \sqrt{1-\varepsilon} \ket{10} + \sqrt{\varepsilon} \ket{21}$. Using the same notation, similar calculation shows that 
\begin{equation}\label{eqn:case II log singularity}
    \begin{aligned}
        & \rho_{BB'}(0) = \left(u_*s \ketbra{0} 
        + u_*(1-s-t)\ketbra{1} 
        + (u_* t + 1-u_*) \ketbra{2} \right) \otimes ((1-\lambda)\ketbra{0} + \lambda \ketbra{2}), \\
        & \rho_{BB'}(\varepsilon) = \rho_{BB'}(0) 
        + (1-u_*)(1-\lambda) \varepsilon 
        \left(\ketbra{21} - \ketbra{20} \right), \\
        & \rho_{EE'}(0) = \big((u_*s + 1-u_*)\ketbra{0} 
        + u_*(1-s-t)\ketbra{1} 
        + u_* t\ketbra{2} \big) \otimes (\lambda \ketbra{0} + (1-\lambda)\ketbra{2}), \\
        & \rho_{EE'}(\varepsilon) = \rho_{EE'}(0) 
        + (1-u_*)(1-\lambda) \varepsilon 
        \left(\ketbra{12} - \ketbra{02} \right) \\
        & \hspace{1.5cm} + (1-u_*)\lambda 
        \big[\varepsilon \ketbra{11} 
        - \varepsilon \ketbra{00} + \sqrt{\varepsilon(1-\varepsilon)} 
        (\ketbra{00}{11} + \ketbra{11}{00})\big].
    \end{aligned}
\end{equation}
Using Example \ref{example:log-singularity} and the same argument, $\rho_{BB'}(\varepsilon)$ has an $\varepsilon$-log-singularity of rate $(1-u_*)(1-\lambda)>0$ and $\rho_{EE'}(\varepsilon)$ has an $\varepsilon$-log-singularity of rate $\frac{\lambda u_*(1-u_*)s}{1- (1-s)u_*}$. Similar as before, via \eqref{amplification:sufficient}, we have $\mc Q^{(1)}(\mc N_{s,t} \otimes \mc E_{2,\lambda}) > \mc Q^{(1)}(\mc N_{s,t})$ if 
\begin{align*}
    \frac{1}{2} \le \lambda < \frac{1 - (1-s)u_*}{1 + u_* - 2(1-s)u_*}.
\end{align*}
\end{proof}
\begin{remark}
    Note that non-additivity can still happen when $\lambda$ is outside the region \eqref{inequality:lambda region}. In fact, when $\lambda = \frac{1}{2}$, we achieve non-additivity. Then, by continuity of coherent information \cite{Leung_2009}, $\lambda$ can be extended to values smaller than $\frac{1}{2}$ while preserving non-additivity.  
\end{remark}

\section{Conclusion and open problems}
In this work, we have systematically studied the conditions under which a quantum channel can exhibit weak or strong additive coherent information. Our investigation reveals two classes of channels that retain additivity. The first class of channels encompasses fundamental building blocks such as: \begin{itemize} \item \emph{(Weak) degradable channels} (Definition \ref{def:weaker degradability}). \item Direct sums and tensor products of degradable and anti-degradable channels. \item PPT channels $\Gamma$ and the identity map tensored with PPT channels. \end{itemize} It is known that these channels often exhibit weak additivity of coherent information, and for PPT class, strong additivity can fail when degradable channels are involved \cite{smith2008quantum}. Beyond these basic examples, via Theorem~\ref{thm:additivity via DPI}, we can construct new classes of channels whose structure may obscure the common structure like degradability and PPT, but it still retain either weak or strong additivity. 

Our findings also highlight that the interplay between \emph{strong} and \emph{weak} additivity is rich and subtle. Strong additivity requires that a channel maintain additivity when composed with any other channel from a subclass, whereas weak additivity demands only additivity under repeated uses of the same channel. By revealing classes of channels that meet one or both of these criteria—some of which are non-degradable and non-PPT—this work expands the known landscape where additivity holds. 

Finding a systematic way to show weak additivity with positive capacity, while strong additivity with degradable channels fails remains an interesting open question.
\appendices
\counterwithin*{equation}{section}
\renewcommand\theequation{\thesection\arabic{equation}}

\section{Equality case for amplitude damping channels} \label{appendix:equality case}
\noindent \textbf{Proposition \ref{main:uniqueness of fixed points}. }Suppose $\mc N = \mc N^{B\to E}$ has a unique fixed state, i.e., there exists a unique quantum state $\rho_0$ such that $\mc N(\rho_0) = \rho_0$. Then for any finite-dimensional quantum system $\mc H_V$ and quantum state $\rho_{VB}$, 
    \begin{equation*}
        \left(id_{\mb B(\mc H_V)} \otimes \mc N\right) (\rho_{VB}) = \rho_{VB}
    \end{equation*}
    if and only if $\rho_{VB} = \rho_V \otimes \rho_B$ and $\rho_B = \rho_0$. 

\begin{proof}[Proof of Proposition \ref{main:uniqueness of fixed points}]
    Suppose $\{\ket{i}_V\}_{0\le i \le n-1}$ is a standard basis of $\mc H_V$ and decompose $\rho_{VB}$ as
    \begin{equation}
        \rho_{VB} = \sum_{i,j} \ket{i}\bra{j}_V \otimes \rho_B^{ij}.
    \end{equation}
    For any $0\le k \le n-1$, using \eqref{fixed point}, we know that 
    \begin{equation}
    \begin{aligned}
       & \bra{k}_V  (id_{\mb B(\mc H_V)}\otimes \mc N(\rho_{VB})) \ket{k}_V  = \bra{k}_V \rho_{VB} \ket{k}_V = \mc N(\rho_B^{kk}) = \rho_B^{kk} \ge 0.
    \end{aligned}
    \end{equation}
    Therefore, if $\Tr(\rho_B^{kk}) \neq 0$, $\frac{\rho_B^{kk}}{\Tr(\rho_B^{kk}) }$ is a fixed point of $\mc N$ thus equal to $\rho_0$. If $\Tr(\rho_B^{kk}) = 0$, then we have $\rho_B^{kk} = 0 = 0\cdot \rho_0$. In summary, one has 
    \begin{equation}\label{diagonal}
        \rho_B^{kk} = \Tr(\rho_B^{kk}) \rho_0.
    \end{equation}
    For any $0\le k < l \le n-1$, define $\ket{\psi} = \alpha \ket{k}_V + \beta \ket{l}_V$, with $\alpha,\beta \in \mb C, |\alpha|^2 + |\beta|^2 = 1$. Using \eqref{fixed point}, we know that \begin{equation}
    \begin{aligned}
       & \bra{\psi}  (id_{\mb B(\mc H_V)}\otimes \mc N(\rho_{VB})) \ket{\psi}  = \bra{\psi} \rho_{VB} \ket{\psi}  = \mc N(|\alpha|^2 \rho_B^{kk}+ \bar{\alpha}\beta \rho_B^{kl} + \bar{\beta}\alpha \rho_B^{lk} + |\beta|^2 \rho_B^{ll}) \\
       & = |\alpha|^2 \rho_B^{kk}+ \bar{\alpha}\beta \rho_B^{kl} + \bar{\beta}\alpha \rho_B^{lk} + |\beta|^2 \rho_B^{ll} . 
    \end{aligned}
    \end{equation}
    Using the same argument as before, one has 
    \begin{align*}
        & |\alpha|^2 \rho_B^{kk}+ \bar{\alpha}\beta \rho_B^{kl} + \bar{\beta}\alpha \rho_B^{lk} + |\beta|^2 \rho_B^{ll} = \Tr\big(|\alpha|^2 \rho_B^{kk}+ \bar{\alpha}\beta \rho_B^{kl} + \bar{\beta}\alpha \rho_B^{lk} + |\beta|^2 \rho_B^{ll}\big) \rho_0.
    \end{align*}
    Recall the diagonal terms are proportional to $\rho_0$ \eqref{diagonal}, one has 
    \begin{align} \label{arbitrary}
        \bar{\alpha}\beta \rho_B^{kl} + \bar{\beta}\alpha \rho_B^{lk} =\Tr\big( \bar{\alpha}\beta \rho_B^{kl} + \bar{\beta}\alpha \rho_B^{lk} \big) \rho_0.
    \end{align}
    Since the choice of $\alpha,\beta$ in $\ket{\psi} = \alpha \ket{k}_V + \beta \ket{l}_V$ is arbitrary as long as $|\alpha|^2+|\beta|^2 = 1$, by $|\bar{\alpha} \beta| \le \frac{|\alpha|^2+|\beta|^2}{2}$, the range of $\bar{\alpha} \beta$ is given by \begin{equation}
        \{c \in \mb C: |c| \le \frac{1}{2}\}.
    \end{equation}
    Therefore, note that $\rho_{VB}$ is self-adjoint, we have $\rho_B^{kl} = (\rho_B^{lk})^{\dagger}$, \eqref{arbitrary} is equivalent to 
    \begin{equation}
        \forall c \in \mb C, |c|\le \frac{1}{2},\quad c \rho_B^{kl} + (c \rho_B^{kl})^{\dagger} = \Tr(c \rho_B^{kl} +  (c \rho_B^{kl})^{\dagger}) \rho_0,
    \end{equation}
    which implies \begin{align*}
        \rho_B^{kl} = \Tr(\rho_B^{kl}) \rho_0.
    \end{align*}
    In fact, denote $$\rho_B^{kl} = (x_{uv})_{0\le u,v \le \text{dim} B-1}, \rho_0 = (\rho_{uv})_{0\le u,v \le \text{dim} B-1}.$$ 
    Compare each element in the above equation, for any $0\le u,v \le \text{dim} B - 1$,
    \begin{align*}
        c(x_{uv} - \rho_{uv} \sum_r x_{rr}) + \overline{c} (\overline{x_{uv}} - \overline{\rho_{uv}\sum_r x_{rr}}) = 0.
    \end{align*}
    Since $|c| \le \frac{1}{2}$ can be any complex number, we must have 
    \begin{equation}
        x_{uv} = \rho_{uv} \sum_r x_{rr},
    \end{equation}
    which means $\rho_B^{kl} = \Tr(\rho_B^{kl}) \rho_0$. In summary, by showing that for any $0\le k,l \le n-1$, $\rho_B^{kl} = \Tr(\rho_B^{kl}) \rho_0$, we arrive at the conclusion
$ \rho_{VB}  = \sum_{i,j} \ket{i}\bra{j}_V \otimes \rho_B^{i,j} 
 = \sum_{i,j} \Tr(\rho_B^{ij})\ket{i}\bra{j}_V \otimes \rho_0 = \rho_V \otimes \rho_0.$
       
\end{proof}
The remaining task is to show the quantum channel $\mc N$ defined by \eqref{recovery channel} and \eqref{eqn:recove amplitude damping} has a unique fixed point. What we need is the following proposition, proved in \cite[Proposition 6.8]{wolf2012quantum}:
\begin{prop}\label{unique}
    Suppose $\mc N: \mb B(\mc H) \to \mb B(\mc H)$ is a (non-unital) quantum channel. If the Kraus representation of $\mc N$ given by 
    \begin{equation}
        \mc N(\rho): = \sum_{i\in I} E_i \rho E_i^{\dagger}
    \end{equation}
    satisfies: $\exists n \ge 1, \text{span}\{\prod_{k\le n}E_{i_k}:i_k \in I\} = \mb B(\mc H)$. Then $\mc N$ has a unique fixed point.
\end{prop}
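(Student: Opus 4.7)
The plan is to prove the proposition in two stages: first, show that $\mc N$ admits a fixed state of full rank, and second, deduce uniqueness from the existence of such a full-rank fixed point. Existence of at least one fixed state $\rho_*$ is automatic: since $\mc N$ is CPTP it preserves the compact convex set of density matrices on $\mc H$, so Brouwer's fixed point theorem applies.

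For the full-rank claim, I would let $P$ denote the support projection of $\rho_*$. The fixed point equation $\rho_* = \sum_i E_i \rho_* E_i^{\dagger}$ expresses $\rho_*$ as a sum of positive operators; comparing ranges forces each summand to satisfy $\operatorname{range}(E_i \rho_* E_i^{\dagger}) = E_i P\mc H \subseteq P\mc H$. Hence $P\mc H$ is invariant under each Kraus operator $E_i$, and by iteration under every product $E_{i_1} \cdots E_{i_n}$. By linearity, $P\mc H$ is invariant under the span of these products, which by hypothesis equals $\mb B(\mc H)$. Since any nonzero subspace invariant under the full matrix algebra must equal $\mc H$, one concludes $P = I$ and $\rho_* > 0$.

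For uniqueness, suppose for contradiction that another fixed state $\rho_1 \neq \rho_*$ existed. Then $H := \rho_1 - \rho_*$ would be a nonzero, traceless, Hermitian operator satisfying $\mc N(H) = H$. I would consider the line $\rho(t) := \rho_* + tH$, whose members are Hermitian of unit trace and also fixed by $\mc N$. Because $\rho_* > 0$ and $H \neq 0$, there exists a largest $t_+ > 0$ with $\rho(t_+) \geq 0$, and at this endpoint $\rho(t_+)$ acquires a nontrivial kernel. But $\rho(t_+)$ is itself a fixed state of $\mc N$ that is not full rank, contradicting the first stage. Hence the fixed point is unique.

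The main obstacle I expect is the range-containment step used in stage one: from the single equation $\rho_* = \sum_i E_i \rho_* E_i^{\dagger}$ one must rigorously extract $\operatorname{range}(E_i \rho_* E_i^{\dagger}) \subseteq \operatorname{range}(\rho_*)$ for each individual index $i$. This is a standard consequence of the fact that if positive operators $\{A_i\}$ satisfy $\sum_i A_i = A$, then $0 \leq A_i \leq A$ forces $\operatorname{range}(A_i) \subseteq \operatorname{range}(A)$, but it should be stated carefully. Everything else is routine: Brouwer's fixed point theorem, the elementary observation that a nonzero subspace preserved by all of $\mb B(\mc H)$ must be $\mc H$, and the one-parameter perturbation argument along $\rho_* + tH$.
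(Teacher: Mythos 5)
Your proof is correct. Note first that the paper does not actually prove this proposition --- it simply cites \cite[Proposition 6.8]{wolf2012quantum} --- so there is no in-paper argument to compare against; what you have written is essentially the standard quantum Perron--Frobenius argument that underlies the cited result. All the individual steps check out: the range-containment step you flag is indeed the standard fact that $0 \le A_i \le \sum_j A_j$ forces $\mathrm{range}(A_i) \subseteq \mathrm{range}(\sum_j A_j)$, and $\mathrm{range}(E\rho E^{\dagger}) = \mathrm{range}(E\rho^{1/2}) = E\,\mathrm{range}(\rho)$ for $\rho \ge 0$; invariance of $P\mc H$ under each $E_i$ then passes to products and, by linearity, to their span $\mb B(\mc H)$, forcing $P = I$. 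Two small presentational points. First, your stage one should be stated as ``\emph{every} fixed state has full rank'' rather than ``there exists a full-rank fixed state,'' since the contradiction at $\rho(t_+)$ in stage two needs the universal form; your argument already delivers this because it is generic in the fixed state, but the claim should be phrased accordingly. Second, your argument only uses the weaker consequence that the Kraus operators have no common nontrivial invariant subspace (irreducibility); the spanning hypothesis in the proposition is a stronger (primitivity-type) condition, so your proof in fact establishes a slightly more general statement.
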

Now it is straightforward to see that channel in \eqref{eqn:recove amplitude damping} has the Kraus representation 
\begin{equation}
    \mc N(\rho):= \sum_{i,j} A_{ij} \rho A_{ij}^{\dagger},
\end{equation}
where $A_{ij} = \rho_B^{1/2}E_i^{\dagger} \mc A_{\gamma'}(\rho_B)^{-1/2} E_j, i,j = 0,1$, $\gamma' = \frac{1-2\gamma}{1-\gamma}$ and $E_0 = \begin{pmatrix}
    1 & 0 \\
    0 & \sqrt{1-\gamma'}
\end{pmatrix},\ E_1 = \begin{pmatrix}
    0 & \sqrt{\gamma'} \\
    0 & 0
\end{pmatrix}$.

\noindent \textbf{Case 1:} If $\rho_B = \ketbra{0}$, we have $\mc A_{\gamma'}(\rho_B) = \ketbra{0}$. Therefore, the support of the recovery map is spanned by single vector $\ket{0}$ and it is trivial(identity). Therefore, in this case the equality condition is given by 
\begin{equation}
    \rho_{VB} = id_{\mb B(\mc H_V)}\otimes \mc A_{\gamma'}(\rho_{VB}).
\end{equation}
Note that $\mc A_{\gamma'}$ has a unique fixed point $\ketbra{0}$ thus $\rho_{VB} = \rho_V \otimes \ketbra{0}$.

\noindent \textbf{Case 2: }If $\rho_B = \begin{pmatrix}
    1-p & \delta \\
    \delta^* & p
\end{pmatrix}$ for $p\in (0,1)$, then denote 
\begin{align*}
    & \Delta_1 := p(1-p)-|\delta|^2, \\
    & \Delta_2 := (1-\gamma')\Delta_1 + \gamma'(1-\gamma')p^2, \\
    & \Delta := \frac{1}{\sqrt{\Delta_2(1+2\sqrt{\Delta_1})(1+2\sqrt{\Delta_2})}}.
\end{align*}
By direct calculation, the Kraus operators are given by 
\begin{align*}
    & \hspace{0.2cm}A_{00}/\Delta= (1-\gamma')\sqrt{\Delta_1}(\sqrt{\Delta_1} + p) + \sqrt{\Delta_2}(1-p + \sqrt{\Delta_1}) \ketbra{0}{0}  + \delta(1-\gamma')(\gamma'p + \sqrt{\Delta_2} - \sqrt{\Delta_1}) \ketbra{0}{1} \\
    & + (\sqrt{\Delta_2} - (1-\gamma')\sqrt{\Delta_1})\delta^*  \ketbra{1}{0} + (1-\gamma')(\Delta_1 + (p + \sqrt{\Delta_1})(\gamma'p + \sqrt{\Delta_2} + (1-p)\sqrt{\Delta_1})) \ketbra{1}{1}, \\
    & \hspace{0.2cm} A_{01} / \Delta =  \sqrt{\gamma'} \bigg(\delta^*(\sqrt{\Delta_2} - (1-\gamma')\sqrt{\Delta_1}) \ketbra{1}{1} + \left((1-\gamma')(\Delta_1 + p\sqrt{\Delta_1}) + \sqrt{\Delta_2}(\sqrt{\Delta_1} + 1-p)\right) \ketbra{0}{1} \bigg), \\
    & \hspace{0.2cm} A_{10} /\Delta  = \delta \sqrt{\gamma'}(\sqrt{\Delta_2} + (1-\gamma')p) \ketbra{0}{0}+ \sqrt{\gamma'}(1-\gamma')\delta^2 \ketbra{0}{1} \\
    & + \sqrt{\gamma'}(\sqrt{\Delta_1} + p)(\sqrt{\Delta_2} + (1-\gamma')p) \ketbra{1}{0}  -\delta \sqrt{\gamma'}(1-\gamma')(\sqrt{\Delta_1} + p) \ketbra{1}{1}, \\
    & \hspace{0.2cm} A_{11} /\Delta = \delta \gamma'(\sqrt{\Delta_2} + (1-\gamma')p) \ketbra{0}{1} + \gamma'(\sqrt{\Delta_2} + (1-\gamma')p)(p + \sqrt{\Delta_1})\ketbra{1}{1}.
\end{align*}
Using the fact that $\mc A_{\gamma'}(\rho_B)$ has full rank, one can directly check that 
\begin{equation}
    \text{span} \{A_{ij}\} = \mb M_2,
\end{equation}
thus using Proposition \ref{unique} we conclude the proof.

\section{Coherent information of generalized Platypus channels}\label{app:optimizer}
It is observed that the optimized state for $\mc Q^{(1)}(\mc N_{s,t})$ is diagonal with respect to the standard basis, i.e,
\begin{align*}
    & \quad \mc Q^{(1)}(\mc N_{s,t})= \max_{0\le u_0,u_1 \le u_0 + u_1 \le 1} I_c(\text{diag}(u_0,u_1,1-u_0-u_1), \mc N_{s,t}),
\end{align*}
which can be derived from the techniques in \cite{chessa2023resonant, leditzky2023platypus}. We can further improve the optimization as follows:
\begin{lemma}\label{lemma:restriction}
    $\mc Q^{(1)}(\mc N_{s,t})$ can be calculated as a single parameter optimization: 
    \begin{equation}\label{eqn:optimize coherent information}
    \begin{aligned}
       & \quad \mc Q^{(1)}(\mc N_{s,t}) = \begin{cases}
            \max_{0\le u \le 1} I_c(u \ketbra{0} + (1-u) \ketbra{1}, \mc N_{s,t}),\ s\ge \frac{1-t}{2}, \\
            \max_{0\le u \le 1} I_c(u \ketbra{0} + (1-u) \ketbra{2}, \mc N_{s,t}),\ s \le \frac{1-t}{2}.
        \end{cases}
    \end{aligned}
    \end{equation}
\end{lemma}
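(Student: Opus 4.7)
The plan is to start from the (already-noted) fact that the optimizer of $\mc Q^{(1)}(\mc N_{s,t})$ may be taken diagonal in the standard basis, so we may write $\rho = u_0\ketbra{0}+u_1\ketbra{1}+u_2\ketbra{2}$ with $u_0,u_1,u_2\ge 0$ summing to $1$. From the explicit matrix form \eqref{eqn:matrix form of platypus}, both outputs are diagonal:
\begin{align*}
\mc N_{s,t}(\rho) &= \text{diag}\bigl(su_0,\ (1-s-t)u_0,\ tu_0+u_1+u_2\bigr),\\
\mc N_{s,t}^c(\rho) &= \text{diag}\bigl(su_0+u_1,\ (1-s-t)u_0+u_2,\ tu_0\bigr).
\end{align*}
Setting $v:=u_1+u_2 = 1-u_0$, the output $\mc N_{s,t}(\rho)$ depends on the two parameters $(u_1,u_2)$ only through $v$, so $S(\mc N_{s,t}(\rho))$ is a function of $v$ alone. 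Consequently, at fixed $v$, maximizing $I_c(\rho,\mc N_{s,t}) = S(\mc N_{s,t}(\rho))-S(\mc N_{s,t}^c(\rho))$ is equivalent to minimizing $S(\mc N_{s,t}^c(\rho))$ over the segment $\{(u_1,u_2):u_1+u_2=v,\ u_1,u_2\ge 0\}$.

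The probability vector of $\mc N_{s,t}^c(\rho)$ is affine in $(u_1,u_2)$ and the Shannon entropy is strictly concave, so its minimum on this segment is attained at one of the two endpoints, i.e., either $(u_1,u_2)=(v,0)$ or $(u_1,u_2)=(0,v)$. Writing $p := su_0$, $q := (1-s-t)u_0$, $r := tu_0$, the two candidate entropies are $h(p+v,q,r)$ and $h(p,q+v,r)$, where $h$ denotes the Shannon entropy of a probability vector. The elementary identity
\begin{align*}
h(p+v,q,r)-h(p,q+v,r) \;=\; \int_0^v \bigl[\log(q+\tau)-\log(p+\tau)\bigr]\,d\tau
\end{align*}
then shows the first is smaller iff $p>q$, i.e.\ iff $s>1-s-t$, which is exactly $s>\tfrac{1-t}{2}$. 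Hence for $s\ge \tfrac{1-t}{2}$ the minimizer is $(u_1,u_2)=(v,0)$, which yields an input supported on $\text{span}\{\ket{0},\ket{1}\}$; for $s\le \tfrac{1-t}{2}$ the minimizer is $(u_1,u_2)=(0,v)$, supported on $\text{span}\{\ket{0},\ket{2}\}$; the two choices coincide at the boundary $s=\tfrac{1-t}{2}$.

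Since the optimal endpoint is the same for every value of $v$, the outer optimization over $v\in[0,1]$ (equivalently over $u := u_0 = 1-v$) reduces to the two single-parameter maximizations in \eqref{eqn:optimize coherent information}. I do not expect any serious obstacle: the diagonality of the optimizer is imported from the preceding paragraph, the decoupling of $S(\mc N_{s,t}(\rho))$ from the split of $v$ is visible from the explicit formula, and the remaining comparison is a one-variable sign check via the integral identity above.
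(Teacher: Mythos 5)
Your proposal is correct, and its skeleton coincides with the paper's: import the diagonality of the optimizer, observe from \eqref{eqn:matrix form of platypus} that $S(B)$ depends on $(u_1,u_2)$ only through $u_0$, and then show that minimizing $S(E)$ over the segment $u_1+u_2=1-u_0$ picks out the endpoint determined by the sign of $s-(1-s-t)$. Where you differ is in the tool for that last step. The paper compares an \emph{arbitrary} point of the segment directly against the claimed optimal endpoint via majorization of the diagonal spectra and Schur concavity of the von Neumann entropy. You instead split the step in two: concavity of entropy along an affine segment forces the minimum to an endpoint, and the explicit identity $h(p+v,q,r)-h(p,q+v,r)=\int_0^v\bigl[\log(q+\tau)-\log(p+\tau)\bigr]\,d\tau$ decides between the two endpoints by a one-line sign check. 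Your route is arguably more elementary (no majorization machinery, and the integral identity makes the threshold $s=\tfrac{1-t}{2}$ appear transparently as the condition $p=q$), while the paper's majorization argument is the more standard idiom in this literature and avoids the explicit endpoint comparison. Both are complete; the degenerate cases $u_0=0$ or $v=0$, where the endpoints coincide, cause no trouble in either version.
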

\begin{proof}
    The proof follows from a standard argument using majorization and Schur concavity of von Neumann entropy. For any fixed $u_0 \in [0,1]$, we claim that 
    \begin{align*}
        \max_{0\le u_1 \le 1- u_0} I_c(\text{diag}(u_0,u_1,1-u_0-u_1), \mc N_{s,t})
    \end{align*}
    is achieved either at $u_1= 0$ or $u_1 = 1-u_0$. In fact, denote $\rho_A = u_0 \ketbra{0} + u_1 \ketbra{1} + (1-u_0-u_1) \ketbra{2}$, using the formula in \eqref{eqn:matrix form of platypus}, we have 
    \begin{align*}
        & S(B) = S(\text{diag}(u_0 s, u_0(1-s-t),u_0 t + 1 - u_0)), \\
        & S(E) = S(\text{diag}(u_0 s + u_1, u_0(1-s-t)+ (1-u_0-u_1), u_0 t). 
    \end{align*}
    Note that $S(B)$ does not depend on $u_1$, thus we have 
    \begin{align*}
        & \max_{0\le u_1 \le 1- u_0} I_c(\text{diag}(u_0,u_1,1-u_0-u_1), \mc N_{s,t})\\
        & = S(B) - \min_{0\le u_1 \le 1- u_0} S(E).
    \end{align*}
    We claim that $\min_{0\le u_1 \le 1- u_0} S(E)$ is achieved at either $u_1 = 0$ or $u_1 = 1 - u_0$. Recall that for two Hermitian operators $H_1, H_2$ of the same size $d$, $H_1$ is majorized by $H_2$, denoted as $H_1 \prec H_2$, if 
    \begin{align*}
        & v^{\downarrow}(H_1) \prec v^{\downarrow}(H_2) \iff \sum_{j=1}^k v_j^1 \le \sum_{j=1}^k v_j^2, \quad \forall 1\le k \le d;\quad  \sum_{j=1}^d v_j^1 = \sum_{j=1}^d v_j^2,
    \end{align*}
    where $v^{\downarrow}(H_i)=  (v_1^i, v_2^i,\cdots, v_d^i)$ is the vector of singular values of $H_i$ with decreasing order: $v_1^i \ge v_2^i \ge\cdots \ge v_d^i$. By Schur concavity of von Neumann entropy, for any two density operators $\rho, \sigma$, $\rho \prec \sigma$ implies $S(\rho)\ge S(\sigma)$. Back to our claim, when $s \ge 1-s-t$, we can check that for any $0\le u_1 \le 1-u_0$, 
    \begin{equation}
    \begin{aligned}
    & \begin{pmatrix}
        u_0 s + u_1 & 0& 0 \\
        0 & u_0(1-s-t) + 1-u_0-u_1 & 0\\
        0 & 0& u_0 t
    \end{pmatrix}\prec \begin{pmatrix}
        u_0 s + 1-u_0 & 0& 0 \\
        0 & u_0(1-s-t) & 0\\
        0 & 0& u_0 t
    \end{pmatrix},
    \end{aligned}
    \end{equation}
    therefore by Schur concavity, we can show that $\min_{0\le u_1 \le 1- u_0} S(E)$ is achieved at $u_1 = 1- u_0$ in this case. Similarly, when $s \le 1-s-t$, we can show that $\min_{0\le u_1 \le 1- u_0} S(E)$ is achieved at $u_1 = 0$, which concludes the proof.
\end{proof}
\section*{Acknowledgment}
The authors thank the anonymous referees for their insightful comments and valuable suggestions, which helped improve the clarity and quality of this paper.

\bibliographystyle{marcotomPB}
\bibliography{id}

\end{document}